%% Modified from bare_conf.tex on 06-12-04
%% V1.2
%% 2002/11/18
%% by Michael Shell
%% mshell@ece.gatech.edu
%%
%\mathbb{}
%\documentclass[10pt,conference]{IEEEtran}
%\documentclass[journal]{IEEEtran}
\documentclass[12pt,journal,draftcls,letterpaper,onecolumn]{IEEEtran}

\def\real{{\mathchoice%
{\hbox{\rm\setbox1=\hbox{I}\copy1\kern-.45\wd1 R}}
{\hbox{\rm\setbox1=\hbox{I}\copy1\kern-.45\wd1 R}}
{\hbox{\scriptsize\rm\setbox1=\hbox{I}\copy1\kern-.45\wd1 R}}
{\hbox{\scriptsize\rm\setbox1=\hbox{I}\copy1\kern-.45\wd1 R}}}}
%\load{\scriptsize}{\sf}
\def\Zint{{\mathchoice{\setbox1=\hbox{\sf Z}\copy1\kern-.75\wd1\box1}
{\setbox1=\hbox{\sf Z}\copy1\kern-.75\wd1\box1}
{\setbox1=\hbox{\scriptsize\sf Z}\copy1\kern-.75\wd1\box1}
{\setbox1=\hbox{\scriptsize\sf Z}\copy1\kern-.75\wd1\box1}}}
\newcommand{\complex}{ \hbox{\rm C\kern-0.45em\rule[.07em]{.02em}{.58em}%
\kern 0.43em}}

\newcommand{\be}{\begin{equation}}
\newcommand{\ee}{\end{equation}}
\newcommand{\beqr}{\begin{eqnarray}}
\newcommand{\eeqr}{\end{eqnarray}}
\newcommand{\beqrx}{\begin{eqnarray*}}
\newcommand{\eeqrx}{\end{eqnarray*}}
\newcommand{\ba}{\left[ \begin{array}}
\newcommand{\ea}{\\ \end{array} \right]}
\newcommand{\bi}{\begin{itemize}}
\newcommand{\ei}{\end{itemize}}

\newtheorem{lemma}{Lemma}
\newtheorem{theorem}{Theorem}

\newtheorem{definition}{Definition}

\usepackage{url}
\usepackage{graphicx,subfig}
\usepackage{epsfig}
\usepackage{amsmath}
\usepackage{amssymb}
\usepackage{psfrag}

\def\w{{\bf w}}

\def\v{{\bf v}}
\def\x{{\bf x}}
\def\y{{\bf y}}

\def\b{{\bf b}}

\def\x{{\mathbf x}}

\def\w{{\bf w}}

\def\v{{\bf v}}
\def\x{{\bf x}}
\def\y{{\bf y}}

\def\b{{\bf b}}

\def\real{{\rm Re}\,}

\def\be{\begin{equation}}
\def\ee{\end{equation}}
\def\ba{\left[\begin{array}}
\def\ea{\end{array}\right]}

\def\w{{\bf w}}

\def\v{{\bf v}}
\def\x{{\bf x}}
\def\y{{\bf y}}

\def\b{{\bf b}}

% Theorems:

%\newtheorem{theorem}{Theorem}
%\newtheorem{lemma}{Lemma}

%\newtheorem{corollary}[theorem]{Corollary}
%\newtheorem{proposition}[theorem]{Proposition}
%\singlecolumn
%\doublespacing
\begin{document}

% paper title
\title{Compressive Sensing over the Grassmann Manifold: a Unified Geometric Framework}

%\author{
%\authorblockN{Weiyu Xu}
%\authorblockA{Department of Electrical Engineering \\
%California Institute of Technology\\
%Pasadena CA 91125, USA \\
%Email: weiyu@systems.caltech.edu} \and
%\authorblockN{Babak Hassibi}
%\authorblockA{Department of Electrical Engineering \\
%California Institute of Technology\\
%Pasadena CA 91125, USA \\
%Email: hassibi@systems.caltech.edu}}

\author{Weiyu~Xu
        and Babak~Hassibi% <-this % stops a space
\thanks{Weiyu Xu was with the Department
of Electrical Engineering, California Institute of Technology, Pasadena,
CA, 91125 USA. e-mail: weiyu@systems.caltech.edu. He is now with School of Electrical and Computer Engineering, Cornell University.}% <-this % stops a space
\thanks{Babak Hassibi is with the Department
of Electrical Engineering, California Institute of Technology, Pasadena,
CA, 91125 USA. e-mail: hassibi@systems.caltech.edu.}% <-this % stops a space
%\thanks{Manuscript received April 19, 2005; revised January 11, 2007.}
}

\maketitle

\begin{abstract}

$\ell_1$ minimization is often used for finding the sparse solutions of an under-determined linear system. In this paper we focus on finding sharp performance bounds on recovering approximately sparse signals using $\ell_1$ minimization, possibly under noisy measurements. While the restricted isometry property is powerful for the analysis of recovering approximately sparse signals with noisy measurements, the known bounds on the achievable sparsity\footnote{The ``sparsity" in this paper means the size of the set of nonzero or significant elements in a signal vector.} level can be quite loose. The neighborly polytope analysis which yields sharp bounds for ideally sparse signals cannot be readily generalized to approximately sparse signals. Starting from a necessary and sufficient condition, the ``balancedness" property of linear subspaces, for achieving a certain signal recovery accuracy, we give a unified \emph{null space Grassmann angle}-based geometric framework for analyzing the performance of $\ell_1$ minimization. By investigating the ``balancedness" property, this unified framework characterizes sharp quantitative tradeoffs between the considered sparsity and the recovery accuracy of the $\ell_{1}$ optimization. As a consequence, this generalizes the neighborly polytope result for ideally sparse signals. Besides the robustness in the ``strong" sense for \emph{all} sparse signals, we also discuss the notions of ``weak" and ``sectional'' robustness. Our results concern fundamental properties of linear subspaces and so may be of independent mathematical interest.

\end{abstract}
%\vspace{-0.15in}

\section{Introduction}
Compressive sensing is an area in signal processing which has attracted a lot of attention recently
\cite{CandesCS} \cite{DonohoCS}. The motivation behind compressive
sensing is to do ``sampling" and ``compression" at the same time. In
conventional wisdom, in order to fully recover a signal, one has to
sample the signal at a sampling rate equal or greater to the Nyquist
sampling rate. The process of ``sampling at full rate" and then ``throwing away in
compression" can prove to be wasteful of sensing and sampling
resources, especially in application scenarios where resources like
sensors, energy, observation time, etc. are limited. However, in many applications such as imaging, sensor
networks, astronomy, biological systems \cite{RICE}, the signals of interest are often ``sparse" over a certain basis. In these
cases, compressive sensing promises to use a much smaller number of samples or measurements while still being able to recover the
original sparse signal exactly or approximately. What enables practical compressive sensing is the existence of efficient
decoding algorithms to recover the sparse signals from the
``compressed" measurements. One of the most import and
powerful decoding algorithms is the Basis Pursuit method, namely
the $\ell_{1}$ minimization method \cite{Claerbout73,Donoho06}.

In this paper, we are interested in analyzing the decoding performance of the
$\ell_{1}$ minimization algorithm for approximately sparse signals under possibly noisy measurements.
Mathematically, in compressive sensing problems, we would like to
find an $n \times 1$ vector $\x$ such that
\begin{equation}
\y=A\x, \label{eq:Grasssystem}
%\vspace{-0.1in}
\end{equation}
where $A$ is an $m\times n$ measurement matrix, $\y$ is an $m\times
1$ measurement vector and $m<n$ in general. In the usual compressive sensing context $\x$
is an $n\times 1$ unknown $k$-sparse vector, which has
only $k$ nonzero components. In this paper we will consider a more
general version of the $k$-sparse vector $\x$. Namely, we will
assume that $k$ components of the vector $\x$ have large magnitudes
and that the vector comprised of the remaining $(n-k)$ components has
an $\ell_{1}$-norm less than some value, say, $\Delta$. We will refer to this type of
signal as an approximately $k$-sparse signal, or for brevity only
an approximately sparse signal. It is also possible that the $\y$ can be further
corrupted with measurement noise. This problem setup is more
realistic of practical applications than the standard compressive
sensing of ideally $k$-sparse signals (see, e.g., \cite{Tropp,CandesCS,CRT06} and the references therein). The interested readers can find more on similar type of problems in \cite{devore2} and other references.

In the rest of the paper we will further assume that the number of
the measurements is $m=\delta n$ and the number of the ``large''
components of $\x$ is $k=\rho\delta n=\zeta n$, where $0<\rho<1$ and
$0<\delta<1$ are constants independent of $n$ (clearly, $\delta
>\zeta$).

\subsection{$\ell_1$ Minimization for Ideally Sparse Signal}

$\ell_{1}$ minimization optimization (Basis Pursuit) proposes
solving the following problem
\begin{eqnarray}
\mbox{min} & & \|\x\|_{1}\nonumber \\
\mbox{subject to} & & \y=A\x, \label{eq:Grassl1}
\end{eqnarray}
where $\|\x\|_1$ denotes the $\ell_1$ norm of $\x$, namely the sum of the amplitudes of all the elements in $\x$.

In \cite{Candes05} the authors were able to show
that if the number of the measurements is $m=\delta n$ and if the
matrix $A$ satisfies a special property called the restricted
isometry property (RIP), then any unknown vector $\x$ with no more
than $k=\zeta n$ (where $\zeta$ is an absolute constant as a
function of $\delta$, but independent of $n$, and explicitly bounded
in \cite{Candes05}) nonzero elements can be recovered by solving
(\ref{eq:Grassl1}). As expected, this assumes that $\y$ was in fact
generated by such an $\x$ and given to us (more on the case when the
available measurements are noisy versions of $\y$ can be found in
e.g. \cite{hano06,wainwright}).

As can be immediately seen, the previous results heavily rely on the
assumption that the measurement matrix $A$ satisfies the RIP
condition. It turns out that for several specific classes of
matrices, such as matrices with independent zero-mean Gaussian
entries or independent Bernoulli entries, the RIP holds with
overwhelming probability \cite{Candes05,Wakin07,Ver}. However, it
should be noted that the RIP condition is only a sufficient condition for
$\ell_{1}$-optimization to produce a solution of
(\ref{eq:Grasssystem}).

Instead of characterizing the $m\times n$ matrix $A$ through the RIP
condition, in \cite{Donoho06,DT1}, the authors proposed to study $A$ through a $k$-neighborly polytope condition. As shown in \cite{Donoho06}, this characterization of the matrix $A$ is in fact a necessary and
sufficient condition for (\ref{eq:Grassl1}) to produce the sparse solution $\x$ satisfying
(\ref{eq:Grasssystem}). Furthermore, developing the results of \cite{Vershik92}, it can be shown that if the matrix $A$ has i.i.d.
zero-mean Gaussian entries, then the $k$-neighborly polytope condition holds with overwhelming probability. The precise relation between $m$, $n$ and $k$ in order for this to happen is characterized in
\cite{Donoho06}. It should also be noted that for a given
value $m$,  i.e. for a given value of the constant $\delta$, the
value of the constant $\zeta$ given by the neighborly polytope condition is significantly better in
\cite{Donoho06,DT1} than in \cite{Candes05}. In fact, the values
of $\zeta$ for the so-called ``weak" threshold, obtained for different values of $\delta$ in
\cite{Donoho06}, approach the ones obtained by simulation as
$n\rightarrow\infty$.

\subsection{$\ell_1$ Minimization for Approximately Sparse Signal}
As mentioned earlier, in this paper we will be interested in
recovering not perfectly $k$-sparse signals from compressed
observations $\y$. In this case an exact recovery of the unknown vector
$\x$ from a reduced number of measurements is not possible in
general. Instead, we will prove that, if we denote the unknown
signal as $\x$, denote $\hat{\x}$ as one
solution to (\ref{eq:Grassl1}), then for any given constant
$0<\delta< 1$ and any given constant $C>1$ (representing how close in $\ell_{1}$ norm the recovered vector
$\hat{\x}$ should be to $\x$), there exists a constant $\zeta>0$ and a sequence
of measurement matrices $A \in \mathbb{R}^{m \times n}$ as $n \rightarrow
\infty$ such that
\begin{equation}
||\hat{\x}-\x||_1\leq \frac{2(C+1)\Delta}{C-1}, \label{eq:noiseadj}
\end{equation}
holds for \emph{all }$\x \in \mathbb{R}^{n}$, where $\Delta$ is the $\ell_1$ norm of any $(n-k)$ elements of the vector $\x$ (recall $k=\zeta n$). Here $\zeta$ will be a function of
$C$ and $\delta$, but independent of the problem dimension $n$. In
particular, we have the following theorem.

\begin{theorem}
Let $n$, $m$, $k$, $\x$, $\hat{\x}$ and $\Delta$ be defined as
above. Let $K$ denote a subset of $\{1,2,\dots,n\}$ such that
$|K|=k$, where $|K|$ is the cardinality of $K$, and let $K_i$ denote
the $i$-th element of $K$ and $\overline{K}=\{1,2,\dots,n\} \setminus K$.

Then for any constant $C>1$ and any $\delta=\frac{m}{n}>0$, there
exists a $\zeta(\delta, C)>0$ such that if the measurement matrix
$A$ is the basis for a uniformly-distributed subspace, then with
overwhelming probability as $n \rightarrow \infty$, for all vectors
$\w \in \mathbb{R}^{n}$ in the null space of $A$, and for all $K$ such that
$|K|=k \leq \zeta(\delta, C)n$, we have
%\vspace{-0.1in}
\begin{equation}
 C \sum_{i=1}^{k}|\w_{K_i}|\leq\sum_{i=1}^{n-k}|\w_{\overline{K}_i}|,
\label{eq:startthmeq}
\end{equation} where $\x_{K}$ denotes the part of $\x$ over the subset
$K$; and at the same time the solution $\hat{\x}$ produced by
(\ref{eq:Grassl1}) will satisfy

\begin{equation}
||\hat{\x}-\x||_1\leq \frac{2(C+1)\Delta}{C-1}.
\label{eq:startnoiseadjthm}
\end{equation}
for \emph{all} $\x \in \mathbb{R}^{n}$.
\end{theorem}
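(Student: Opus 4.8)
The plan is to split the statement into a \emph{deterministic} implication---that the balancedness inequality (\ref{eq:startthmeq}) forces the error bound (\ref{eq:startnoiseadjthm})---and a \emph{probabilistic} claim---that the null space of a uniformly distributed subspace satisfies balancedness with overwhelming probability. The two halves are essentially independent, and I would dispose of the deterministic one first.

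For the deterministic half, take $K$ to be the support of the $k$ largest-magnitude entries of $\x$, so that $\Delta=\|\x_{\overline{K}}\|_1$, and set $\w=\hat{\x}-\x$. Since $A\hat{\x}=A\x=\y$, the error $\w$ lies in the null space of $A$, and optimality of $\hat{\x}$ in (\ref{eq:Grassl1}) gives $\|\hat{\x}\|_1\le\|\x\|_1$. Writing $\hat{\x}=\x+\w$, splitting every norm over $K$ and $\overline{K}$, and applying the triangle inequality on each block yields $\|\w_{\overline{K}}\|_1-\|\w_K\|_1\le 2\Delta$. Feeding in the balancedness bound $\|\w_K\|_1\le\frac1C\|\w_{\overline{K}}\|_1$ gives $(1-\tfrac1C)\|\w_{\overline{K}}\|_1\le 2\Delta$, whence $\|\w_{\overline{K}}\|_1\le\frac{2C\Delta}{C-1}$ and $\|\w_K\|_1\le\frac{2\Delta}{C-1}$, so that $\|\hat{\x}-\x\|_1=\|\w_K\|_1+\|\w_{\overline{K}}\|_1\le\frac{2(C+1)\Delta}{C-1}$, which is exactly (\ref{eq:startnoiseadjthm}).

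The substantive work is the probabilistic half. For a fixed $K$, balancedness fails precisely when the null space contains a nonzero $\w$ with $C\|\w_K\|_1>\|\w_{\overline{K}}\|_1$. I would recast this as a piercing event for a \emph{skewed cross-polytope}: the cone of such ``bad'' directions is the neighborhood through which the random $(n-m)$-dimensional subspace would have to cross a distinguished face of a weighted $\ell_1$-ball, with the $K$-coordinates scaled by $C$ and the $\overline{K}$-coordinates unscaled. The probability that a uniformly distributed subspace meets this cone is then, up to a factor of $2$, the \emph{complementary Grassmann angle} of that face with respect to the skewed polytope. Invoking the Affentranger--Schneider / Gr\"unbaum expansion, this angle is a sum over faces $G$ of products $\beta(F,G)\,\gamma(G,P)$ of internal angles $\beta$ and external angles $\gamma$ of the polytope, with the weight $C$ entering both factors and to be tracked explicitly.

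The remaining, and hardest, step is to evaluate these angles and push through the asymptotics. The external angles reduce to Gaussian integrals over shifted orthants and the internal angles to cone-volume integrals; I would extract their exponential growth rates in $n$ by the Laplace method and then combine them, via a union bound over the $\binom{n}{k}\approx e^{nH(\zeta)}$ choices of $K$, against the combinatorial entropy. The main obstacle is balancing these competing exponents: one must show that for $k=\zeta n$ with $\zeta$ below a threshold $\zeta(\delta,C)$, the exponential decay of the Grassmann angle strictly dominates the union-bound entropy, so the total failure probability tends to zero and the threshold $\zeta(\delta,C)$ is \emph{strictly positive}. Carrying out this large-deviation computation is the crux. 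As a consistency check, setting $C=1$ should collapse the skewed polytope to the ordinary cross-polytope and recover the neighborly-polytope (weak/strong) thresholds of Donoho, confirming that the framework genuinely generalizes the ideally sparse case.
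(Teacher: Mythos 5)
Your proposal is correct and follows essentially the same route as the paper: the deterministic half reproduces the triangle-inequality argument of Theorem \ref{thm:th1}, and the probabilistic half is exactly the paper's reduction to a complementary Grassmann angle of a face of the skewed cross-polytope, expanded via the Gr\"unbaum/McMullen sum of internal and external angles and controlled by Laplace-method exponents against the union-bound entropy. The large-deviation balancing you flag as the crux is precisely what the paper carries out in Sections \ref{sec:evathebound}--\ref{sec:bndinternal} and Lemma \ref{lemma:existence} to conclude $\rho_N(\delta,C)>0$.
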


The main focus and contribution of this paper is to establish a sharp relationship between $\delta$, $\zeta$ and $C$. For example, when $\delta=\frac{m}{n}$ varies, we have Figure \ref{fig:fullstability} showing the tradeoff between the signal sparsity $\zeta$ and the parameter $C$, which determines the robustness \footnote{The ``robustness" concept in this sense is often called the ``stability" in other papers, for example, \cite{CandesCS}.} of the $\ell_1$ minimization. The curve for $C=1$ matches the ``strong" threshold curve from \cite{Donoho06} for ideally sparse signal vectors .

%\begin{figure}[htb]
%%%%%%\begin{minipage}[b]{1.0\linewidth}
%\centering \centerline{\epsfig{figure=,
%width=7cm,height=5cm}}
%%%%%%%\end{minipage}
%\caption{Allowable sparsity as a function of $C$ (allowable
%imperfection of the recovered signal is $\frac{2(C+1)\Delta}{C-1}$)}
%\label{fig:sl3}
%\end{figure}

\begin{figure*}[htb]
%%%%%\begin{minipage}[b]{1.0\linewidth}
%\psfrag{n}{$\nu$} \centering
\centerline{\epsfig{figure=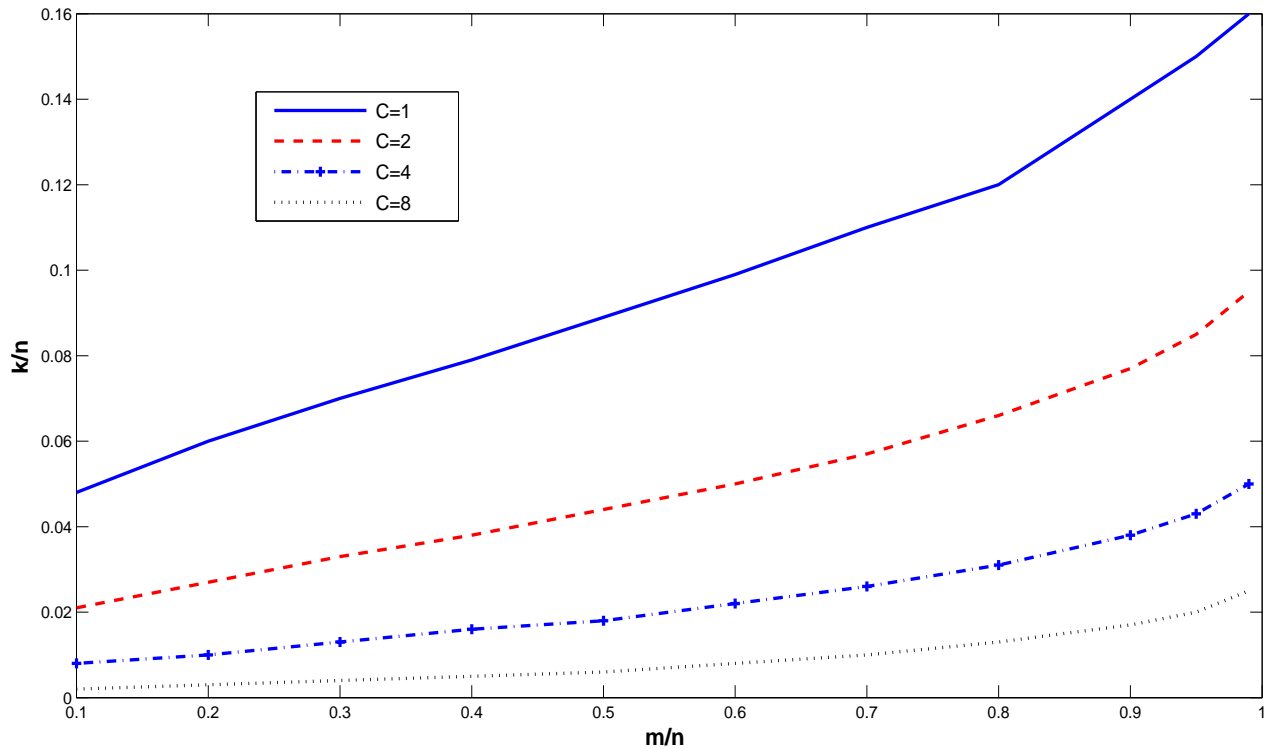,width=8 in,height=4.4 in}}
%%%%%%\end{minipage}
\caption{Tradeoff between signal sparsity and $\ell_1$ recovery robustness as a function of $C$ (allowable
imperfection of the recovered signal is $\frac{2(C+1)\Delta}{C-1}$)}
\label{fig:fullstability}
\end{figure*}

To obtain the stated results we will make use of a characterization
that constitutes both necessary and sufficient conditions on the
matrix $A$ such that the solution of (\ref{eq:Grassl1}) approximates
the original signal accurately enough such that (\ref{eq:noiseadj})
holds. This characterization will be equivalent to the neighborly
polytope characterization from \cite{Donoho06} in the ``ideally
sparse" case when $C=1$. Furthermore, as we will see later in the paper, in the
perfectly sparse signal case (which allows $C\rightarrow 1$),
our result for allowable $\zeta$ matches the result of
\cite{Donoho06}. Our analysis will be directly based on the
null space Grassmann angle approach in high dimensional integral
geometry, which gives a unified analytical framework for $\ell_{1}$
minimization.

A similar problem was considered in \cite{devore2}, where the null space characterization for recovering approximately sparse signal was analyzed using the RIP in \cite{Candes05}; however, no explicit values of $\zeta$ were given. Since the RIP condition is a sufficient condition for good sparse signal recoveries using $\ell_1$ minimization, it generally gives rather loose
bounds on the explicit values of $\zeta$ even in the ideally sparse signal case \cite{Candes05}\cite{BCT}. There have also been some recent works trying to analyze the performance of $\ell_1$ minimization through non-RIP techniques \cite{Y08,SV,Stojnic09}. Compared with previous results, in this paper we will provide sharp bounds on the explicit values of the allowable constants $\zeta$ for satisfying the subspace ``balancedness" condition, as a function of $C\geq 1$. In the literature, there are also discussions of compressive sensing under different definitions of non-ideally sparse signals, for example, \cite{DonohoCS} discusses compressive sensing for signals from a
$\ell_p$ ball with $0<p\leq 1$ using sufficient conditions based on
results of the Gelfand $n$-widths. However, the results of this
paper are dealing directly with approximately sparse signals defined
in terms of the concentration of $\ell_{1}$ norm, and furthermore,
we give a neat \emph{necessary and sufficient} condition for
$\ell_{1}$ optimization to be robust and we are also able to explicitly
give much sharper bounds on the sparsity parameter $\zeta$. When we finalize this draft from our earlier conference publication \cite{XuHassibiAllerton08}, we are informed of the very recent work \cite{do10} which deals with a related but different problem formulation of characterizing the tradeoff between signal sparsity and noise sensitivity of LASSO recovery method. Compared with \cite{do10}, we are dealing with the plain $\ell_1$ minimization method for recovering approximately sparse signals, and the performance bounds in this paper apply to general type of signals and noises. The analysis from \cite{do10} is an average-case analysis for compressed measurements corrupted with Gaussian noises, while the analysis in this paper provides both average-case and worst-case performance bounds under general types of signals and noises. It is also noteworthy pointing out that this work considers the plain $\ell_1$ minimization, which does not require the decoder to know of the statistical variance of the measurement noises. The analysis methodologies between this work and \cite{do10} are also different: this work relies on the analytical tools from the high dimensional polytope geometry, while \cite{do10} builds on the innovations of analyzing message passing algorithms.

The rest of the paper is organized as follows. In Section
\ref{sec:nullspace}, we introduce a null space characterization
of linear subspaces for guaranteeing robust signal recovery using the $\ell_{1}$ minimization. Section \ref{sec:Grassprobnull}
presents a Grassmann angle-based high dimensional geometrical
framework for analyzing the null space characterization. In Sections
\ref{sec:evathebound}, \ref{sec:bndexnangle}, and
\ref{sec:bndinternal}, analytical performance bounds are given for
the null space characterization. Section \ref{sec:wss} shows how the
Grassmann angle analytical framework can be extended to analyzing
the ``weak", ``sectional" and ``strong" notations of robust signal recovery.  In Section \ref{sec:noisymeas}, we present the
robustness analysis of the $\ell_{1}$ minimization under noisy
measurements. In Section
\ref{sec:numericalold}, the numerical evaluations of the performance
bounds for robust signal recovery are given. Section
\ref{sec:con} concludes the paper. In the appendix, we provide a quick summary of relevant geometric concepts in the high dimensional geometry and the proofs of related lemmas and theorems.

%%%%%%%%%%%%%%%%%%%%%%%%%%%%%%%%%%%%%%%%%%%%%%%%%%%%%%%%%%%%%%%%%
\section{The Null Space characterization}
\label{sec:nullspace}
%%%%%%%%%%%%%%%%%%%%%%%%%%%%%%%%%%%%%%%%%%%%%%%%%%%%%%%%%%%%%%%%%

In this section we introduce a useful characterization of the matrix
$A$. The characterization will establish a necessary and sufficient
condition on the matrix $A$ so that the solution of (\ref{eq:Grassl1})
approximates the solution of (\ref{eq:Grasssystem}) such that
(\ref{eq:noiseadj}) holds. (See \cite{FN,LN,Y,devore2,StXuHassibi08,StXuHaisit,Remark} etc. for
variations of this result).
\begin{theorem}
Assume that  $A$ is a general $m\times n$ measurement matrix. Let $C>1$ be a positive number. Further,
assume that $\y=A\x$  and that $\w$ is an $n\times 1$ vector. Let
$K$ be a subset of $\{1,2,\dots,n\}$ such that $|K|=k$, where
$|K|$ is the cardinality of $K$ and let $K_i$ denote the $i$-th
element of $K$. Further, let $\overline{K}=\{1,2,\dots,n\} \setminus K$.
Then for \emph{any} $\x \in \mathbb{R}^n$, for any $K$ such that $|K|=k$, \emph{any} solution $\hat{\x}$ produced by (\ref{eq:Grassl1}) will
satisfy
\begin{equation}
\|\x-\hat{\x}\|_{1} \leq \frac{2(C+1)}{C-1}  \| \x_{\overline{K}} \|_{1},
\label{eq:inthmbound}
\end{equation}
if $~\forall \w \in \mathbb{R}^n$ such that
\begin{equation*}
~A\w=0~
\end{equation*}
and $~\forall K$ such that $|K|=k$, we have
%\vspace{-0.1in}
\begin{equation}
 C \sum_{i=1}^{k}|\w_{K_i}|\leq\sum_{i=1}^{n-k}|\w_{\overline{K}_i}|.
\label{eq:thmeq}
\end{equation}

Conversely, there exists some measurement matrix $A$, a set $K$ with cardinality $k$, an $\x$, and corresponding $\hat{\x}$ ($\hat{\x}$ is a minimizer to the programming (\ref{eq:Grassl1})),  such that (\ref{eq:thmeq}) is satisfied with equality for some vector $\w$ in the null space of $A$ with a constant $C'>1$; moreover
\begin{equation*}
\|\x-\hat{\x}\|_{1} =2\frac{(C'+1)}{C'-1}  \| \x_{\overline{K}} \|_{1}.
\end{equation*}

and

\begin{equation*}
\|\x-\hat{\x}\|_{1} >2\frac{(C+1)}{C-1}  \| \x_{\overline{K}} \|_{1}.
\end{equation*}

for any $C$ bigger than the constant $C'$.

\label{thm:th1}
\end{theorem}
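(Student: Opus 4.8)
The plan is to treat the two directions separately. For the forward (sufficient) direction I would set $\w := \hat{\x}-\x$, which lies in the null space of $A$ since $A\hat{\x}=\y=A\x$. Optimality of $\hat{\x}$ together with feasibility of $\x$ gives $\|\hat{\x}\|_1\le\|\x\|_1$. Writing $\hat{\x}=\x+\w$ and splitting the $\ell_1$ norm over $K$ and $\overline{K}$, the reverse triangle inequality yields $\sum_{i=1}^{k}|\x_{K_i}+\w_{K_i}|\ge\|\x_K\|_1-\|\w_K\|_1$ and $\sum_{i=1}^{n-k}|\x_{\overline{K}_i}+\w_{\overline{K}_i}|\ge\|\w_{\overline{K}}\|_1-\|\x_{\overline{K}}\|_1$. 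Substituting these into $\|\x+\w\|_1\le\|\x\|_1$ cancels the $\|\x_K\|_1$ terms and leaves $\|\w_{\overline{K}}\|_1-\|\w_K\|_1\le 2\|\x_{\overline{K}}\|_1$. The null space hypothesis (\ref{eq:thmeq}) gives $\|\w_K\|_1\le\frac{1}{C}\|\w_{\overline{K}}\|_1$, hence $(1-\frac{1}{C})\|\w_{\overline{K}}\|_1\le 2\|\x_{\overline{K}}\|_1$, i.e.\ $\|\w_{\overline{K}}\|_1\le\frac{2C}{C-1}\|\x_{\overline{K}}\|_1$, and finally $\|\w\|_1=\|\w_K\|_1+\|\w_{\overline{K}}\|_1\le\frac{C+1}{C}\|\w_{\overline{K}}\|_1\le\frac{2(C+1)}{C-1}\|\x_{\overline{K}}\|_1$, which is (\ref{eq:inthmbound}).

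For the converse I would produce an explicit tight instance. Take the null space of $A$ to be the one-dimensional line $\mathrm{span}\{\w\}$ for a chosen $\w$ with $C'\|\w_K\|_1=\|\w_{\overline{K}}\|_1$; any $A$ whose rows span $\w^{\perp}$ has this null space, so (\ref{eq:thmeq}) holds with equality for $\w$ (and, being the only null space direction up to scaling, it is never violated, so the balancedness constant is exactly $C'$). I would then design $\x$ to make every inequality in the forward argument tight: on $K$ choose $\x$ opposite in sign to $\w$ with $|\x_{K_i}|\ge|\w_{K_i}|$; on $\overline{K}$ split the indices into a set $P$ on which $\x$ is opposite in sign to $\w$ with $|\x_i|=|\w_i|$ and a set $Q$ on which $\x_i=0$, with the masses balanced so that $\|\x_{\overline{K}}\|_1=\frac{C'-1}{2}\|\w_K\|_1$, equivalently $\sum_{i\in P}|\w_i|=\frac{C'-1}{2C'}\|\w_{\overline{K}}\|_1$ and $\sum_{i\in Q}|\w_i|=\frac{C'+1}{2C'}\|\w_{\overline{K}}\|_1$. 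Both masses are positive for $C'>1$, so this requires only $n-k\ge 2$.

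The crux is to verify that $\hat{\x}:=\x+\w$ is genuinely a minimizer. Since the null space is one-dimensional, every feasible point is $\x+t\w$, and I would show that $g(t):=\|\x+t\w\|_1$ is constant on $[0,1]$ and strictly larger outside; by convexity this makes both $\x$ and $\hat{\x}$ minimizers. The sign pattern above makes the one-sided derivative of $g$ on $(0,1)$ equal to $-\|\w_K\|_1-\sum_{i\in P}|\w_i|+\sum_{i\in Q}|\w_i|$, and the mass balance, together with $\|\w_{\overline{K}}\|_1=C'\|\w_K\|_1$, forces this to vanish. This is exactly the delicate point: if any tail coordinate had $0<|\x_i|<|\w_i|$ it would introduce an interior kink that drives the optimum strictly inside $(0,1)$ and leaves the bound unattained; placing the tail mass either at the boundary $|\x_i|=|\w_i|$ (set $P$) or at $\x_i=0$ (set $Q$) is precisely what keeps the slope zero on $(0,1)$. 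With $\hat{\x}$ a minimizer one gets $\|\x-\hat{\x}\|_1=\|\w\|_1=(1+C')\|\w_K\|_1=\frac{2(C'+1)}{C'-1}\|\x_{\overline{K}}\|_1$, the claimed equality. Finally, because $h(C)=\frac{2(C+1)}{C-1}$ satisfies $h'(C)=\frac{-4}{(C-1)^2}<0$, any $C>C'$ gives $\|\x-\hat{\x}\|_1=h(C')\|\x_{\overline{K}}\|_1>h(C)\|\x_{\overline{K}}\|_1$, which is the last assertion. I expect the main obstacle to be precisely this simultaneous tightness-versus-optimality balance in the converse, resolved by the $P/Q$ split; the forward direction is a routine triangle-inequality chain.
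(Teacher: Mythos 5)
Your forward direction is the same triangle-inequality chain as the paper's: set $\w=\hat{\x}-\x$, use $\|\hat{\x}\|_1\le\|\x\|_1$ together with the reverse triangle inequality on $K$ and $\overline{K}$, then invoke the null space condition; the paper merely folds your last two steps into the single inequality $\|\w_{\overline{K}}\|_1-\|\w_K\|_1\ge\frac{C-1}{C+1}\|\w\|_1$, so the algebra is identical. For the converse the paper exhibits one concrete instance: the null space is spanned by the all-ones vector, $n$ is even, $K=\{1,\dots,k\}$ with $k<n/2$, $\x$ has $-1$ in its first $n/2$ coordinates and $0$ elsewhere, giving $C'=(n-k)/k$, and $\hat{\x}=\x+\w$ is asserted to be a minimizer "by inspection." Your construction is a strict generalization of that example: the paper's instance is exactly your $P/Q$ split with $P=\{k+1,\dots,n/2\}$ and $Q=\{n/2+1,\dots,n\}$, where your mass-balance condition $\sum_{i\in P}|\w_i|=\frac{C'-1}{2}\|\w_K\|_1$ reduces to $n/2-k=(n-2k)/2$. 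What your version buys is, first, an explicit verification via the one-sided derivatives of $g(t)=\|\x+t\w\|_1$ that the segment $[0,1]$ consists entirely of minimizers (the genuinely delicate point, which the paper does not spell out), and second, the freedom to realize any target $C'>1$ and any $k$ with $n-k\ge 2$ rather than only $C'=(n-k)/k$ for even $n$; the paper instead supplements its example with a separate, less explicit discussion after the theorem showing the phenomenon is not isolated. Both arguments are correct; I see no gaps in yours.
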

\begin{proof} First, suppose the matrix $A$ has the claimed null space
property as in (\ref{eq:thmeq}) and we want to prove that any solution $\hat{\x}$ satisfies (\ref{eq:inthmbound}). Note that the solution $\hat{\x}$ of (\ref{eq:Grassl1})
satisfies
\begin{equation*}
\|\hat{\x}\|_{1} \leq \|\x\|_{1},
\end{equation*}
where $\x$ is the original signal. Since $A \hat{\x}=\y$, it easily
follows that $\w=\hat{\x}-\x$ is in the null space of $A$. Therefore
we can further write $\|\x\|_{1} \geq \| \x+\w \|_{1}$. Using  the
triangular inequality for the $\ell_{1}$ norm we obtain
\begin{eqnarray*}
\|\x_{K}\|_{1}+\|\x_{\overline{K}}\|_{1}&=& \|\x\|_{1}\nonumber\\
 &\geq& \| \hat{\x} \|_{1} =\| \x+\w \|_{1} \nonumber \\
 & \geq &\|\x_{K}\|_{1}-\|\w_{K}\|_{1}+\|\w_{\overline{K}}\|_{1}-\|\x_{\overline{K}}\|_{1}\nonumber \\
 %& \geq &\|\x_{K}\|_{1}+\frac{C-1}{C}\|\w_{K}\|_{1}-\|\x_{\overline{K}}\|_{1}\nonumber \\
 & \geq
 &\|\x_{K}\|_{1}-\|\x_{\overline{K}}\|_{1}+\frac{C-1}{C+1}\|\w\|_{1}\nonumber
\end{eqnarray*}
where the last inequality is from the claimed null space property.
Relating the head and tail of the inequality chain above,
\begin{equation*}
2\|\x_{\overline{K}}\|_{1} \geq \frac{(C-1)}{C+1}\|\w\|_{1}.
\end{equation*}

Now we prove the second part of the theorem, namely when (\ref{eq:thmeq}) is violated, there exist scenarios where the error performance bound (\ref{eq:inthmbound}) fails. The simplest example is when the null space of the measurement matrix $A$ is a one-dimensional subspace and has an all-$1$ vector $(1,1, ..., 1)$ as its basis. Let $n$ be an even number. For any $k< \frac{n}{2}$, let us take $C'=\frac{n-k}{k}$ and $C=\frac{n-k}{k}+\epsilon$, where $\epsilon>0$ is an arbitrarily small positive number. Then obviously there exists a vector $\w$ in the null space of $A$ that violates the condition (\ref{eq:thmeq}) for $C=\frac{n-k}{k}+\epsilon$ for the set $K=\{1,2, ..., k\}$. Now we consider a signal vector
\begin{equation*}
 \x=(\underbrace{-1,-1,...,-1}_{\frac{n}{2}}, \underbrace{0,0,...,0}_{\frac{n}{2}}).
\end{equation*}
Taking the null space of $A$ into account, we can see
\begin{equation*}
 \hat{\x}=(\underbrace{0,0,...,0}_{\frac{n}{2}}, \underbrace{1,1,...,1}_{\frac{n}{2}})
\end{equation*}
is a minimizer to the programming (\ref{eq:Grassl1}).

Note that $\|\x_{\overline{K}}\|_{1}=\frac{n}{2}-k$ and $\|\x-\hat{\x}\|_1=n$,
\begin{eqnarray*}
&&\frac{\|\x-\hat{\x}\|_1}{\|\x_{\overline{K}}\|_{1}}\\
&=&\frac{n}{\frac{n}{2}-k}=\frac{2(\frac{n-k}{k}+1)}{\frac{n-k}{k}-1}\\
&=&\frac{2(C'+1)}{(C'-1)}\\
&>&\frac{2(C+1)}{C-1},
\end{eqnarray*}
strictly contradicting the error bound (\ref{eq:inthmbound}).
\end{proof}

It should be noted that if the condition (\ref{eq:thmeq}) is
true for all the sets $K$ of cardinality $k$, then
\begin{equation*}
2\|\x_{\overline{K}}\|_{1} \geq \frac{(C-1)}{C+1}\|\hat{\x}-\x\|_{1}
\end{equation*}
is also true for the set $K$ which corresponds to the $k$ largest (in amplitude) components of the vector $\x$. So
\begin{equation*}
2\Delta \geq \frac{(C-1)}{C+1}\|\hat{\x}-\x\|_{1}
\end{equation*}
which exactly corresponds to (\ref{eq:noiseadj}). In fact, the
condition (\ref{eq:thmeq}) is also a sufficient and necessary
condition for unique exact recovery of ideally $k$-sparse signals
after we take $C=1$ and let (\ref{eq:thmeq}) take strict inequality
for all $\w\neq 0$ in the null space of $A$.  To see this, suppose
the ideally $k$-sparse signal $\x$ is supported over the set $K$,
namely, $\|\x_{\overline{K}}\|_1=0$. Then from the same triangular
inequality derivation of Theorem \ref{thm:th1}, we know that
$\|\hat{\x}-\x\|_1=0$, namely $\hat{\x}=\x$. Or we can just let $C$
be arbitrarily close to $1$ from the right and since
\begin{equation*}
\|\x-\hat{\x}\|_{1} \leq \frac{2(C+1)}{C-1}  \| \x_{\overline{K}}
\|_{1}=0,
\end{equation*}
we also get $\hat{\x}=\x$. In this sense, when $C=1$, the null space
condition is equivalent to the neighborly polytope condition
\cite{Donoho06} for unique exact recovery of ideally sparse signals.

However, it is an interesting result that, for a particular \emph{fixed} measurement matrix $A$, the violation of (\ref{eq:thmeq}) for some $C>1$ does not necessarily mean that the existence of a vector $\x$ and a minimizer solution $\hat{\x}$ to (\ref{eq:Grassl1}) such that the performance guarantee (\ref{eq:inthmbound}) is violated. For example, assume $n=2$ and the null space of the measurement matrix $A$ is a one-dimensional subspace and has the vector $(1,100)$ as its basis.  Then the null space of the matrix $A$ violates (\ref{eq:thmeq}) with $C=101$ and the set $K=\{1\}$. But a careful examination shows that the biggest possible $\frac{\|\x-\hat{\x}\|_1}{\|\x_{\overline{K}}\|_1}$ ($\|\x_{\overline{K}}\|_1 \neq 0$) is equal to $\frac{100+1}{100}=\frac{101}{100}$, achieved by such an $\x$ as $(-1,-1)$. In fact, all those vectors $\x=(a,b)$ with $b \neq 0$ will achieve $\frac{\|\x-\hat{\x}\|_1}{\|\x_{\overline{K}}\|_1}=\frac{101}{100}$. However, (\ref{eq:inthmbound}) has $\frac{2(C+1)}{C-1}=\frac{204}{100}$. This suggests that for a specific measurement matrix $A$, the tightest error bound for $\frac{\|\x-\hat{\x}\|_1}{\|\x_{\overline{K}}\|_1}$ should involve the detailed structure of the null space of $A$. But for general measurement matrices $A$, as suggested by Theorem \ref{thm:th1}, the condition (\ref{eq:thmeq}) is a necessary and sufficient condition to offer the performance guarantee (\ref{eq:inthmbound}).

It is worth pointing out that the example given in the proof of Theorem \ref{thm:th1} is not just an isolated example. In fact, for two general positive integers $m$ and $n$ with $m < n$ and $n \geq 2$, we can often find an $m \times n$ measurement matrix $A$ and a certain $C>1$ such that the condition (\ref{eq:thmeq}) is violated and, at the same time, for some vector $\x$, the performance bound is also ``tightly" violated.

Consider a generic $m \times n$ matrix  $A'$. For each integer $1\leq k \leq n$, let us define the quantity $h_{k}$ as the supermum of $\frac{\|\w_{K}\|_1}{\|\w_{\overline{K}}\|_1}$ over all such sets $K$ of size $|K|\leq k$ and over all nonzero vectors $\w$ in the null space of $A'$. Let $k^*$ be the biggest $k$ such that $h_{k} \leq 1$. Then there must be a nonzero vector $\w'$ in the null space of $A$ and a set $K^*$ of size $k^*$, such that
\begin{equation*}
\|\w_{K^*}'\|_1 =h_{k^*}\|\w_{\overline{K^*}}'\|_1.
\end{equation*}
Now we generate a new measurement matrix $A$ by multiplying the portion $A'_{K^*}$of the matrix $A'$ by $h_{k^*}$. Then we will have a vector $\w$ in the null space of $A$ satisfying

\begin{equation*}
\|\w_{K^*}\|_1 =\|\w_{\overline{K^*}}\|_1.
\end{equation*}

Now we take a signal vector $\x=(-\w_{K^*},0_{\overline{K^*}})$ and claim that $\hat{\x}=(0,\w_{\overline{K^*}})$ is a minimizer to the programming (\ref{eq:Grassl1}). In fact, recognizing the definition of $h_{k^*}$, we know all the vectors $\w''$ in the null space of the measurement matrix $A$ will satisfy $\|\x+\w''\|_1 \geq \|\x\|_1$. Let us assume that $k^* \geq 2$ and take $K^{''}\subseteq K^*$ as the index set corresponding to the largest $(k^*-i)$ elements of $\x_{K^*}$  in amplitude , where $1 \leq i \leq (k^*-1)$. From the definition of $k^*$, it is apparent that $C'=\frac{\|\w_{\overline{K^{''}}}\|_1}{\|\w_{K^{''}}\|_1}>1$ since $\w$ is nonzero for any index in the set $K^*$.
Let us now take $C=\frac{\|\w_{\overline{K^{''}}}\|_1}{\|\w_{K^{''}}\|_1}+\epsilon$, where $\epsilon>0$ is any arbitrarily small positive number. Thus the condition (\ref{eq:thmeq}) is violated for the vector $\w$, the set $K^{''}$ and the defined constant $C$.

Now by inspection, the decoding error is
\begin{equation*}
\|\x-\hat{\x}\|_1=\frac{2(C'+1)}{C'-1}\|\x_{K^{''}}\|_1>\frac{2(C+1)}{C-1}\|\x_{K^{''}}\|_1,
\end{equation*}
violating the error bound (\ref{eq:inthmbound}) (for the set $K^{''}$).

In the remaining part of this paper, for a given value $\delta=\frac{m}{n}$ and any value $C\geq1$, we will devote our efforts to determining the value of feasible $\zeta=\rho\delta=\frac{k}{n}$ for which there exists a sequence of
$A$ such that the null space condition (\ref{eq:thmeq}) is satisfied for all the sets $K$ of size $k$ when $n$ goes to
infinity and $\frac{m}{n}=\delta$. For a specific $A$, it is very hard to check whether the condition (\ref{eq:thmeq})
is satisfied or not. Instead, we consider randomly choosing $A$ from a Gaussian distribution, and analyze for what $\zeta$, the condition (\ref{eq:thmeq}) for its null space is satisfied with overwhelming
probability as $n$ goes to infinity. When we consider $C=1$,
corresponding to the success of $\ell_{1}$ minimization for all
ideally $k$-sparse signals, loose bounds on $\zeta$ achieving the null space condition were established in \cite{Candes05}\cite{Y}\cite{StXuHassibi08} using the restricted isometry property and high dimensional
geometrical results. The null space condition is equivalent to the $k$-neighborly polytope
condition when $C=1$, so the neighborly polytope condition \cite{Donoho06} gives much sharper bounds for the null space condition when $C=1$. However, no sharp bounds are available for the null space condition with the general case $C \geq 1$.

The standard results on compressive sensing assume that the matrix $A$ has i.i.d. ${\cal N}(0,1)$ entries. The following
lemma gives a characterization of the resulting null space of $A$, which is a fairly well known result, and for the sake of completeness, we include its proof in the appendix.

\begin{lemma}
Let $A\in \mathbb{R}^{m\times n}$ be a random matrix with i.i.d. ${\cal
N}(0,1)$ entries. Then the following statements hold:
\begin{itemize}
%\item The distribution of $A$ is rotationally invariant, $P_A(A)=P_A(A\Theta)$,
%$\Theta\Theta^*=\Theta^*\Theta=I$;
%\item The distribution of $Z$, any basis of the null space of $A$ is right-rotationally invariant.
%$P_Z(Z)=P_Z(\Theta^*Z)$, $\Theta\Theta^*=\Theta^*\Theta=I$.

\item The distribution of $A$ is right-rotationally invariant: for any $\Theta$ satisfying $\Theta\Theta^*=\Theta^*\Theta=I$, $P_A(A)=P_A(A\Theta)$;
\item There exists a basis $Z$ of the null space of $A$, such that the distribution of $Z$ is left-rotationally invariant:
for any $\Theta$ satisfying $\Theta\Theta^*=\Theta^*\Theta=I$, $P_Z(Z)=P_Z(\Theta^*Z)$;
\item It is always possible to choose a basis $Z$ for the null space such that $Z$ has i.i.d. ${\cal N}(0,1)$ entries.
\label{lemma:nullspaceGaussian}
\end{itemize}
\end{lemma}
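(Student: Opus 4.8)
The plan is to establish the three claimed properties of the null space of a standard Gaussian matrix by exploiting the invariance of the multivariate Gaussian distribution under orthogonal transformations. First I would verify the \emph{right}-rotational invariance of $A$ itself: the density of an $m\times n$ matrix with i.i.d.\ $\N(0,1)$ entries is proportional to $\exp(-\tfrac12\tr(AA^*))$, and since $A\Theta(A\Theta)^*=A\Theta\Theta^*A^*=AA^*$ for any orthogonal $\Theta$, the density is unchanged, so $P_A(A)=P_A(A\Theta)$. This is the easy warm-up step and requires only the rotational invariance of the trace/Frobenius norm.

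Next I would construct a canonical basis $Z$ for the null space and track how it transforms when $A$ is right-multiplied by $\Theta$. The key observation is that the null space of $A\Theta$ is $\Theta^*$ applied to the null space of $A$: if $A\w=0$ then $(A\Theta)(\Theta^*\w)=A\w=0$. So if $Z$ is a basis for $\ker A$, then $\Theta^*Z$ is a basis for $\ker(A\Theta)$, using a deterministic rule for selecting a basis from a subspace (e.g.\ an orthonormalization or a fixed linear functional of the projector onto the null space). Combining this with the already-established fact that $A$ and $A\Theta$ are equidistributed, I would conclude that $Z$ and $\Theta^*Z$ are equidistributed, i.e.\ $P_Z(Z)=P_Z(\Theta^*Z)$, which is the claimed \emph{left}-rotational invariance of the basis.

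For the third bullet, I would argue that one can in fact choose the basis so that $Z$ has i.i.d.\ $\N(0,1)$ entries. The cleanest route is to build the null space basis directly rather than deriving it from $A$: take an $(n-m)\times n$ matrix $Z$ with i.i.d.\ $\N(0,1)$ entries and then choose $A$ to be (an appropriately distributed) matrix whose rows span the orthogonal complement of the row space of $Z$. By the right-rotational invariance of the Gaussian ensemble and the uniform (Haar) distribution of the induced subspace over the Grassmann manifold, the subspace $\ker A$ is uniformly distributed, and a Gaussian $Z$ realizes a basis for such a uniformly random subspace. I would check that the resulting joint construction is consistent, namely that the row space of $A$ and the row space of $Z$ are independent uniformly random orthogonal-complementary subspaces, so that marginally $A$ is the standard Gaussian ensemble while $Z$ is simultaneously i.i.d.\ $\N(0,1)$.

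The main obstacle I anticipate is the measure-theoretic bookkeeping in the second and third steps: making precise what it means to ``choose a basis'' as a measurable function of $A$ (so that pushing forward the distribution is legitimate), and verifying that the map $\Theta^*$ commutes with that choice in distribution rather than merely for fixed bases. The subtlety is that an arbitrary deterministic basis-selection rule need not transform covariantly under $\Theta^*$, so I would either invoke a rule that is equivariant by construction or appeal directly to the Haar/uniform distribution of $\ker A$ on the Grassmann manifold and the fact that the Gaussian measure projects to it. Once the uniformity of the subspace is in hand, both the left-invariance and the existence of a Gaussian basis follow from the orthogonal-invariance of the Grassmann measure, so the bulk of the work is in setting up these distributional identifications cleanly rather than in any hard computation.
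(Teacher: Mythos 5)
Your proposal is correct and, for the first two bullets, follows essentially the same logic as the paper: right-invariance of the Gaussian ensemble, plus the observation that the null space transforms covariantly ($\ker(A\Theta)=\Theta^*\ker A$), yields left-invariance of a suitably chosen basis. The equivariance subtlety you flag in the second step is real, and the paper resolves it concretely by extracting the basis from the SVD $A=U\Sigma V^*$: since $A\Theta$ has right singular vectors $\Theta^*V$, the complement $V^\perp$ transforms to $\Theta^*V^\perp$, which is exactly the covariant selection rule you say you would need. Where you genuinely diverge is the third bullet. The paper stays in the ``forward'' direction: it takes the left-rotationally invariant orthonormal basis $V^\perp$ and post-multiplies by an independent upper-triangular factor $R$ drawn from the QR decomposition $G=QR$ of an i.i.d.\ Gaussian matrix, using the classical fact that $Q$ and $R$ are independent with $Q$ Haar-distributed, so that $V^\perp R\stackrel{d}{=}QR=G$ is i.i.d.\ Gaussian while still spanning $\ker A$. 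You instead reverse the construction: sample $Z$ Gaussian first and build $A$ around its orthogonal complement, then verify that the marginal of $A$ is the standard ensemble. Both establish the same coupling; the paper's route is more directly a ``choice of basis given $A$'' and avoids having to check the marginal law of $A$, while yours leans more heavily on the uniformity of the induced subspace on the Grassmannian, which you would still need to justify (it follows from the invariance you prove in the first bullet). Neither route has a gap, but if you execute your version you should make the conditional-law argument explicit so that the statement ``given $A$, one can choose such a $Z$'' is literally recovered, and note that ``independent orthogonal-complementary subspaces'' should read as: the subspace is uniform and, conditionally on it, $A$ and $Z$ are Gaussian bases of the two complementary pieces.
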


In view of Theorem 1 and Lemma 1 what matters is that the null space
of $A$ be rotationally invariantly. Sampling from this rotationally
invariant distribution is equivalent to uniformly sampling a random
$(n-m)$-dimensional subspace from the Grassmann manifold
$\text{Gr}_{(n-m)}(n)$. Here the Grassmann manifold
$\text{Gr}_{(n-m)}(n)$ is the set of $(n-m)$-dimensional subspaces
in the $n$-dimensional Euclidean space $\mathbb{R}^n$ \cite{Grass}. For any
such $A$ and ideally sparse signals, the sharp bounds of
\cite{Donoho06}, apply. However, we shall see that the neighborly
polytope condition for ideally sparse signals does not readily apply to the
proposed null space condition analysis for approximately sparse
signals, since the null space condition can not be transformed to
the $k$-neighborly property in a \emph{single} high-dimensional
polytope \cite{Donoho06}. Instead, in this paper, we shall give a
unified Grassmann angle framework to analyze the proposed null space
property.

%%%%%%%%%%%%%%%%%%%%%%%%%%%%%%%%%%%%%%%%%%%%%%%%%%%%%%%%%%%%%%%%%%%%%
\section{The Grassmann Angle Framework for the Null Space Characterization}
\label{sec:Grassprobnull}
%%%%%%%%%%%%%%%%%%%%%%%%%%%%%%%%%%%%%%%%%%%%%%%%%%%%%%%%%%%%%%%%%%%%%%

In this section we detail the Grassmann angle-based
framework for analyzing the bounds on $\zeta=\frac{k}{n}$ such that
(\ref{eq:thmeq}) holds for every vector in the null space, which we
denote by $Z$. Put more precisely, given a certain constant $C>1$ (or $C \geq 1$), which
corresponds to a certain level of recovery accuracy for the
approximately sparse signals, we are interested in what scaling $\frac{k}{n}$ we can achieve while satisfying the following condition
on $Z$  ($|K|=k$):
\begin{equation}
\forall \w \in Z, \forall K \subseteq \{1,2,...,n\}, C \|\w_{K}\|_1\leq
\|\w_{\overline{K}}\|_1. \label{eq:Grassiffcon}
\end{equation}
From the definition of the condition (\ref{eq:Grassiffcon}), there is a
tradeoff between the largest sparsity level $k$ and the parameter
$C$. As $C$ grows, clearly the largest $k$ satisfying
(\ref{eq:Grassiffcon}) will likely decrease, and, at the same time,
$\ell_{1}$ minimization will be more robust in terms of the the residual norm $\|\x_{\overline{K}}\|_1$. The key in our derivation is the following lemma:
\begin{lemma}
For a certain subset $K \subseteq \{1,2,...,n\}$ with $|K|=k$, the
event that the null space $Z$ satisfies
\begin{equation*}
C \|\w_{K}\|_1\leq \|\w_{\overline{K}}\|_1, \forall \w \in Z
\end{equation*}
is equivalent to the event that $\forall\x$ supported on the
$k$-set $K$ (or supported on a subset of $K$):
\begin{equation}
\|\x_K+\w_{K}\|_1+ \|\frac{\w_{\overline{K}}}{C}\|_1 \geq \|\x_K\|_1,
\forall \w \in Z. \label{eq:Grassxcondition}
\end{equation}
\label{lemma:Grassbaselemma}
\end{lemma}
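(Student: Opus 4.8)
The plan is to prove the claimed equivalence by establishing the two implications separately, exploiting the observation that (\ref{eq:Grassxcondition}) is nothing more than the assertion that a signal $\x$ supported on $K$ remains an $\ell_1$-minimizer after being perturbed by any null space vector, with the off-support tail discounted by the factor $1/C$. Recognizing this interpretation up front makes both directions essentially a matter of the triangle inequality together with one well-chosen test signal.

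First I would treat the forward direction, assuming the balancedness condition $C\|\w_{K}\|_1 \le \|\w_{\overline{K}}\|_1$ for all $\w \in Z$ and deriving (\ref{eq:Grassxcondition}). Fix any $\w \in Z$ and any $\x$ supported on $K$ (or a subset of $K$). The reverse triangle inequality for the $\ell_1$ norm gives $\|\x_K+\w_{K}\|_1 \ge \|\x_K\|_1 - \|\w_K\|_1$, so it suffices to check that $\frac{1}{C}\|\w_{\overline{K}}\|_1 \ge \|\w_K\|_1$; but this is precisely the assumed balancedness inequality. Adding $\frac{1}{C}\|\w_{\overline{K}}\|_1$ to both sides of the reverse triangle inequality then yields (\ref{eq:Grassxcondition}) directly.

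For the converse I would use the freedom to choose $\x$. Since (\ref{eq:Grassxcondition}) is assumed to hold for \emph{every} $\x$ supported on $K$ (or on a subset of $K$) and every $\w \in Z$, I would fix an arbitrary $\w \in Z$ and specialize to the test signal $\x_K = -\w_K$, which is supported on a subset of $K$ and hence admissible. This choice annihilates the first term, $\|\x_K+\w_K\|_1 = 0$, while $\|\x_K\|_1 = \|\w_K\|_1$, so (\ref{eq:Grassxcondition}) collapses to $\frac{1}{C}\|\w_{\overline{K}}\|_1 \ge \|\w_K\|_1$, i.e. $C\|\w_K\|_1 \le \|\w_{\overline{K}}\|_1$. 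As $\w \in Z$ was arbitrary, the balancedness condition follows.

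The argument is short, so I do not anticipate a serious analytic obstacle; the only point that requires care is identifying the extremal choice $\x_K = -\w_K$ that makes the converse tight, and verifying that it is a legitimate signal under the hypothesis (it is, being supported on a subset of $K$). The insight worth emphasizing is that (\ref{eq:Grassxcondition}) is exactly the worst-case instance of the minimality inequality $\|\x+\w\|_1 \ge \|\x\|_1$ with the tail $\|\w_{\overline{K}}\|_1$ scaled down by $1/C$, so the two formulations are merely two ways of writing the same geometric balancedness of the subspace $Z$.
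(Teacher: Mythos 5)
Your proof is correct and follows essentially the same route as the paper's: the forward direction via the reverse triangle inequality plus the balancedness bound, and the converse via the extremal test signal $\x_K=-\w_K$. The only cosmetic difference is that the paper phrases the converse contrapositively (assuming a violating $\w$ exists and exhibiting a failing $\x$), whereas you argue it directly; the substance is identical.
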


\begin{proof}
First, let us assume that $C \|\w_{K}\|_1\leq \|\w_{\overline{K}}\|_1,
\forall \w \in Z$.  Using the triangular inequality, we obtain
\begin{eqnarray*}
 &&\| \x_{K}+\w_{K} \|_{1}+\|\frac{\w_{\overline{K}}}{C}\|_{1} \nonumber \\
 & \geq &\|\x_{K}\|_{1}-\|\w_{K}\|_{1}+\|\frac{\w_{\overline{K}}}{C}\|_{1} \nonumber \\
 & \geq &\|\x_{K}\|_{1}.\nonumber
\end{eqnarray*}
thus proving the forward part of this lemma. Now let us assume
instead that $\exists \w \in Z$, such that $C
\|\w_{K}\|_1>\|\w_{\overline{K}}\|_1 $. Then we can construct a vector
$\x$ supported on the set $K$ (or a subset of $K$), with
$\x_{K}=-\w_{K}$. Then we have
\begin{eqnarray*}
 &&\| \x_{K}+\w_{K} \|_{1}+\|\frac{\w_{\overline{K}}}{C}\|_{1} \nonumber \\
 &=& 0+\|\frac{\w_{\overline{K}}}{C}\|_{1} \nonumber \\
 &<&\|\x_{K}\|_{1},\nonumber
\end{eqnarray*}
proving the inverse part of this lemma.
\end{proof}

% So the condition
%(\ref{eq:Grassiffcon}) on the null space $Z$ holds if and only if
%$\forall~~K \subseteq \{1,2,...,n\}$ with $|K|=k$, and $\forall~~\x$
%supported on the set $K$ (or on a subset of $K$),
%\begin{equation}
%\|\x_K+\w_{K}\|_1+ \|\frac{\w_{\overline{K}}}{C}\|_1 \geq \|\x_K\|_1,
%\forall \w \in Z \label{eq:allcon}
%\end{equation}

Now let us consider the probability that condition
(\ref{eq:Grassiffcon}) holds for the sparsity $|K|=k$ if we
uniformly sample a random $(n-m)$-dimensional subspace $Z$ from the
Grassmann manifold $\text{Gr}_{(n-m)}(n)$. Based on Lemma
\ref{lemma:Grassbaselemma}, we can equivalently consider the
complementary probability $P$ that there exists a subset ${K}
\subseteq\{1,2,...,n\}$ with $|K|=k$, and a vector $\x \in \mathbb{R}^n$
supported on the set $K$ (or a subset of $K$) failing the condition
(\ref{eq:Grassxcondition}). With the linearity of the subspace $Z$
in mind, to obtain $P$, we can restrict our attention to those
vectors $\x$ from the cross-polytope (the unit $\ell_1$ ball)
\begin{equation*}
\{\x\in \mathbb{R}^n ~|~\|\x\|_1=1 \}
\end{equation*}
that are only supported on the set $K$ (or a subset of $K$).

 First, we upper bound the probability $P$ by a union bound over all the possible support sets $K \subseteq \{1,2,...,n\}$ and
 all the sign patterns of the $k$-sparse vector
 $\x$. Since the $k$-sparse vector $\x$ has $\binom{n}{k}$ possible support sets of cardinality $k$ and $2^k$ possible sign
 patterns (nonnegative or nonpositive), we have

 \begin{equation}
  P\leq \binom{n}{k} \times 2^k \times P_{K,-},
 \label{eq:union}
 \end{equation}
where $P_{K,-}$ is the probability that for a specific \emph{support
set} $K$, there exist a $k$-sparse vector $\x$ of a specific
\emph{sign pattern} which  fails the condition
(\ref{eq:Grassxcondition}). By symmetry, without loss of generality,
we assume the signs of the elements of $\x$ to be nonpositive.

So now let us focus on deriving the probability $P_{K,-}$.
Since $\x$ is a nonpositive $k$-sparse vector supported on the set
$K$ (or a subset of $K$) and can be restricted to the cross-polytope
$\{\x \in \mathbb{R}^n~|~\|\x\|_1=1 \}$, $\x$ is also on a
$(k-1)$-dimensional face, denoted by $F$, of the skewed
cross-polytope (weighted $\ell_1$ ball) SP:
\begin{equation}
\text{SP}=\{\y\in \mathbb{R}^n~|~\|\y_K\|_1+ \| \frac{\y_{\overline{K}}}{C}\|_1
\leq 1\}
\end{equation}

%\begin{figure}
%\centering
%\includegraphics[width=3.2 in, height=2.2 in]{cross-polytope2.eps}
%\caption{The Grassmannian Angle for a Skewed Cross-polytope}
%\label{fig:grassangle}
%\end{figure}

\begin{figure*}
\centering
\includegraphics[width=5.4 in, height=4.4 in]{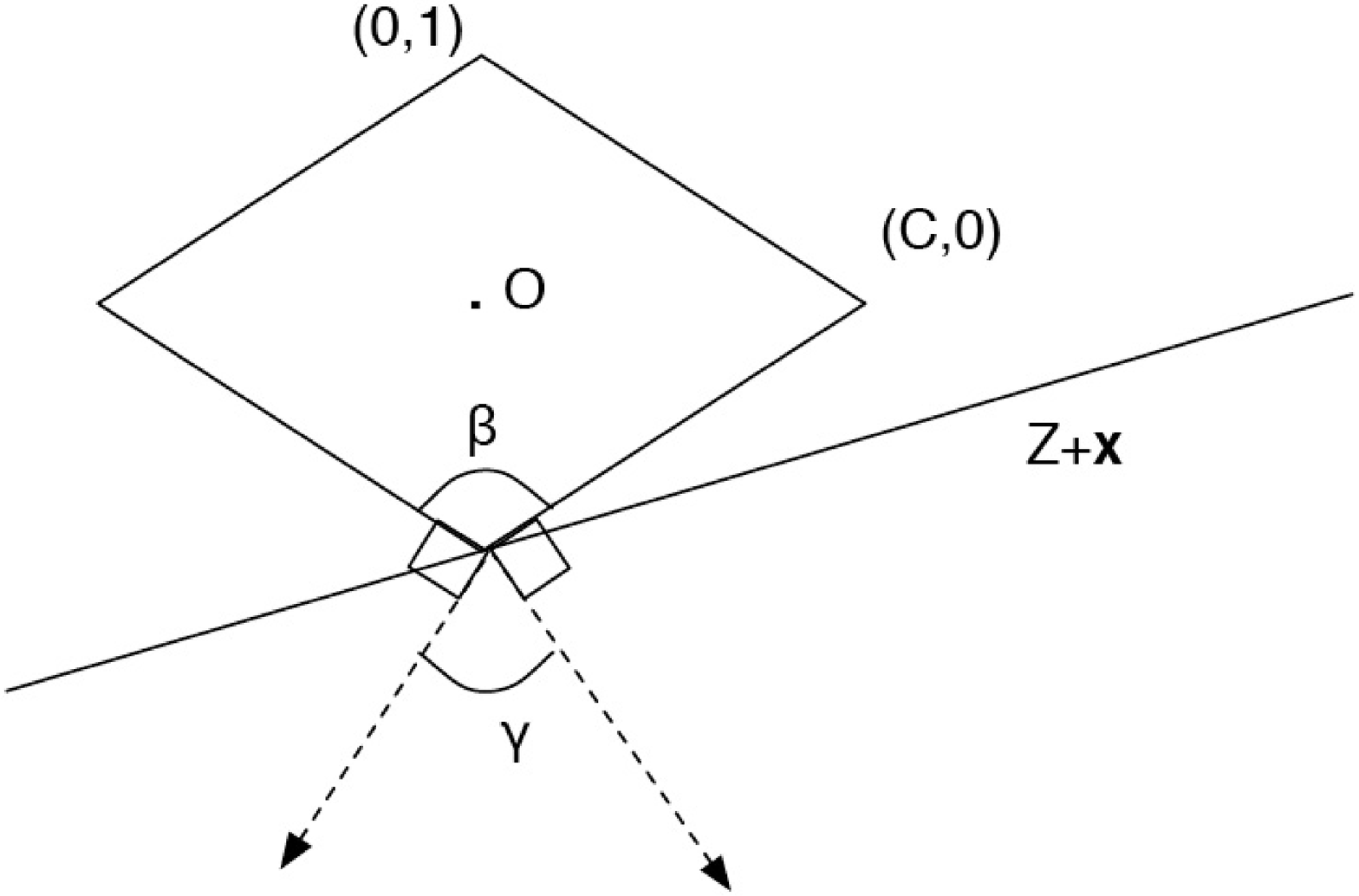}
\caption{The Grassmann Angle for a Skewed Cross-polytope}
\label{fig:grassangle}
\end{figure*}

Then $P_{K,-}$ is the probability that there exists an $\x \in F$,
and there exists a $ \w \in Z$ ($\w\neq 0$) such that
\begin{equation} \|\x_K+\w_{K}\|_1+ \|\frac{\w_{\overline{K}}}{C}\|_1 \leq
\|\x_K\|_1=1.
\label{eq:perturbationsmaller}
\end{equation}
We first focus on studying a specific \emph{single }point $\x
\in F$, without loss of generality, assumed to be in the relative
interior of this $(k-1)$ dimensional face $F$. For this single particular
$\x$ on the $F$, the probability, denoted by $P_{\x}'$, that $\exists \w
\in Z$ ($\w\neq 0$) such that \ref{eq:perturbationsmaller} holds
%\begin{equation}
%\|\x_K+\w_{K}\|_1+ \|\frac{\w_{\overline{K}}}{C}\|_1 \leq \|\x_K\|_1=1,
%\end{equation}
is essentially the probability that a uniformly chosen $(n-m)$
dimensional subspace $Z$ shifted by the point $\x$, namely $(Z+\x)$,
intersects the skewed cross-polytope
\begin{equation}
\text{SP}=\{\y\in \mathbb{R}^n~|~\|\y_K\|_1+ \|\frac{\y_{\overline{K}}}{C}\|_1
\leq 1\}
\end{equation}
\emph{nontrivially}, namely, at some other point besides $\x$.

From the linear property of the subspace $Z$, the event that
$(Z+\x)$ intersects the skewed cross-polytope SP is equivalent to the
event that $Z$ intersects nontrivially with the cone SP-Cone($\x$)
obtained by observing the skewed polytope SP from the point $\x$.
(Namely, SP-Cone($\x$) is conic hull of the point set
$(\text{SP}-\x)$ and SP-Cone($\x$) has the origin of the
coordinate system as its apex.) However, as noticed in the geometry
for convex polytopes \cite{Grunbaum68}\cite{Grunbaumbook}, the
SP-Cone(\x) are identical for any $\x$ lying in the relative
interior of the face $F$.  This means that the probability $P_{K,-}$
is equal to $P_{\x}'$, regardless of the fact $\x$ is only a single
point in the relative interior of the face $F$. (The acute reader
may have noticed some singularities here because $\x \in F$ may not
be in the relative interior of $F$, but it turns out that the
SP-Cone(\x) is then only a subset of the cone we get when $\x$ is in
the relative interior of $F$. So we do not lose anything if we
restrict $\x$ to be in the relative interior of the face $F$.) In
summary, we have
\begin{equation*}
P_{K,-}=P_{\x}'.
\end{equation*}

Now we only need to determine $P_{\x}'$. From its definition,
$P_{\x}'$ is exactly the \emph{\textbf{complementary Grassmann
angle}} \cite{Grunbaum68} for the face $F$ with respect to the
polytope SP under the Grassmann manifold
$\text{Gr}_{(n-m)}(n)$:\footnote{A Grassman angle and its
corresponding complementary Grassmann angle always sum up to 1.
There is apparently inconsistency in terms of the definition of
which is ``Grassmann angle" and which is ``complementary Grassmann
angle" between \cite{Grunbaum68},\cite{Schneider} and
\cite{Vershik92} etc. But we will stick to the earliest definition
in \cite{Grunbaum68} for Grassmann angle: the measure of the
subspaces that intersect trivially with a cone. } the probability of
a uniformly distributed $(n-m)$-dimensional subspace $Z$ from the
Grassmannian manifold $\text{Gr}_{(n-m)}(n)$ intersecting
nontrivially with the cone SP-Cone($\x$) formed by observing the
skewed cross-polytope SP from the relative interior point $\x \in
F$.

Building on the works by L.A.Santal\"{o} \cite{Santalo1952} and
P.McMullen \cite{McMullen1975} etc. in high dimensional integral
geometry and convex polytopes, the complementary Grassmann angle for
the $(k-1)$-dimensional face $F$ can be explicitly expressed as the
sum of products of internal angles and external angles
\cite{Grunbaumbook}:
\begin{equation}
2\times \sum_{s \geq 0}\sum_{G \in \Im_{m+1+2s}(\text{SP})}
{\beta(F,G)\gamma(G,\text{SP})}, \label{eq:Grassangformula}
\end{equation}
where $s$ is any nonnegative integer, $G$ is any
$(m+1+2s)$-dimensional face of the skewed cross-polytope
($\Im_{m+1+2s}(\text{SP})$ is the set of all such faces),
$\beta(\cdot,\cdot)$ stands for the internal angle and
$\gamma(\cdot,\cdot)$ stands for the external angle.

The internal angles and external angles are basically defined as
follows \cite{Grunbaumbook}\cite{McMullen1975}:
\begin{itemize}
\item An internal angle $\beta(F_1, F_2)$ is the fraction of the
hypersphere $S$ covered by the cone obtained by observing the face
$F_2$ from the face $F_1$. \footnote{Note the dimension of the
hypersphere $S$ here matches the dimension of the corresponding cone
discussed. Also, the center of the hypersphere is the apex of the
corresponding cone. All these defaults also apply to the definition
of the external angles. } The internal angle $\beta(F_1, F_2)$ is
defined to be zero when $F_1 \nsubseteq F_2$ and is defined to be
one if $F_1=F_2$.
\item An external angle $\gamma(F_3, F_4)$ is the fraction of the
hypersphere $S$ covered by the cone of outward normals to the
hyperplanes supporting the face $F_4$ at the face $F_3$. The
external angle $\gamma(F_3, F_4)$ is defined to be zero when $F_3
\nsubseteq F_4$ and is defined to be one if $F_3=F_4$.

\end{itemize}

%\footnote{Note the dimension of the hypersphere $S$ here matches the
%dimension of the corresponding cone discussed. Also, the center of
%the hypersphere is the apex of the corresponding cone. }
Let us take for example the $2$-dimensional skewed cross-polytope
\begin{equation*}
\text{SP}=\{(y_1, y_2)\in \mathbb{R}^2|~\|\y_2\|_1+ \|\frac{\y_1}{C}\|_1 \leq
1\}
\end{equation*}
(namely the diamond) in Figure \ref{fig:grassangle}, where $n$=2,
$(n-m)=1$ and $k=1$. Then the point $\x=(0,-1)$ is a 0-dimensional
face (namely a vertex) of the skewed polytope SP. Now from their
definitions, the internal angle $\beta(\x, \text{SP})=\frac{\beta}{2\pi}$ and the
external angle $\gamma(\x,\text{SP})=\frac{\gamma}{2\pi}$,
$\gamma(\text{SP},\text{SP})=1$. The complementary Grassmann angle
for the vertex $\x$ with respect to the polytope SP is the
probability that a uniformly sampled $1$-dimensional subspace (namely
a line, we denote it by $Z$) shifted by $\x$ intersects
nontrivially with $\text{SP}=\{(y_1, y_2)\in \mathbb{R}^2|~\|\y_2\|_1+
\|\frac{\y_1}{C}\|_1 \leq 1\}$ (or equivalently the probability that
$Z$ intersects nontrivially with the cone obtained by observing SP
from the point $\x$). It is obvious that this probability is
$\frac{\beta}{\pi}$. The readers can also verify the correctness of the formula
(\ref{eq:Grassangformula}) very easily for this toy example.

Generally, it might be hard to give explicit formulae for the
external and internal angles involved, but fortunately in the skewed
cross-polytope case, both the internal angles and the external
angles can be explicitly computed.

Firstly, let us look at the internal angle $\beta(F,G)$ between the
$(k-1)$-dimensional face $F$ and a $(l-1)$-dimensional face $G$.
Notice that the only interesting case is when $F \subseteq G$ since
$\beta(F,G)\neq 0$ only if $F \subseteq G$. We will see if $F
\subseteq G$, the cone formed by observing $G$ from $F$ is the
direct sum of a $(k-1)$-dimensional linear subspace and a convex
polyhedral cone formed by $(l-k)$ unit vectors with inner product
$\frac{1}{1+C^2k}$ between each other. In this case, the internal
angle is given by
%\cite{Vershik92,henk}
\begin{equation}
\beta(F,G)=\frac{V_{l-k-1}(\frac{1}{1+C^2k},l-k-1)}{V_{l-k-1}(S^{l-k-1})},
\label{eq:internal}
\end{equation}
where $V_i(S^i)$ denotes the $i$-th dimensional surface measure on
the unit sphere $S^{i}$, while $V_{i}(\alpha', i)$ denotes the
surface measure for regular spherical simplex with $(i+1)$ vertices
on the unit sphere $S^{i}$ and with inner product as $\alpha'$
between these $(i+1)$ vertices. Thus (\ref{eq:internal}) is equal to
$B(\frac{1}{1+C^2k}, l-k)$, where
\begin{equation}
B(\alpha', m')=\theta^{\frac{m'-1}{2}} \sqrt{(m'-1)\alpha' +1}
\pi^{-m'/2} {\alpha'}^{-1/2}J(m',\theta),
\end{equation}
with $\theta=(1-\alpha')/\alpha'$ and
\begin{equation}
 J(m', \theta)=\frac{1}{\sqrt{\pi}} \int_{-\infty}^{\infty}(\int_{0}^{\infty} e^{-\theta v^2+2i v\lambda} \,dv )^{m'} e^{-\lambda^2} \,d\lambda.
\end{equation}

We should remark that the formula above for the internal angle is true only when the face $G$ is not of dimension $n$. When $G$ is $n$-dimensional, we will derive a separate formula in Lemma \ref{lemma:internalSP}. Since the expression for this special case will not affect our following derivations in a significant way, we choose not to list it here.

Secondly, we can derive the external angle $\gamma(G, \text{SP})$
between the $(l-1)$-dimensional face $G$ and the skewed
cross-polytope SP as:
\begin{equation}
\gamma(G, \text{SP})=\frac{2^{n-l}}{{\sqrt{\pi}}^{n-l+1}}
\int_{0}^{\infty}e^{-x^2}(\int_{0}^{\frac{x}{C\sqrt{k+\frac{l-k}{C^2}}}}
e^{-y^2} \,dy)^{n-l}\,dx. \label{eq:cncsex}
\end{equation}
The derivations of these expressions involve the computations of the
volumes of cones in high dimensional geometry and will be presented
in the appendix.

In summary, combining (\ref{eq:union}), (\ref{eq:Grassangformula}),
(\ref{eq:internal}) and (\ref{eq:cncsex}), we get an upper bound on
the probability $P$. If we can show that for a certain
$\zeta=\frac{k}{n}$, $P$ goes to zero exponentially in $n$ as
$n\rightarrow \infty$, then we know that for such $\zeta$, the
null space condition (\ref{eq:Grassiffcon}) holds with overwhelming
probability. This is the guideline for computing the bound on
$\zeta$ in the following sections.

\section{Evaluating the Bound $\zeta$}
\label{sec:evathebound}

In summary,
 \begin{equation}
  P\leq \binom{n}{k} \times 2^k \times 2\times \sum_{s \geq 0}\sum_{G \in \Im_{m+1+2s}(\text{SP})}
{\beta(F,G)\gamma(G,\text{SP})}. \label{eq:num4}
\end{equation}

In order for this upper bound on $P$ to decrease to 0 as
$n\rightarrow \infty$, one sufficient condition is that every sum term in
(\ref{eq:num4}) goes to $0$ exponentially fast in $n$. We remark that the equation in (\ref{eq:num4}) is similar to the
expected number of missed ``faces" in the study of $k$-neighborly
polytope \cite{Donoho06, Vershik92}, but generalizes the
$k$-neighborly polytope formula to more general Grassmann angles. In
the following sections, we will extend the techniques developed in
\cite{Donoho06, Vershik92} to evaluating the bounds on $\zeta$ from
(\ref{eq:num4}), taking into account of the variable $C>1$. To illustrate the effect of $C$ on the bound $\zeta$, also for the sake of completeness, we will keep the detailed derivations.

For simplicity of analysis, we define $l=(m+1+2s)+1$ and $\nu=\frac{l}{n}$. In the skewed cross-polytope SP, we notice that
there are in total $\binom{n-k}{l-k} 2^{l-k}$ faces $G$ of dimension
$(l-1)$ such that $F\subseteq G$ and $\beta(F, G) \neq 0$. Because of the symmetry in
the skewed cross-polytope SP, it follows from (\ref{eq:num4}) that

 \begin{equation}
  P\leq \sum_{s \geq 0}\underbrace{{\underbrace{2\binom{n}{k}2^l \times\binom{n-k}{l-k}}_{COM_{s}}\beta(F,G)\gamma(G,\text{SP})}}_{D_{s}},
\label{eq:num3}
\end{equation}
where $l=(m+1+2s)+1$ and $G\subseteq \text{SP}$ is any single face
of dimension $(l-1)$ such that $F \subseteq G$.

Closely following the approach of \cite{Donoho06}, in estimating $n^{-1}\log(D_{s})$, we
can decompose it into a sum of terms involving logarithms of the
combinatorial factor, the internal angle and the external angle.
With
\begin{equation*}
H(p) = p \log(1/p) + (1-p)\log(1/(1-p)),
\end{equation*}
where the logarithm base is over $e$. From Stirling's formula, we know that
\begin{equation}
n^{-1}\log \binom{n}{\lfloor pn \rfloor}\rightarrow H(p), p \in [0,
1], n\rightarrow \infty. \label{eq: entrasy}
\end{equation}

Defining $\nu = l/n \geq \delta$, we have
\begin{equation}
n^{-1} \log(COM_{s}) = \nu \log(2) + H(\rho\delta) +
H(\frac{\nu-\rho\delta}{1-\rho\delta} )(1-\rho\delta) + R_1
\label{eq:comexp}
\end{equation}
with remainder $R_1 = R_1(s, k, m, n)$.

Define the combinatorial growth exponent for $COM_{s}$
\begin{equation}
\psi_{com}(\nu;\rho, \delta) = \nu\log(2) + H(\rho\delta) + H(
\frac{\nu-\rho\delta}{1-\rho\delta} )(1-\rho \delta),
\end{equation}
describing the exponential growth of the combinatorial factors.
Applying (\ref{eq: entrasy}), we will see that the remainder $R_1$
in (\ref{eq:comexp}) is $o(1)$ uniformly in the range $l>\delta n$,
$n>n_0(\rho, \delta, \epsilon)$, where $n_0(\rho, \delta, \epsilon)$ is some big enough natural number.

For a particular $C$, we will also define a decay exponent $\psi_{ext}(\nu;\rho,\delta)$ and show
that $\gamma(G,\text{SP})$ decays exponentially at least at the rate $\psi_{ext}{(\nu;\rho,\delta)}$: for each $\epsilon>0$,
\begin{equation*}
n^{-1}\log(\gamma(G, \text{SP}))\leq -\psi_{ext}(\nu)+\epsilon,
\end{equation*}
uniformly in $l \geq \delta n$, $n \geq n_{0}(\rho, \delta, \epsilon)$. When it is clear in the context what $C$ is, we will often omit $C$ in the notations.

Similarly, under the parameter $C$, Section \ref{sec:bndinternal} below shows that the
decay exponent for the internal angle $\beta(F,G)$ is
$\psi_{int}(\nu; \rho,\delta)$, which is defined in Section \ref{sec:bndinternal}. Since $k \sim \rho \delta n$, $l \sim \nu n$, we will have the scaling
\begin{equation*}
n^{-1} \log(\beta(F,G))=-\psi_{int}(\nu;\rho,\delta)+R_2,
\end{equation*}
where the remainder $R_2=o(1)$ uniformly in $l \geq \delta n $ when $n \geq n_{0}(\rho, \delta, \epsilon)$ is a large enough natural number.

In summary, under a given $C>1$, for any fixed choice of $\rho$, $\delta$, for
$\epsilon > 0$, and for $n \geq n_0(\rho, \delta, \epsilon)$,
\begin{equation}
n^{-1} \log(Ds)\leq \psi_{com}(\nu;\rho,
\delta)-\psi_{int}(\nu;\rho, \delta)-\psi_{ext}(\nu;\rho,
\delta)+3\epsilon , \label{eq:comexpdef}
\end{equation}
holds uniformly over the sum parameter $s$ in (\ref{eq:Grassangformula}).

In the rest of this paper, when the parameters $\rho$, $\delta$ and $C$ are clear from the context, we will omit them  from the notations for the combinatorial, internal and external exponents.

\subsection{Characterizing $\rho_{N}(\delta, C)$}
Continuing to follow \cite{Donoho06}, we define the \emph{net exponent} $\psi_{net}=\psi_{com}(\nu;\rho,
\delta)-\psi_{int}(\nu;\rho, \delta)-\psi_{ext}(\nu;\rho, \delta)$.
We will know that the components of $\psi_{net}$ are all continuous
over sets $\rho\in[\rho_0,1], \delta \in [\delta_0,1], \nu\in
[\delta,1]$, and $\psi_{net}$ is also continuous over these regions.

\begin{definition}
Let $\delta \in (0,1]$. The critical proportion $\rho_N(\delta, C)$
is the supremum of $\rho \in [0,1]$ satisfying
\begin{equation*}
 \psi_{net} (\nu; \rho, \delta)<0, ~~~~\nu \in [\delta, 1].
\end{equation*}
Continuity of $\psi_{net}$ shows that if $\rho<\rho_{N}$ then, for
some $\epsilon>0$,
\begin{equation*}
 \psi_{net} (\nu; \rho, \delta)<-\epsilon, ~~~~\nu \in [\delta, 1].
\end{equation*}

Combine this with (\ref{eq:comexpdef}), for all $s=0,2, \ldots
,(n-m)/2$ and all $n>n_0(\delta, \rho,\epsilon)$,
\begin{equation*}
n^{-1}\log(D_{s}) \leq -\epsilon.
\end{equation*}
Note that if this negative exponent condition holds, we will have the results in Theorem \ref{thm:th1}.

\end{definition}

In the next section, we will specify the exponents $\psi_{int}$ and
$\psi_{ext}$ for the internal angles and external angles respectively,  and we will discuss properties of $\rho_N (\delta, C)$.

\section{Characterizations of Angle Exponents}
\label{sec:porexp}

\subsection{Exponent for External Angle}
Let $G$ denote the cumulative distribution function of a half-normal
$HN(0, 1/2)$ random variable, i.e. a random variable $X = |Z|$ where
$Z \sim N(0, 1/2)$, and $G(x) = Prob\{X \leq x\}$, where the density
function $g(x) = 2/\sqrt{\pi}\exp(-x^2)$ and thus $G(x)$ is the
error function
\begin{equation}
G(x) = \frac{2}{\sqrt{\pi}} \int_{0}^{x}e^{-y^2}\,dy.
\label{eq:Grasserf}
\end{equation}
For $\nu \in (0, 1]$, define $x_{\nu}$ as the solution of
\begin{equation}
\frac{2xG(x)}{g(x)}=\frac{1-\nu}{\nu'}, \label{eq:external}
\end{equation}
where
\begin{equation*}
\nu'=(C^2-1)\rho\delta+\nu.
\end{equation*}

Because $xG(x)$ is a smooth strictly increasing function, which goes to
$0$ as $x \rightarrow 0$ and behaves close to $x$ as $x\rightarrow
\infty$, and because $g(x)$ is strictly decreasing, the function
$2xG(x)/g(x)$ is a strictly increasing function. So $x_{\nu}$ is a well-defined,
smooth, and decreasing function of $\nu$.
%See Figure 4.1 for a depiction.

We have $x_{\nu} \rightarrow 0$ as $\nu \rightarrow 1$ and $x_{\nu}
\sim \sqrt{\log((1-\nu)/\nu')}$ as $\nu \rightarrow 0$. Define now
\begin{equation*}
\psi_{ext}(\nu) = -(1-\nu) \log(G(x_{\nu})) + \nu x_{\nu}^2 .
\end{equation*}
%This is depicted in Figure 4.1.
This function is smooth on the interior of (0, 1), with endpoints
$\psi_ {ext}(1) = 0$, $\psi_{ext}(0) = 0$. When $C=1$, we have the
asymptotic \cite{Donoho06}
\begin{equation}
\psi_{ext}(\nu) \sim \nu \log(\frac{1}{\nu} )-\frac{1}{2}\nu
\log(\log(\frac{1}{\nu})) + o(\nu), \nu\rightarrow 0.
\label{eq:extasyp}
\end{equation}

\subsection{Exponent for Internal Angle}
Closely following \cite{Donoho06}, take $Y$ as a standard half-normal random variable $HN(0, 1)$. From standard calculations, we know that its cumulant generating function $\Lambda(s)=\log(E(\exp(sY))$ is given by
\begin{equation*}
\Lambda(s)=\frac{s^2}{2}+\log(2\Phi(s)),
\end{equation*}
where $\Phi$ is the usual cumulative distribution function of a
standard Normal $N(0,1)$. So the large deviation rate function
of the cumulant generating function $\Lambda^*$ is defined as
\begin{equation*}
\Lambda^*(y)={\max_{s}}~sy-\Lambda(y).
\end{equation*}
From the large deviation theory, this function is smooth and convex on $(0,\infty)$, strictly positive except being equal to $0$
at $\mu=E(Y)=\sqrt{2/\pi}$. For $\gamma' \in (0,1)$ let
\begin{equation}
\xi_{\gamma'}(y)=\frac{1-\gamma'}{\gamma'} y^2/2+\Lambda^*(y),
\label{eq:intexpfun}
\end{equation}
where we define
\begin{equation*}
\gamma'=\frac{\rho\delta}{\frac{C^2-1}{C^2}\rho\delta+\frac{\nu}{C^2}}.
\end{equation*}

The function $\xi_{\gamma'}(y)$ is strictly convex
and positive on $(0,\infty)$ and has a unique minimum at a unique
$y_{\gamma'}$ in the interval $(0, \sqrt{2/\pi})$. Then we have the
internal angle exponent as
\begin{equation}
\psi_{int}(\nu;\rho,\delta)=\xi_{\gamma'}(y_{\gamma'})(\nu-\rho\delta)+\log(2)(\nu-\rho\delta).
\label{eq:inteptfor}
\end{equation}
For fixed $\rho$, $\delta$, $\Lambda_{int}$ is continuous in $\nu
\geq \delta$. Most importantly, in the section below, we get the
asymptotic formula
\begin{equation}
\xi_{\gamma'}(y_{\gamma'})\sim
\frac{1}{2}\log(\frac{1-\gamma'}{\gamma'}), \gamma' \rightarrow 0
\label{eq:intasyp1}
\end{equation}
Because
$\gamma'=\frac{\rho\delta}{\frac{C^2-1}{C^2}\rho\delta+\frac{\nu}{C^2}}$,
(\ref{eq:intasyp1}) means for small $\rho$, $\nu\in [\delta, 1]$ and
any given $\eta>0$
\begin{equation}
\psi_{int}(\nu, \rho\delta)\geq (\frac{1}{2}\cdot
\log(\frac{1-\gamma'}{\gamma'})(1-\eta)+\log(2))(\nu-\rho\delta).
\label{eq:intsmallasp}
\end{equation}
%\subsection{Combining the Exponents}
%We now consider the combined behavior of $\psi_{com}$, $\psi_{int}$
%and $\psi_{net}$. We think of these as functions of $\nu$ with
%$\rho$, $\delta$ as parameters. The combinatorial exponent
%$\psi_{com}$ is the sum of a linear function in $\nu$, and a scaled,
%shifted version of the Shannon entropy, which is a symmetric,
%roughly parabolic shaped function. This is the exponent of a growing
%function which must be outweighed by the sum
%$\psi_{int}+\psi_{net}$.

\subsection{Properties of $\rho_{N}(\delta, C)$}
We now consider the combined behavior of $\psi_{com}$, $\psi_{int}$
and $\psi_{net}$. We think of these as functions of $\nu$ with
$\rho$, $\delta$ as parameters. %The combinatorial exponent
%$\psi_{com}$ is the sum of a linear function in $\nu$, and a scaled,
%shifted version of the Shannon entropy, which is a symmetric,
%roughly parabolic shaped function.
$\psi_{com}$ is the exponent of a growing function which must be
outweighed by the sum of the other two exponents:
$\psi_{int}+\psi_{net}$.

The asymptotic relations (\ref{eq:extasyp}) and (\ref{eq:intasyp1})
allow us to see the following key facts about $\rho_{N}(\delta, C)$,
the proofs of which are given in the appendix.
\begin{lemma}
For any $\delta>0$ and any $C>1$, we have
\begin{equation}
\rho_{N}(\delta, C)>0, \delta \in (0,1).
\end{equation}
\label{lemma:existence}
\end{lemma}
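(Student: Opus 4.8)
The plan is to prove the strict positivity of $\rho_N(\delta,C)$ by exhibiting a threshold: I will show there is some $\rho_0=\rho_0(\delta,C)>0$ such that for every $\rho<\rho_0$ the net exponent $\psi_{net}(\nu;\rho,\delta)=\psi_{com}-\psi_{int}-\psi_{ext}$ is strictly negative for all $\nu\in[\delta,1]$. By the definition of $\rho_N(\delta,C)$ as the supremum of such $\rho$, this yields $\rho_N(\delta,C)\geq\rho_0>0$. The argument rests on a competition between the three exponents as $\rho\to 0$: the combinatorial term $\psi_{com}$ remains bounded, the external term $\psi_{ext}$ is nonnegative, and the internal term $\psi_{int}$ diverges to $+\infty$, so the internal angle alone eventually dominates.

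First I would bound $\psi_{com}$ from above uniformly. Since the entropy satisfies $0\leq H(p)\leq\log 2$ on $[0,1]$ and $\nu\leq 1$, each of the three summands in $\psi_{com}(\nu;\rho,\delta)=\nu\log 2+H(\rho\delta)+H(\frac{\nu-\rho\delta}{1-\rho\delta})(1-\rho\delta)$ is at most $\log 2$, so $\psi_{com}\leq 3\log 2$ for all admissible $\rho,\delta,\nu$. Next, the external exponent only helps: from $\psi_{ext}(\nu)=-(1-\nu)\log(G(x_\nu))+\nu x_\nu^2$ together with $0<G(x_\nu)<1$ we get $-\log G(x_\nu)>0$ and $\nu x_\nu^2\geq 0$, hence $\psi_{ext}\geq 0$. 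Consequently $\psi_{net}\leq\psi_{com}-\psi_{int}\leq 3\log 2-\psi_{int}$, and it suffices to make $\psi_{int}$ large.

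The crucial step is to show $\psi_{int}(\nu;\rho,\delta)\to+\infty$ uniformly in $\nu\in[\delta,1]$ as $\rho\to 0$, using the lower bound (\ref{eq:intsmallasp}). I would first control the parameter $\gamma'=\frac{\rho\delta C^2}{(C^2-1)\rho\delta+\nu}$: because $\nu\geq\delta$ and $(C^2-1)\rho\delta\geq 0$, the denominator is at least $\delta$, so $\gamma'\leq\rho C^2$ uniformly in $\nu$. Since $\frac{1-\gamma'}{\gamma'}$ is decreasing in $\gamma'$, this gives $\log\frac{1-\gamma'}{\gamma'}\geq\log\frac{1-\rho C^2}{\rho C^2}\to+\infty$ as $\rho\to 0$. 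Combining with the elementary bound $\nu-\rho\delta\geq\delta(1-\rho)\geq\delta/2$ for $\rho\leq 1/2$, the estimate (\ref{eq:intsmallasp}) yields, for a fixed $\eta\in(0,1)$, $\psi_{int}(\nu;\rho,\delta)\geq(\frac{1}{2}(1-\eta)\log\frac{1-\rho C^2}{\rho C^2}+\log 2)\frac{\delta}{2}$, which diverges uniformly in $\nu$.

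Putting the pieces together, I would choose $\rho_0$ small enough that this lower bound on $\psi_{int}$ exceeds $3\log 2$; then for all $\rho<\rho_0$ and all $\nu\in[\delta,1]$ we have $\psi_{net}<0$, establishing $\rho_N(\delta,C)>0$. The one place demanding care is the uniformity over $\nu$: the asymptotic (\ref{eq:intasyp1}) feeding (\ref{eq:intsmallasp}) is a statement as $\gamma'\to 0$, so I must ensure $\gamma'\to 0$ uniformly on $[\delta,1]$ — which is exactly what the estimate $\gamma'\leq\rho C^2$ provides — and that the factor $(\nu-\rho\delta)$ stays bounded away from $0$. These two uniform bounds are the heart of the argument; the remaining estimates are routine.
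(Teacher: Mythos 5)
Your proposal is correct and follows essentially the same route as the paper: bound the combinatorial exponent by a constant, note that $\psi_{ext}\geq 0$, and use the uniform bound $\gamma'\leq \rho C^2$ (equivalently, the paper's lower bound on $\frac{1-\gamma'}{\gamma'}$) together with (\ref{eq:intsmallasp}) to make $\psi_{int}$ arbitrarily large uniformly over $\nu\in[\delta,1]$. Your version merely makes the constants ($3\log 2$, $\delta/2$) explicit where the paper leaves them qualitative.
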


Generalizing the result in \cite{Donoho06} for $\rho_N (\delta, 1)$, one can show the asymptotic of $\rho_N (\delta, C)
\rightarrow 0$ as $\delta \rightarrow 0$.
\begin{lemma}
For all $\eta>0$ and any $C> 1$,
\begin{equation}
\rho_{N}(\delta,C) \geq \log(\frac{1}{\delta})^{-(1+\eta)},
~~~\delta \rightarrow 0.
\end{equation}
\label{lemma:netasyp}
\end{lemma}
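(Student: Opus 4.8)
The plan is to prove the lower bound by exhibiting a single admissible value of $\rho$. Fix $C>1$ and $\eta>0$ and set $\rho_*=\rho_*(\delta)=\log(1/\delta)^{-(1+\eta)}$. Since $\rho_N(\delta,C)$ is \emph{defined} as the supremum of those $\rho$ for which $\psi_{net}(\nu;\rho,\delta)<0$ on all of $\nu\in[\delta,1]$, it suffices to show that $\psi_{net}(\nu;\rho_*,\delta)<0$ for every $\nu\in[\delta,1]$ once $\delta$ is small enough; this yields $\rho_N(\delta,C)\ge\rho_*$ as $\delta\to 0$. The structural observation that drives everything is that $\rho_*\to 0$, so $\rho_*\delta/\nu\le\rho_*\to 0$ \emph{uniformly} over $\nu\in[\delta,1]$. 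Hence the sparse proportion $\rho_*\delta$ is negligible next to $\nu$ everywhere on the range, $\nu'=(C^2-1)\rho_*\delta+\nu=\nu(1+O(\rho_*))$, and $\gamma'=C^2\rho_*\delta/\nu'\to 0$, which makes the small-parameter asymptotics applicable throughout. I would split the interval into an easy part $\nu\in[\nu_0,1]$ and a critical part $\nu\in[\delta,\nu_0]$, with $\nu_0$ a fixed small constant. On $[\nu_0,1]$ both $\psi_{com}$ and $\psi_{ext}$ stay bounded (indeed $\psi_{ext}\ge 0$ directly from its formula, as $-\log G(x_\nu)\ge 0$ and $\nu x_\nu^2\ge 0$), whereas the internal-angle bound (\ref{eq:intsmallasp}) forces $\psi_{int}\to+\infty$ because $\log(1/(\rho_*\delta))\to\infty$; thus $\psi_{net}\to-\infty$ there and nothing more is needed.

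The heart of the matter is the critical part. Here I would feed three uniform expansions into $\psi_{net}=\psi_{com}-\psi_{int}-\psi_{ext}$. From (\ref{eq:comexp}), together with $H(\rho_*\delta)=\rho_*\delta\log\frac{1}{\rho_*\delta}+O(\rho_*\delta)$ and $(1-\rho_*\delta)H(\frac{\nu-\rho_*\delta}{1-\rho_*\delta})=H(\nu)+o(\nu\log\log(1/\delta))$, the combinatorial exponent is $\psi_{com}=\nu\log 2+\rho_*\delta\log\frac{1}{\rho_*\delta}+H(\nu)+o(\nu\log\log(1/\delta))$, with $H(\nu)=\nu\log(1/\nu)+\nu+O(\nu^2)$. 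Because $\nu'=\nu(1+O(\rho_*))$, the external exponent agrees with the $C=1$ asymptotic (\ref{eq:extasyp}) up to an $O(\rho_*\delta)$ term, so $\psi_{ext}(\nu)\ge\nu\log(1/\nu)-\tfrac{1}{2}\nu\log\log(1/\nu)-o(\nu)$. Finally, using $\log\frac{1-\gamma'}{\gamma'}=\log\frac{\nu}{C^2\rho_*\delta}+o(1)$ in (\ref{eq:intsmallasp}) with a slack parameter $\eta_2>0$ that I keep free for now gives $\psi_{int}\ge\big(\tfrac{1}{2}(1-\eta_2)\log\frac{\nu}{C^2\rho_*\delta}+\log 2\big)(\nu-\rho_*\delta)$. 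Forming the difference, the $\nu\log(1/\nu)$ terms cancel between $\psi_{com}$ and $\psi_{ext}$ and the $\nu\log 2$ terms cancel between $\psi_{com}$ and $\psi_{int}$, leaving
\[
\psi_{net}\;\le\;\tfrac{1}{2}\nu\log\log(1/\nu)\;-\;\tfrac{1}{2}(1-\eta_2)\,\nu\log\frac{\nu}{C^2\rho_*\delta}\;+\;(\text{lower order}),
\]
where ``lower order'' gathers the $O(\nu)$ piece, the base term $\rho_*\delta\log\frac{1}{\rho_*\delta}=O(\delta\log(1/\delta)^{-\eta})$, and the $O(\rho_*\delta)$ external correction, each of which is $o(\nu\log\log(1/\delta))$ uniformly on $[\delta,\nu_0]$.

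The decisive step is to compare the two surviving $\log\log(1/\delta)$-order terms. Since $\nu\ge\delta$ we have $\frac{\nu}{C^2\rho_*\delta}\ge\frac{1}{C^2\rho_*}$, and by the choice of $\rho_*$ this gives $\log\frac{\nu}{C^2\rho_*\delta}\ge(1+\eta)\log\log(1/\delta)-2\log C$; on the other hand $\nu\ge\delta$ yields $\log\log(1/\nu)\le\log\log(1/\delta)$. Hence the coefficient of $\nu\log\log(1/\delta)$ in the bound above is at most $\tfrac{1}{2}-\tfrac{1}{2}(1-\eta_2)(1+\eta)=\tfrac{1}{2}\big(1-(1-\eta_2)(1+\eta)\big)$, which is strictly negative precisely when $(1-\eta_2)(1+\eta)>1$, i.e. $\eta_2<\eta/(1+\eta)$. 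Since $\eta_2$ is ours to pick, I fix any $\eta_2\in(0,\eta/(1+\eta))$; then $\psi_{net}\le-c\,\nu\log\log(1/\delta)+o(\nu\log\log(1/\delta))<0$ uniformly on $[\delta,\nu_0]$ for all small $\delta$, with $c=\tfrac{1}{2}\big((1-\eta_2)(1+\eta)-1\big)>0$. Combined with the easy region, this shows $\rho_*$ is admissible, whence $\rho_N(\delta,C)\ge\log(1/\delta)^{-(1+\eta)}$ as $\delta\to0$.

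The main obstacle is not the leading-order bookkeeping but the \emph{uniform} error control: one must check that every remainder absorbed into ``lower order'' is genuinely $o(\nu\log\log(1/\delta))$ uniformly over the whole window $\nu\in[\delta,\nu_0]$ rather than merely pointwise as $\nu\to0$. This needs the uniform forms of the three expansions, in particular a careful estimate of the gap between $(1-\rho_*\delta)H(\frac{\nu-\rho_*\delta}{1-\rho_*\delta})$ and $H(\nu)$, and of the $\nu'$-versus-$\nu$ discrepancy in the external angle, which is the only place $C>1$ actually enters. The saving grace, and the reason for using the deliberately suboptimal exponent $-(1+\eta)$ instead of the sharp $-1$, is that the extra factor $(1+\eta)$ inside $\log(1/\rho_*)$ creates a margin of order $\log\log(1/\delta)$ that swamps all these corrections, so only crude uniform bounds on the remainders are required.
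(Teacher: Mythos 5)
Your proposal is correct and arrives at exactly the comparison the paper makes: the coefficient $\tfrac{1}{2}$ in $\Omega(\nu)=H(\nu)-\psi_{ext}(\nu)\sim\tfrac{1}{2}\nu\log\log(1/\nu)$ must lose to the coefficient $\tfrac{1}{2}(1+\eta)$ produced by $\xi_{\gamma'}(y_{\gamma'})\sim\tfrac{1}{2}\log\frac{1-\gamma'}{\gamma'}$ with $\gamma'\leq C^2\rho_*\to 0$. Where you genuinely diverge is in how uniformity over $\nu\in[\delta,1]$ is obtained. The paper isolates $K(\nu;\rho,\delta)=\Omega(\nu)-\xi_{\gamma'}(y_{\gamma'})(\nu-\rho\delta)$, proves in a separate lemma (Lemma \ref{lemma:concavity}) that $K$ is concave in $\nu$ for small $\rho$ --- which requires the second-derivative asymptotics $\partial^2\xi_{\gamma}(y_{\gamma})/\partial\gamma^2\sim\gamma^{-4}/4$ --- and then deduces negativity on all of $[\delta,1]$ from the two endpoint estimates $K(\delta)\leq-\tfrac{\eta}{4}\delta\log\log(1/\delta)$ and $K'(\delta)\leq-\tfrac{\eta}{4}\log\log(1/\delta)<-2\rho$, after absorbing $H(\rho\delta/\nu)\nu\leq H(\rho)\delta+2\rho(\nu-\delta)$. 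You instead split off a fixed window $[\nu_0,1]$, where the divergence of $\psi_{int}$ trivially dominates the bounded $\psi_{com}$, and on the critical window $[\delta,\nu_0]$ you replace concavity by the elementary pointwise monotone bounds $\log\log(1/\nu)\leq\log\log(1/\delta)$ and $\log\frac{\nu}{C^2\rho_*\delta}\geq(1+\eta)\log\log(1/\delta)-2\log C$, valid for every $\nu\geq\delta$. This buys you a proof that dispenses with the concavity lemma entirely, at the price of having to verify by hand that every remainder is $o(\nu\log\log(1/\delta))$ uniformly on the window --- which you correctly flag as the only delicate point, and which does check out (e.g.\ $\rho_*\delta\log\frac{1}{\rho_*\delta}=O(\delta\log(1/\delta)^{-\eta})$ and the $H$-identity gap is $O(\rho_*\delta\log(1/\delta))$, both negligible even at the worst case $\nu=\delta$). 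The paper's route concentrates all analysis at the single point $\nu=\delta$ but pays with the auxiliary convexity computation; both arguments live off the same $(1+\eta)$ margin. One simplification you should adopt from the paper: for $C>1$ it uses the monotonicity of the external-angle integral in $C$ to get $\psi_{ext}^{C}\geq\psi_{ext}^{C=1}$ exactly, hence $H(\nu)-\psi_{ext}\leq\Omega(\nu)$ with no error term at all; this replaces your $\nu'$-versus-$\nu$ perturbation argument and removes one of the remainders you would otherwise need to control.
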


Finally, we have the lower and upper bounds for $\rho_{N}(\delta,
C)$, which shows the scaling bounds for $\rho_{N}(\delta,
C)$ as a function of $C$.
\begin{lemma}
When $C\geq 1$, for any fixed  $\delta>0$,
\begin{equation}
\Omega(\frac{1}{C^2})\leq \rho_{N}(\delta,C) \leq \frac{1}{C+1},
\end{equation}
where $\Omega(\frac{1}{C^2})\leq \rho_{N}(\delta, C)$ means that
there exists a constant $\iota (\delta)$,
\begin{equation*}
\frac{\iota (\delta)}{C^2} \leq \rho_{N}(\delta, C), ~~~\text{as}~~C \rightarrow \infty,
\end{equation*}
where we can take $\iota(\delta)=\rho_{N}(\delta, 1)$.
\label{lemma:Csyp}
\end{lemma}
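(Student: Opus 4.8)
The plan is to prove the two inequalities by completely different arguments: the upper bound $\rho_{N}(\delta,C)\le \frac{1}{C+1}$ is a \emph{deterministic} necessary condition that holds for every admissible null space and forces the analytic threshold down, while the lower bound $\rho_{N}(\delta,C)\ge \rho_{N}(\delta,1)/C^{2}$ comes from a term-by-term comparison of the three angle exponents at parameter $C$ against those at parameter $1$, carried out at matched $\nu$.

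For the upper bound, the key observation is purely dimensional: for \emph{any} subspace $Z\subseteq\mathbb{R}^{n}$ of dimension $n-m$ and any coordinate set $T$ with $|T|=m+1$, we have $\dim Z+\dim\mathbb{R}^{T}=(n-m)+(m+1)=n+1>n$, so $Z\cap\mathbb{R}^{T}\neq\{0\}$. Fix such a $T$ and choose a nonzero $\w\in Z$ supported on $T$, with actual support of size $s\le m+1$. If $s\le k$, then the $k$ largest-magnitude coordinates already capture all of $\w$, so $\|\w_{\overline{K}}\|_{1}=0<C\|\w_{K}\|_{1}$ and (\ref{eq:thmeq}) fails outright. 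Otherwise, take $K$ to be the $k$ largest-magnitude coordinates; since the average of the $k$ largest entries is at least the overall average, the top-$k$ coordinates carry at least a fraction $k/s\ge k/(m+1)$ of $\|\w\|_{1}$. Writing $S=\|\w_{K}\|_{1}$ and $M=\|\w\|_{1}$, the requirement $C\,S\le M-S$ in (\ref{eq:thmeq}) is exactly $S/M\le 1/(C+1)$, so (\ref{eq:thmeq}) can hold for this $\w$ only if $k/(m+1)\le 1/(C+1)$, i.e. $\rho=k/m\le \frac{1}{C+1}(1+\tfrac1m)$. Letting $n\to\infty$ (hence $m\to\infty$) gives the asymptotic necessary condition $\rho\le \frac{1}{C+1}$. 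Because a net exponent $\psi_{net}<0$ guarantees (\ref{eq:thmeq}) with overwhelming probability, whereas this deterministic argument shows (\ref{eq:thmeq}) must fail once $\rho>\frac{1}{C+1}$, the analytic threshold cannot exceed this value: $\rho_{N}(\delta,C)\le \frac{1}{C+1}$.

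For the lower bound I would prove the stronger pointwise statement that for every $\nu\in[\delta,1]$ and every $C\ge 1$,
\begin{equation*}
\psi_{net}^{(C)}(\nu;\rho_{1}/C^{2},\delta)\le \psi_{net}^{(1)}(\nu;\rho_{1},\delta),
\end{equation*}
after which the conclusion is immediate: if $\rho_{1}<\rho_{N}(\delta,1)$ then the right side is negative on $[\delta,1]$ by continuity, so the left side is too, making $\rho_{1}/C^{2}$ feasible in the definition of $\rho_{N}(\delta,C)$; letting $\rho_{1}\uparrow\rho_{N}(\delta,1)$ yields $\rho_{N}(\delta,C)\ge \rho_{N}(\delta,1)/C^{2}$ (in fact for all $C\ge1$, not merely as $C\to\infty$). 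The pointwise inequality is verified one exponent at a time under the substitution $\rho=\rho_{1}/C^{2}$. The combinatorial exponent $\psi_{com}$ depends only on $\rho\delta$ and is increasing in it over the relevant small range (equivalently $\binom{l}{k}$ is increasing there, since $k$ stays well below $l/2$), so shrinking $\rho_{1}$ to $\rho_{1}/C^{2}$ can only decrease $\psi_{com}$. For the internal exponent the controlling quantity is $\gamma'=\frac{C^{2}\rho\delta}{(C^{2}-1)\rho\delta+\nu}$, which at $\rho=\rho_{1}/C^{2}$ equals $\frac{\rho_{1}\delta}{(1-C^{-2})\rho_{1}\delta+\nu}\le \frac{\rho_{1}\delta}{\nu}$, its $C=1$, $\rho=\rho_{1}$ value; since $\xi_{\gamma'}(y_{\gamma'})$ is decreasing in $\gamma'$ (the coefficient $\tfrac{1-\gamma'}{\gamma'}$ in (\ref{eq:intexpfun}) is) while the prefactor $(\nu-\rho\delta)$ only grows as $\rho$ shrinks, $\psi_{int}$ can only increase. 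Finally the external exponent is governed by $\nu'=(C^{2}-1)\rho\delta+\nu$, which at $\rho=\rho_{1}/C^{2}$ equals $(1-C^{-2})\rho_{1}\delta+\nu\ge \nu$, its $C=1$ value; differentiating $\psi_{ext}(\nu)=-(1-\nu)\log G(x_{\nu})+\nu x_{\nu}^{2}$ and using the defining relation $(1-\nu)g(x_{\nu})=2\nu' x_{\nu}G(x_{\nu})$ gives $\frac{d\psi_{ext}}{d\nu'}=2x_{\nu}(\nu-\nu')\frac{dx_{\nu}}{d\nu'}\ge 0$, so $\psi_{ext}$ also increases. Summing the three comparisons gives the displayed inequality.

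I expect the main obstacle to be the external-angle comparison. Unlike the $C=1$ case, $x_{\nu}$ is no longer the unconstrained minimizer of $-(1-\nu)\log G(x)+\nu x^{2}$, so one cannot appeal to an envelope argument and must instead extract the sign of $\frac{d\psi_{ext}}{d\nu'}$ directly from the stationarity equation defining $x_{\nu}$, as sketched above; getting the sign of $2x_{\nu}(\nu-\nu')\frac{dx_{\nu}}{d\nu'}$ right requires using $\nu'\ge\nu$ together with $\frac{dx_{\nu}}{d\nu'}<0$. A secondary technical point is pinning down the monotonicity of $\psi_{com}$ in $\rho\delta$ on the precise interval $[\rho_{1}\delta/C^{2},\rho_{1}\delta]$; once these two monotonicities are established the remaining steps are routine, and the matched constant $\iota(\delta)=\rho_{N}(\delta,1)$ falls out automatically.
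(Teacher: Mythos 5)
Your proof is correct and follows essentially the same two-part strategy as the paper: a deterministic pigeonhole argument for the upper bound, and a term-by-term comparison of $\psi_{com}$, $\psi_{int}$ and $\psi_{ext}$ at $\rho=\rho_{1}/C^{2}$ against their $C=1$, $\rho=\rho_{1}$ values for the lower bound. Two local differences are worth recording. For the upper bound the paper argues in one line that the top $\rho_{N}$-fraction of the coordinates of a null-space vector must carry at least a $\rho_{N}$-fraction of its $\ell_{1}$ norm; applied to a generic $\w$ this only controls the coordinate fraction $k/n=\rho\delta$ and would literally yield $\rho_{N}\delta\le\frac{1}{C+1}$. Your refinement --- intersecting the $(n-m)$-dimensional null space with an $(m+1)$-coordinate subspace to produce a $\w$ of support at most $m+1$, so that the top $k$ entries carry fraction at least $k/(m+1)\to\rho$ --- is exactly what is needed to obtain the stated bound on $\rho_{N}$ itself, so your version is the more careful one. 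For the lower bound the only divergence is the external-angle step: the paper compares the integral formula (\ref{eq:cncsex}) directly, observing that $\gamma(G,\text{SP})$ is decreasing in both $C$ and $k$ and hence dominated by its $C=1$ value, which gives $\psi_{ext}^{(C)}\ge\psi_{ext}^{(1)}$ with no calculus on $x_{\nu}$; your differentiation of the stationarity equation reaches the same monotonicity in $\nu'$ and is also valid (and if one uses the Laplace-method form $\psi_{ext}=\nu'x_{\nu}^{2}-(1-\nu)\log G(x_{\nu})$, the envelope identity $d\psi_{ext}/d\nu'=x_{\nu}^{2}\ge 0$ makes this step immediate). The combinatorial and internal-angle comparisons coincide with the paper's.
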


\section{Deriving the External Angle Exponents}
\label{sec:bndexnangle}

In the previous section, we described how to compute the external and internal angle exponents, and we will give the derivations which justify the computations of the two exponents. First, we start justifying the computation of $\psi_{ext}$ given in Section \ref{sec:porexp}.
\begin{lemma}
Fix $\delta$, $\epsilon>0$
  \begin{equation}
  n^{-1} \log(\gamma(G, \text{SP}))<- \psi_{net}(l/n)+\epsilon_{1},
  \end{equation}
uniformly in $l \geq \delta n$, when $n$ is large enough.
\label{lemma:Grassextasy}
\end{lemma}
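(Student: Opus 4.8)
The plan is to evaluate the external angle integral (\ref{eq:cncsex}) by Laplace's method and to identify its exponential growth rate with $-\psi_{ext}(\nu)$, where $\nu=l/n$ (the exponent of this section). First I would reduce the integral to a clean exponential form. Writing the inner integral in terms of the error function $G$ of (\ref{eq:Grasserf}) and setting $b=C\sqrt{k+(l-k)/C^2}$, observe that $b^2=(C^2-1)k+l=\nu' n$ with $\nu'=(C^2-1)\rho\delta+\nu$. After cancelling the constants, $\frac{2^{n-l}}{\sqrt{\pi}^{\,n-l+1}}(\frac{\sqrt{\pi}}{2})^{n-l}=\frac{1}{\sqrt{\pi}}$, and substituting $x=bt$, the external angle becomes
\begin{equation*}
\gamma(G,\text{SP})=\frac{b}{\sqrt{\pi}}\int_{0}^{\infty}\exp\!\left(n\,\phi(t)\right)dt,\qquad \phi(t)=-\nu' t^2+(1-\nu)\log G(t),
\end{equation*}
using $n-l=(1-\nu)n$.

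Next I would apply Laplace's method to $\int_0^\infty e^{n\phi(t)}\,dt$. The function $\phi$ is smooth on $(0,\infty)$, tends to $-\infty$ as $t\to 0^+$ (since $\log G(t)\to-\infty$) and as $t\to\infty$ (because of the $-\nu' t^2$ term), so it attains its maximum at an interior critical point. Setting $\phi'(t)=-2\nu' t+(1-\nu)g(t)/G(t)=0$ yields $\frac{2tG(t)}{g(t)}=\frac{1-\nu}{\nu'}$, which is exactly the defining equation (\ref{eq:external}) for $x_\nu$; the strict monotonicity of $2xG(x)/g(x)$ already recorded in the text makes this critical point unique and the global maximizer. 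Evaluating gives $\max_t\phi(t)=\phi(x_\nu)=-\bigl[(1-\nu)\log(1/G(x_\nu))+\nu' x_\nu^2\bigr]=-\psi_{ext}(\nu)$. The prefactor $b/\sqrt{\pi}=\sqrt{\nu' n}/\sqrt{\pi}$ and the Gaussian Laplace correction of order $n^{-1/2}$ contribute only $O(\log n)$ to $\log\gamma(G,\text{SP})$, so they vanish after dividing by $n$, yielding $n^{-1}\log\gamma(G,\text{SP})\to -\psi_{ext}(\nu)$.

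The remaining, and main, difficulty is uniformity over the whole range $l\geq\delta n$, i.e. $\nu\in[\delta,1]$, with a single $n_0$ absorbing the error into $\epsilon_1$. I would prove a uniform upper Laplace bound, $\int_0^\infty e^{n\phi(t)}\,dt\leq e^{n\phi(x_\nu)}\int_0^\infty e^{-cn(t-x_\nu)^2}\,dt$ after a uniform lower bound $c>0$ on $-\phi''$ near the maximizer, together with a matching lower bound obtained by restricting the integral to a fixed neighborhood of $x_\nu$. Since $\nu'$, $x_\nu$, $G(x_\nu)$ and $\phi''(x_\nu)$ are continuous and bounded away from their degenerate values on the compact interval $\nu\in[\delta,1]$, the constants in both bounds can be chosen uniformly in $l$. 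The one genuinely delicate regime is $\nu\to 1$, where $n-l$ may fail to grow linearly in $n$ and the Laplace heuristic degrades; there, however, $\psi_{ext}(1)=0$ and the claim reduces to $n^{-1}\log\gamma(G,\text{SP})<\epsilon_1$, which holds trivially because an external angle is a probability bounded by $1$, so this endpoint is handled separately and costs nothing.
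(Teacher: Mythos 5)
Your proposal is correct and follows essentially the same route as the paper: rewrite (\ref{eq:cncsex}) in the exponential form $\sqrt{n\nu'/\pi}\int_0^\infty e^{-n(\nu' t^2-(1-\nu)\log G(t))}\,dt$, apply Laplace's method with the maximizer identified as the $x_\nu$ of (\ref{eq:external}), obtain uniformity over $\nu\in[\delta,1-\eta]$ from bounds on the derivatives of $\psi_{\rho,\delta,\nu}$ on that compact range, and dispose of the regime $\nu$ near $1$ separately via $\gamma(G,\text{SP})\leq 1$ together with $\psi_{ext}(\nu)\to 0$. Your uniform strong-concavity bound (a global lower bound $2\nu'\geq 2\delta$ on $\psi''_{\nu}$) is a slightly leaner way to get the one-sided Laplace estimate than the paper's third-derivative-controlled Lemma \ref{lemma:uniform}, but it is the same method in substance.
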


\begin{proof}
In the appendix, we derived the explicit integral formula for
the external angle:
\begin{equation}
\gamma(G, \text{SP})=\frac{2^{n-l}}{{\sqrt{\pi}}^{n-l+1}}
\int_{0}^{\infty}e^{-x^2}(\int_{0}^{\frac{x}{C\sqrt{k+\frac{l-k}{C^2}}}}
e^{-y^2} \,dy)^{n-l}\,dx.
\end{equation}

After a changing of integral variables, we have
\begin{eqnarray}
&&\gamma(G, \text{SP})=\sqrt{\frac{(C^2-1)k+l}{{\pi}}}\\ \nonumber
&&\int_{0}^{\infty}e^{-((C^2-1)k+l)x^2}(\frac{2}{\sqrt{\pi}}
\int_{0}^{x} e^{-y^2} \,dy)^{n-l}\,dx.
\end{eqnarray}

Inside the parenthesis is the error function $G$ from
(\ref{eq:Grasserf}). Let $\nu=\frac{l}{n}$, $\nu'=(C^2-1)\rho \delta +\nu$
then the integral formula can be written as
\begin{equation}
\sqrt{\frac{n \nu'}{{\pi}}} \int_{0}^{\infty}e^{-n \nu'
x^2+n(1-\nu)\log(G(x))} \,dx. \label{eq:fuun}
\end{equation}

To look at the asymptotic behavior of (\ref{eq:fuun}), following the same methodology as in \cite{Donoho06}, we use Laplace's method. We define
\begin{equation}
f_{\rho,\delta, \nu,n}(y)=e^{-n \psi_{\rho,\delta, \nu}(y)} \cdot
\sqrt{\frac{n \nu'}{{\pi}}} \label{eq:Grassfnnu}
\end{equation}
with
\begin{equation*}
\psi_{\rho,\delta, \nu}(y)=\nu' y^2-(1-\nu)\log(G(y))
\end{equation*}

We will develop expressions for the second and third derivatives of the function $\psi_{\rho,\delta, \nu}$. Applying Laplace's method to $\psi_{\rho,\delta, \nu}$ gives the following lemma, where we will defer the proof to later parts of the paper.

\begin{lemma}
For $\nu \in (0,1)$, let $x_{\nu}$ denote the minimizer of
$\psi_{\rho,\delta, \nu}$. Then
\begin{equation*}
\int_{0}^{\infty} f_{\rho,\delta, \nu,n}(x) \,dx \leq
e^{-n\psi_{\rho,\delta, \nu}(x_{\nu})(1+R_n(\nu))},
\end{equation*}
where for $\delta, \eta>0$,
\begin{equation*}
\sup_{\nu \in [\delta, 1-\eta]} R_n(\nu)=o(1)~~as~~n\rightarrow
\infty,
\end{equation*}
and $x_{\nu}$ is exactly the same $x_{\nu}$ defined
earlier in (\ref{eq:external}).
\label{lemma:Grassextlap}
\end{lemma}

Recall that the defined exponent $\psi_{ext}$ is given by
\begin{equation}
\psi_{ext}(\nu;\rho, \delta)=\psi_{\rho, \delta, \nu}(x_{\nu}).
\label{eq:Grassbigsmall}
\end{equation}

Using the definition of $\psi_{\rho, \delta, \nu}(x_{\nu})$ and (\ref{eq:Grassbigsmall}), it is not hard to see, as $\nu \rightarrow 1$,
$x_{\nu} \rightarrow 0$ and $\psi_{ext}(\nu) \rightarrow 0$.
For any given $\epsilon_{1} >0$ in Lemma \ref{lemma:Grassextasy}, there is a largest $\nu_{\epsilon_{1}}<1$ with
$\psi_{ext}(\nu_{\epsilon_{1}}) =\epsilon_{1}$. Note that $\gamma(G,\text{SP}) \leq
1$, so for $l>\nu_{\epsilon_{1}}n$,
\begin{equation*}
n^{-1}\log(\gamma(G,\text{SP}))\leq 0< -\psi_{ext}(\nu)+\epsilon_{1},
\end{equation*}
for $n \geq 1$. Consider now $l \in [\delta n, \nu_{\epsilon_{1}} n]$,
based on (\ref{eq:fuun}),
\begin{equation*}
\gamma(G,\text{SP})=\int_{0}^{\infty} f_{\rho,\delta, \nu,n}(y)
\,dx.
\end{equation*}
From Lemma \ref{lemma:Grassextlap}, as $n \rightarrow \infty$, uniformly for $l \in [\delta n, \nu_{\epsilon_{1}} n]$,
\begin{equation*}
n^{-1} \log(\gamma(G,\text{SP}))=\psi_{\nu}(x_{\nu})+o(1),
\end{equation*}
where we have abbreviated $\psi_{\rho, \delta, \nu}(\cdot)$ to
$\psi_{\nu}(\cdot)$ for fixed $\rho$ and $\delta$.

So from the identity (\ref{eq:Grassbigsmall}), we get
\begin{equation}
n^{-1}\log(\gamma(G,\text{SP})) \leq -\psi_{net}(l/n)+o(1).
\end{equation}
Then Lemma \ref{lemma:Grassextasy} follows.

\end{proof}

Now it remains to prove the uniformity result for Laplace's method in Lemma
\ref{lemma:Grassextlap}. We will follow the same line of reasoning
given in \cite{Donoho06}. First, we state explicitly the key lemma
from \cite{Donoho06}.

\begin{lemma}
\cite{Donoho06} Let $\psi(x)$ be convex in $x$ and belong to the
differentiability class $\mathbb{C}^2$ (the second derivative exists and is
continuous) on an interval $I$ and suppose that it takes its minimum
at an interior point $x_0 \in I$, where $\psi''(x_{0})>0$ and that in a
vicinity $(x_0-\epsilon, x_0+\epsilon)$ of $x_0$:
\begin{equation}
|\psi''(x)-\psi''(x_0)|\leq D|\psi''(x_0)||x-x_0|.
\end{equation}

Let $\overline{\psi}$ be the quadratic approximation
$\psi(x_0)+\psi''(x_0)(x-x_0)^2/2$. Then
\begin{equation*}
\int_{I}{\exp(-n\psi(x))}\,dx \leq \int_{-\infty}^{\infty} \exp{(-n
\overline{\psi}(x))} \,dx\cdot(S_{1,n}+S_{2,n})
\end{equation*}
where
\begin{equation*}
S_{1,n}=\exp(n\psi''(x_0)D\epsilon^3/6)
\end{equation*}
\begin{equation*}
S_{2,n}=2/\left(n\epsilon(2\pi|\psi''(0)|)^{\frac{1}{2}}(1-\frac{1}{2}D\epsilon^2)\right)
\end{equation*}
\label{lemma:uniform}
\end{lemma}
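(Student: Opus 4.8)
The plan is to prove this by the classical Laplace method with explicit, non-asymptotic error control, splitting the domain of integration at the scale $\epsilon$ around the minimizer. First I would write $I = I_{\mathrm{near}} \cup I_{\mathrm{far}}$, where $I_{\mathrm{near}} = (x_0-\epsilon, x_0+\epsilon) \cap I$ and $I_{\mathrm{far}} = I \setminus I_{\mathrm{near}}$, and estimate the two contributions separately. The design of the bound is that the near part reproduces the Gaussian integral up to the multiplicative factor $S_{1,n}$, while the far part contributes the additive tail factor $S_{2,n}$; adding the two gives exactly $\int_{-\infty}^{\infty} e^{-n\overline{\psi}(x)}\,dx \cdot (S_{1,n}+S_{2,n})$.

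For $I_{\mathrm{near}}$ I would invoke Taylor's theorem with integral remainder. Since $x_0$ is an interior minimizer, $\psi'(x_0)=0$, and therefore $\psi(x)-\overline{\psi}(x) = \int_{x_0}^{x} (x-t)\,(\psi''(t)-\psi''(x_0))\,dt$. Applying the hypothesis $|\psi''(t)-\psi''(x_0)| \leq D\,\psi''(x_0)\,|t-x_0|$ and evaluating the elementary integral $\left|\int_{x_0}^{x}(x-t)|t-x_0|\,dt\right| = |x-x_0|^3/6$ yields the uniform estimate $|\psi(x)-\overline{\psi}(x)| \leq D\,\psi''(x_0)\,\epsilon^3/6$ throughout $I_{\mathrm{near}}$. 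Exponentiating gives $e^{-n\psi(x)} \leq S_{1,n}\, e^{-n\overline{\psi}(x)}$ on $I_{\mathrm{near}}$ with $S_{1,n}=\exp(n\,\psi''(x_0)\,D\epsilon^3/6)$ exactly as stated, and enlarging the Gaussian integral from $I_{\mathrm{near}}$ to all of $\mathbb{R}$ produces $\int_{I_{\mathrm{near}}} e^{-n\psi} \leq S_{1,n}\int_{-\infty}^{\infty} e^{-n\overline{\psi}}$.

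For $I_{\mathrm{far}}$ I would use convexity. Integrating the one-sided Lipschitz bound $\psi''(t) \geq \psi''(x_0)\,(1-D(t-x_0))$ from $x_0$ to $x_0+\epsilon$ furnishes a strictly positive slope $\psi'(x_0+\epsilon) \geq \psi''(x_0)\,\epsilon\,(1-\tfrac{1}{2}D\epsilon)$ (provided $\epsilon$ is small enough relative to $D$ so the factor is positive), and at the same point the quadratic lower bound $\psi(x_0+\epsilon) \geq \psi(x_0)+\tfrac12\psi''(x_0)\epsilon^2(1-O(D\epsilon))$. Convexity then gives the supporting-line bound $\psi(x) \geq \psi(x_0+\epsilon)+\psi'(x_0+\epsilon)(x-x_0-\epsilon)$ for $x \geq x_0+\epsilon$, so the right tail integrates to at most $e^{-n\psi(x_0+\epsilon)}/(n\,\psi'(x_0+\epsilon))$; the symmetric argument on $x \leq x_0-\epsilon$ accounts for the factor $2$. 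Dividing this tail estimate by the main Gaussian mass $\int_{-\infty}^{\infty} e^{-n\overline{\psi}} = e^{-n\psi(x_0)}\sqrt{2\pi/(n\psi''(x_0))}$ and organizing the constants gives precisely the form of $S_{2,n}=2/\big(n\epsilon(2\pi|\psi''(x_0)|)^{1/2}(1-\tfrac12 D\epsilon^2)\big)$, as in \cite{Donoho06}.

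I expect the main obstacle to be the far-region bookkeeping rather than the near-region estimate, which is routine. The delicate points are: (i) making the convexity tail bound come out with the correct power of $n$ and the correct denominator factor — the exponentially small quadratic gain $\exp(-\tfrac{n}{2}\psi''(x_0)\epsilon^2(1-O(D\epsilon)))$ at the boundary must be absorbed against the polynomial order so that the tail fits inside $S_{2,n}$ for $n$ large; and (ii) checking that the slope lower bound $\psi'(x_0+\epsilon)>0$ holds, which restricts $\epsilon$ to the range where $1-\tfrac12 D\epsilon>0$, and that $x_0\pm\epsilon$ stay inside the differentiability interval $I$ where the $\mathbb{C}^2$ hypothesis and the Lipschitz bound on $\psi''$ are available. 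Once these are pinned down, summing the $I_{\mathrm{near}}$ and $I_{\mathrm{far}}$ estimates yields the claimed inequality.
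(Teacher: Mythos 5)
The paper does not actually prove this lemma: it is imported verbatim from \cite{Donoho06} with no argument given, so there is no ``paper proof'' to match against. Your two--region Laplace decomposition is the standard (and surely the intended) route, and the near--region half is airtight: Taylor with integral remainder, $\psi'(x_0)=0$, the elementary identity $\bigl|\int_{x_0}^{x}(x-t)|t-x_0|\,dt\bigr|=|x-x_0|^3/6$, and the Lipschitz hypothesis on $\psi''$ give $|\psi-\overline{\psi}|\leq D\psi''(x_0)\epsilon^3/6$ on $(x_0-\epsilon,x_0+\epsilon)$, hence the factor $S_{1,n}$ exactly.

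The far--region half, however, does not ``give precisely the form of $S_{2,n}$'' as printed. Carrying out your supporting-line bound: $\psi'(x_0+\epsilon)\geq\psi''(x_0)\,\epsilon\,(1-\tfrac12 D\epsilon)$, the tail is at most $2e^{-n\psi(x_0)}/\bigl(n\psi''(x_0)\epsilon(1-\tfrac12 D\epsilon)\bigr)$ after dropping the boundary gain $e^{-n(\psi(x_0\pm\epsilon)-\psi(x_0))}$, and dividing by $\int e^{-n\overline{\psi}}=e^{-n\psi(x_0)}\sqrt{2\pi/(n\psi''(x_0))}$ yields $2/\bigl(\sqrt{n}\,\epsilon\,(2\pi\psi''(x_0))^{1/2}(1-\tfrac12 D\epsilon)\bigr)$ --- note $\sqrt{n}$, not $n$, and $1-\tfrac12 D\epsilon$, not $1-\tfrac12 D\epsilon^2$. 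The printed $S_{2,n}$ is smaller by a factor of order $\sqrt{n}$, and it is not recoverable unconditionally: a convex $\psi$ that is exactly quadratic on $(x_0-\epsilon,x_0+\epsilon)$ and nearly linear outside, with $\epsilon\sim n^{-1/2}$, violates the inequality with the printed constants (your own hedge about needing the boundary exponential to ``absorb'' the discrepancy ``for $n$ large'' is the honest reading, and that absorption requires $n\epsilon^2\gtrsim\log n$). Almost certainly the statement in the paper is a mis-transcription --- $(n\epsilon^2\cdot 2\pi\psi''(x_0))^{1/2}=\sqrt{n}\,\epsilon\,(2\pi\psi''(x_0))^{1/2}$ matches what your argument produces, and $\psi''(0)$ should be $\psi''(x_0)$. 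So: your proof is correct for the corrected constants, which is what the downstream application uses (there $\epsilon_n=n^{-2/5}$, so either version of $S_{2,n}$ is $o(1)$), but you should not claim to have derived the $S_{2,n}$ as printed, and you should state explicitly which version you are proving.
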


The constant $D$ in this lemma can be a scaled third derivative,
since if $\psi$ is $\mathbb{C}^3$, we can take
\begin{equation*}
D=\sup_{(x_0-\epsilon,x_0+\epsilon)} \psi^{(3)}(x)/ \psi''(x).
\end{equation*}

Based on Lemma \ref{lemma:uniform}, we can derive the uniformity in
Lemma \ref{lemma:Grassextlap}. In fact, if we pick
$\epsilon_{n}=n^{-\frac{2}{5}}$ and let $n \geq n_1(\psi''(x_0),
D)$, where $n_1(\psi''(x_0),D)$ is a number depending only on
$\psi''(x_0)$ and $D$, we can use
\begin{equation}
\int_{I} e^{-n \psi(x)}\,dx \leq  \int_{-\infty}^{\infty} e^{-n
\overline{\psi}(x)} \,dx \cdot (1+o(1))
\end{equation}

Here the term $o(1)$  is uniform over any collection of convex
functions with a given $\psi''(x_0)$ and $D$. From here to the end of this section, we will abbreviate $\psi_{\rho, \delta,\nu}$ as $\psi_{\nu}$ for the fixed parameter $\rho$ and $\delta$.

Now we consider the collection of convex functions $\psi_{\nu}~~(\nu \in [\delta,
1-\eta])~~$in Lemma \ref{lemma:Grassextlap}. Following the
derivations in \cite{Donoho06}, if we can show that there exists a
certain $\epsilon>0$ so that $\psi''({x_{0}})$ and $D$ is bounded for
the function $\psi_{\nu}(x)$ uniformly over the range $\nu \in
[\delta, 1-\eta]$, then Lemma \ref{lemma:Grassextlap} holds. Indeed, this is true based on Lemma \ref{lemma:derivn} as given below.

\begin{lemma}
The function $\psi_{\nu}(\cdot)$ is smooth with its second
derivative at $x_{\nu}$
\begin{equation}
\psi_{\nu}''(x_\nu)=2\nu'+4x_{\nu}^2\nu'+\frac{4x_{\nu}^2\nu'^2}{1-\nu}
\label{eq:2nd}
\end{equation}
and its third derivative at $x_{\nu}$
\begin{equation}
\psi_{\nu}^{(3)}(x_{\nu})=(1-\nu)\left((2-4x_{\nu}^2)z-6x_{\nu}z^2-2z^3\right)
\label{eq:3rd}
\end{equation}
where $z=z_{\nu}=2\nu' x_{\nu}/(1-\nu)$. We have
\begin{equation*}
0<2\delta \leq \inf_{\nu \in [\delta,1]} \psi''_{\nu}(x_{\nu}),
\end{equation*}
and
\begin{equation*}
\sup_{\nu \in [\delta,1-\eta]} \psi''_{\nu}(x_{\nu}) < \infty.
\end{equation*}
Moreover, for small enough $\epsilon >0$, %$\epsilon \geq \min{(\delta/2,\eta/2)}$,
the ratio
\begin{equation*}
D(\epsilon;\delta, \eta)=\sup_{\nu\in [\delta,
1-\eta]}{\sup_{|x-x_{\nu}|<\epsilon}\left
|{\psi_{\nu}^{(3)}(x)/\psi_{\nu}''(x) }\right|}
\end{equation*}
is finite. \label{lemma:derivn}
\end{lemma}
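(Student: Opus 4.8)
The plan is to obtain the two derivative formulas by direct differentiation of $\psi_{\nu}(x)=\nu'x^{2}-(1-\nu)\log(G(x))$, in which the only non-polynomial ingredient is the logarithmic derivative $R(x):=g(x)/G(x)$. Since $G'(x)=g(x)$ and $g'(x)=-2x\,g(x)$, I would compute once and for all
\[
R'(x)=\frac{g'(x)}{G(x)}-\frac{g(x)^{2}}{G(x)^{2}}=-2x\,R(x)-R(x)^{2}.
\]
Then $\psi_{\nu}'(x)=2\nu'x-(1-\nu)R(x)$ and $\psi_{\nu}''(x)=2\nu'+(1-\nu)(2x\,R(x)+R(x)^{2})$, while one further differentiation (again using the formula for $R'$) gives $\psi_{\nu}^{(3)}(x)=(1-\nu)((2-4x^{2})R(x)-6x\,R(x)^{2}-2R(x)^{3})$. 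These general-$x$ expressions are manifestly smooth wherever $G(x)>0$, i.e.\ on $(0,\infty)$, establishing the claimed smoothness.

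To specialize to $x=x_{\nu}$, I would invoke stationarity $\psi_{\nu}'(x_{\nu})=0$, which reads $2\nu'x_{\nu}=(1-\nu)R(x_{\nu})$ and is exactly the defining equation (\ref{eq:external}) for $x_{\nu}$ after multiplying through by $G(x_\nu)/\nu'$. Hence $R(x_{\nu})=2\nu'x_{\nu}/(1-\nu)=z_{\nu}$, the quantity named in the statement. Substituting $R(x_{\nu})=z_{\nu}$ together with $(1-\nu)z_{\nu}=2\nu'x_{\nu}$ into the general second-derivative expression collapses the bracketed terms into $4\nu'x_{\nu}^{2}+4\nu'^{2}x_{\nu}^{2}/(1-\nu)$, giving (\ref{eq:2nd}); the same substitution into the third-derivative expression reproduces (\ref{eq:3rd}) verbatim. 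This part is routine algebra.

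The three inequalities are estimates on these closed forms. For the lower bound I note that all three summands of (\ref{eq:2nd}) are nonnegative (as $\nu<1$, $\nu'>0$), and the first alone satisfies $2\nu'\ge 2\nu\ge 2\delta$ because $\nu'=(C^{2}-1)\rho\delta+\nu\ge\nu$ and $\nu\ge\delta$; this yields $\inf_{\nu\in[\delta,1]}\psi_{\nu}''(x_{\nu})\ge 2\delta$. For the upper bound over $\nu\in[\delta,1-\eta]$ the factor $1-\nu$ is bounded below by $\eta$, and $x_{\nu}$ stays in the compact range $[x_{1-\eta},x_{\delta}]$ since, as recorded just after (\ref{eq:external}), $x_{\nu}$ is a smooth decreasing function of $\nu$; consequently every factor in (\ref{eq:2nd}) is bounded and $\sup_{\nu\in[\delta,1-\eta]}\psi_{\nu}''(x_{\nu})<\infty$.

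The ratio bound is the one genuine point of care, since I must control $\psi_{\nu}^{(3)}(x)/\psi_{\nu}''(x)$ not merely at $x_{\nu}$ but on the whole window $|x-x_{\nu}|<\epsilon$, uniformly in $\nu$. The denominator is easy: from $\psi_{\nu}''(x)=2\nu'+(1-\nu)(2x\,R(x)+R(x)^{2})\ge 2\nu'\ge 2\delta$ for every $x>0$, it is bounded below by $2\delta$ throughout the window as soon as the window avoids $x\le 0$. The numerator involves $R(x)$, which blows up like $1/x$ as $x\to 0^{+}$ because $G(x)\sim(2/\sqrt{\pi})x$; this is precisely why the restriction $\nu\le 1-\eta$ is needed. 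Since $x_{\nu}\ge x_{1-\eta}>0$ on $[\delta,1-\eta]$, choosing $\epsilon<x_{1-\eta}/2$ forces $x>x_{1-\eta}/2>0$, keeping $R(x)$ and the polynomial-in-$x$ factors bounded on the compact set $x\in[x_{1-\eta}-\epsilon,\,x_{\delta}+\epsilon]$; combined with $1-\nu\le 1$ this bounds $|\psi_{\nu}^{(3)}(x)|$ uniformly, and dividing by $2\delta$ gives $D(\epsilon;\delta,\eta)<\infty$. The main obstacle is thus the endpoint behaviour at $x=0$, which the choice of a small enough $\epsilon$ (made possible by $\eta>0$) is designed to evade.
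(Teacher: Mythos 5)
Your proposal is correct and follows essentially the same route as the paper: direct differentiation of $\psi_{\nu}$ via the logarithmic derivative $g/G$, specialization at the stationary point $x_{\nu}$ using $g(x_{\nu})/G(x_{\nu})=z_{\nu}$, and compactness/continuity of $x_{\nu}$ on $[\delta,1-\eta]$ for the uniform bounds. Your treatment of the ratio $D$ (lower-bounding the denominator by $2\nu'\ge 2\delta$ for all $x>0$ and shrinking $\epsilon$ below $x_{1-\eta}/2$ to keep the window away from the singularity of $g/G$ at $x=0$) spells out explicitly what the paper disposes of ``by inspection,'' but it is the same argument.
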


\begin{proof}
 We can get the following first, second, third derivatives of the
 function $\psi_{\nu}(x)$:
\begin{equation*}
\psi_{\nu}'(x)=-(1-\nu)g/G+2\nu'x;
\end{equation*}
\begin{equation*}
\psi_{\nu}''(x)=-(1-\nu)(g'/G-g^2/G^2)+2\nu';
\end{equation*}
\begin{equation*}
\psi_{\nu}^{(3)}(x)=-(1-\nu)(g''/G-3g'g/G^2+2g^3/G^3);
\end{equation*}

Because $g'=(-2x)g$, $g''=(-2+4x^2)g$, and
\begin{equation*}
g(x_{\nu})/G(x_{\nu})=\frac{2\nu' x_{\nu}}{1-\nu}=z_{\nu}
\end{equation*}
at the point $x_{\nu}$, we can immediately have (\ref{eq:2nd}) and
(\ref{eq:3rd}).

Notice that $\psi_{\nu}''(x_{\nu}) \geq 2\nu'$, so it is bounded
away from zero on any interval $\nu \in [\delta, 1]$, $\delta>0$.
Also, since $x_{\nu}$ is a continuous function bounded away from
zero over $\nu$ on the interval $[\delta, 1-\eta]$ $(\delta,
\eta>0)$, we have $\psi_{\nu}''(x_{\nu})$ is also bounded above over
$[\delta, 1-\eta]$.

Now as for $\psi^{(3)}$, we note that clearly $x_{\nu}$ and $z_{\nu}$
are continuous functions on $[\delta, 1)$. And both are bounded on
the interval $\nu \in [\delta, 1-\eta]$. As a polynomial in $\nu,
x_{\nu}$ and $z_{\nu}$, $\psi_{\nu}^{(3)}$ is also bounded. If we
consider the interval $(x_{\nu}-\epsilon, x_{\nu}+\epsilon)$, the
boundness of the ratio $D(\epsilon;\delta, \eta)$ also holds
uniformly over $\nu \in [\delta, 1-\eta]$ by inspection if
$\epsilon>0$ is small enough.

\end{proof}

\section{Bounds on the Internal Angle}
\label{sec:bndinternal}

In this section, we will show how to get the internal angle decay
exponent; namely we will prove the following lemma:
\begin{lemma}
For $\epsilon>0$ and $n>n_0( \rho, \delta, \epsilon)$
\begin{equation*}
n^{-1}\log(\beta(F, G)) \leq \psi_{int}(l/n;k/l,\delta)+\epsilon,
\end{equation*}
uniformly in $l\geq \delta n$, $k\leq \rho \delta n$, $(l-k) \geq
(\nu-\delta\rho)n$.
\end{lemma}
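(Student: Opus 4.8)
The plan is to estimate $n^{-1}\log(\beta(F,G))$ by first obtaining a clean integral representation for the internal angle and then applying Laplace's method, exactly paralleling the external-angle argument just completed. Recall from equation (\ref{eq:internal}) that $\beta(F,G)$ equals $B(\tfrac{1}{1+C^2k},\,l-k)$, the normalized surface measure of a regular spherical simplex whose $m'=l-k$ vertices have pairwise inner product $\alpha'=\tfrac{1}{1+C^2k}$. So the first step is to start from the closed form
\begin{equation*}
B(\alpha',m')=\theta^{\frac{m'-1}{2}}\sqrt{(m'-1)\alpha'+1}\,\pi^{-m'/2}{\alpha'}^{-1/2}J(m',\theta),
\end{equation*}
with $\theta=(1-\alpha')/\alpha'=C^2 k$ and $J(m',\theta)$ the Fourier-type integral defined earlier. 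The goal is to rewrite $J(m',\theta)$ in a form amenable to large-deviation analysis.

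The key manipulation, following \cite{Donoho06}, is to recognize the inner integral $\int_0^\infty e^{-\theta v^2+2iv\lambda}\,dv$ as (a constant times) the characteristic-function transform associated with a half-normal random variable, so that raising it to the $m'$-th power and integrating against the Gaussian weight $e^{-\lambda^2}$ expresses $J(m',\theta)$ as the density at a prescribed point of a sum of $m'$ i.i.d. half-normal variates. Concretely, I would introduce $Y\sim HN(0,1)$ with cumulant generating function $\Lambda(s)=\tfrac{s^2}{2}+\log(2\Phi(s))$ and its Legendre transform $\Lambda^*(y)=\max_s\,(sy-\Lambda(y))$, and show that the exponential growth rate of $\beta(F,G)$ is governed by the rate function $\xi_{\gamma'}(y)=\tfrac{1-\gamma'}{\gamma'}y^2/2+\Lambda^*(y)$ evaluated at its minimizer $y_{\gamma'}$, where the parameter $\gamma'=\tfrac{\rho\delta}{\frac{C^2-1}{C^2}\rho\delta+\frac{\nu}{C^2}}$ arises precisely because $\theta=C^2k$ couples $C$ into the inner product $\alpha'$. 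Tracking the combinatorial prefactor and the $\log(2)$ term from the simplex normalization then yields the stated exponent $\psi_{int}(\nu;\rho,\delta)=\xi_{\gamma'}(y_{\gamma'})(\nu-\rho\delta)+\log(2)(\nu-\rho\delta)$ from (\ref{eq:inteptfor}).

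With the exponential rate identified, the remaining work is the uniformity statement: I would invoke the general Laplace lemma (Lemma \ref{lemma:uniform}) with a scaled-third-derivative bound $D$, just as in the external-angle proof. This requires verifying that the integrand defining $J$ has a convex exponent with second derivative bounded away from zero and third derivative controlled, uniformly over $l\geq\delta n$, $k\leq\rho\delta n$, $(l-k)\geq(\nu-\delta\rho)n$. Because $\xi_{\gamma'}$ is strictly convex and positive on $(0,\infty)$ with a unique interior minimizer $y_{\gamma'}\in(0,\sqrt{2/\pi})$, the same regularity argument used in Lemma \ref{lemma:derivn} should transfer, giving the $\epsilon$ slack and the threshold $n_0(\rho,\delta,\epsilon)$.

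The main obstacle I anticipate is twofold. First, the step that converts the oscillatory integral $J(m',\theta)$ into a genuine large-deviation rate for a sum of half-normals is delicate: one must correctly identify the ``tilting'' point at which the density is evaluated as a function of $\gamma'$, and ensure the Gaussian weight $e^{-\lambda^2}$ and the $\theta=C^2k$ scaling combine to reproduce exactly the coefficient $\tfrac{1-\gamma'}{\gamma'}$ in $\xi_{\gamma'}$. Second, and more subtly, the uniformity must hold over the full admissible range of $(k,l)$ rather than a fixed ratio; since $\gamma'$ depends on both $\nu$ and $\rho\delta$, I must confirm that $y_{\gamma'}$ stays bounded away from the endpoints $0$ and $\sqrt{2/\pi}$ uniformly, so that $\xi''_{\gamma'}(y_{\gamma'})$ is bounded below and the third-derivative ratio $D$ is finite uniformly. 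Handling the boundary case where $G$ is $n$-dimensional (flagged earlier as needing a separate internal-angle formula in Lemma \ref{lemma:internalSP}) will require a short separate argument, but as the paper notes this does not affect the dominant exponent.
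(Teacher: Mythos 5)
Your proposal follows essentially the same route as the paper: rewrite $B(\alpha',m')$ via the probabilistic identity (\ref{eq:pdffreq}) as the density at zero of $T+W_{m'}$ (an independent Gaussian, coming from the $e^{-\lambda^2}$ weight, plus a sum of $m'$ i.i.d.\ half-normals with variance set by $\theta=C^2k$), bound that density by a large-deviation integral involving $\Lambda^*$, change variables to expose $\xi_{\gamma'}$ with $\gamma'=\theta/(m'+\theta)=\frac{C^2k}{(C^2-1)k+l}$, and finish with the same uniform Laplace argument used for the external angle. The paper does exactly this (importing the mechanics from \cite{Donoho06}), including your anticipated separate treatment of the case $G=\text{SP}$ via Lemma \ref{lemma:internalSP}, so no substantive divergence.
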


Using the formula for the internal angle derived in the appendix, we know that
\begin{equation}
-n^{-1}\log(\beta(F, G))=-n^{-1} \log(B(\frac{1}{1+C^2k}, l-k)),
\label{eq:intexp}
\end{equation}
where
\begin{equation}
B(\alpha', m')=\theta^{\frac{m'-1}{2}} \sqrt{(m'-1)\alpha' +1}
\pi^{-m'/2} {\alpha'}^{-1/2}J(m',\theta),
\end{equation}
with $\theta=(1-\alpha')/\alpha'$ and
\begin{equation}
 J(m', \theta)=\frac{1}{\sqrt{\pi}} \int_{-\infty}^{\infty}(\int_{0}^{\infty} e^{-\theta v^2+2i v\lambda} \,dv )^{m'} e^{-\lambda^2} \,d\lambda.
\end{equation}

To evaluate (\ref{eq:intexp}), we need to evaluate the complex
integral in $J(m', \theta')$. A saddle point method based on contour
integration was sketched for similar integral expressions in
\cite{Vershik92}. A probabilistic method using large deviation
theory for evaluating similar integrals was developed in
\cite{Donoho06}. Both of these two methods can be applied in our
case and of course they will produce the same final results. In this paper we
will follow the probabilistic method from \cite{Donoho06} in this
paper. The basic idea is to see the integral in $J(m',\theta')$ as the convolution of $(m'+1)$ probability densities being
expressed in the Fourier domain. In \cite{Donoho06}, it took mechanical manipulations of the characteristic functions of the normal and half-normal distribution to arrive at this probabilistic method. In the appendix of this paper, we will give a way of deriving the internal angle formula which leads naturally to this probabilistic method and clearly explains its physical meaning.

More explicitly, we have the following lemma:
\begin{lemma}
let $\theta=(1-\alpha')/\alpha'$, where $\alpha'=\frac{1}{C^2k+1}$.
Let $T$ be a random variable with the $N(0,\frac{1}{2})$
distribution, and let $W_{m'}$ be a sum of $m'$ i.i.d. half normals
$U_{i} \sim HN(0,\frac{1}{2\theta})$. Let $T$ and $W_{m'}$ be
stochastically independent, and let $g_{T+W_{m'}}$ denote the
probability density function of the random variable $T+W_{m'}$. Then\footnote{In \cite{Donoho06}, the term $2^{-m'}$ was $2^{1-m'}$, but we believe that $2^{-m'}$ is the right term.}
\begin{equation}
B(\alpha',
m')=\sqrt{\frac{\alpha'(m'-1)+1}{1-\alpha'}}\cdot2^{-m'}\cdot
\sqrt{\pi} \cdot g_{T+W_{m'}}(0). \label{eq:pdffreq}
\end{equation}
\end{lemma}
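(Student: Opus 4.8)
The plan is to recognize the double integral defining $J(m',\theta)$ as a characteristic-function (Fourier-domain) expression for the density of a sum of independent random variables, and then to invoke Fourier inversion at the origin. First I would compute the inner integral $\int_{0}^{\infty} e^{-\theta v^2 + 2iv\lambda}\,dv$ and identify it, up to a constant, with the characteristic function of one of the half-normal summands. A half-normal $U_i \sim HN(0,\frac{1}{2\theta})$ has density $\frac{2\sqrt{\theta}}{\sqrt{\pi}}e^{-\theta u^2}$ on $[0,\infty)$, so its characteristic function is $\phi_U(t) = \frac{2\sqrt{\theta}}{\sqrt{\pi}}\int_{0}^{\infty} e^{-\theta u^2 + itu}\,du$. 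Setting $t = 2\lambda$ gives $\int_{0}^{\infty} e^{-\theta v^2 + 2iv\lambda}\,dv = \frac{\sqrt{\pi}}{2\sqrt{\theta}}\,\phi_U(2\lambda)$, so that raising to the $m'$-th power and using independence yields $\left(\int_{0}^{\infty} e^{-\theta v^2 + 2iv\lambda}\,dv\right)^{m'} = \left(\frac{\sqrt{\pi}}{2\sqrt{\theta}}\right)^{m'}\phi_{W_{m'}}(2\lambda)$, where $\phi_{W_{m'}} = \phi_U^{m'}$ is the characteristic function of $W_{m'} = \sum_{i=1}^{m'} U_i$.

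Next I would identify the Gaussian weight $e^{-\lambda^2}$ with the characteristic function of $T$. Since $T \sim N(0,\frac{1}{2})$ has $\phi_T(t) = e^{-t^2/4}$, we have $e^{-\lambda^2} = \phi_T(2\lambda)$. Substituting both identities into $J$ and using the independence of $T$ and $W_{m'}$ (so that $\phi_{T+W_{m'}} = \phi_T\,\phi_{W_{m'}}$) turns the integral into $J(m',\theta) = \frac{1}{\sqrt{\pi}}\left(\frac{\sqrt{\pi}}{2\sqrt{\theta}}\right)^{m'}\int_{-\infty}^{\infty} \phi_{T+W_{m'}}(2\lambda)\,d\lambda$. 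Finally I would apply the Fourier inversion formula $g_X(0) = \frac{1}{2\pi}\int_{-\infty}^{\infty} \phi_X(t)\,dt$ with $X = T+W_{m'}$; the change of variables $t = 2\lambda$ gives $\int_{-\infty}^{\infty} \phi_{T+W_{m'}}(2\lambda)\,d\lambda = \pi\, g_{T+W_{m'}}(0)$, hence $J(m',\theta) = \sqrt{\pi}\left(\frac{\sqrt{\pi}}{2\sqrt{\theta}}\right)^{m'} g_{T+W_{m'}}(0)$.

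The remaining step is pure bookkeeping: substitute this expression for $J$ into the definition $B(\alpha',m') = \theta^{(m'-1)/2}\sqrt{(m'-1)\alpha'+1}\,\pi^{-m'/2}{\alpha'}^{-1/2}J(m',\theta)$ and collect the powers of $\theta$, $\pi$, and $2$. The powers of $\theta$ combine as $\theta^{(m'-1)/2}\theta^{-m'/2} = \theta^{-1/2}$, and then $\theta^{-1/2}{\alpha'}^{-1/2} = (1-\alpha')^{-1/2}$ upon using $\theta = (1-\alpha')/\alpha'$; the powers of $\pi$ collapse to a single $\sqrt{\pi}$, and the factor $2^{-m'}$ survives. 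This reproduces exactly $B(\alpha',m') = \sqrt{\frac{(m'-1)\alpha'+1}{1-\alpha'}}\cdot 2^{-m'}\cdot\sqrt{\pi}\cdot g_{T+W_{m'}}(0)$; incidentally, the computation produces $2^{-m'}$ rather than $2^{1-m'}$, confirming the correction noted in the footnote.

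The only genuine analytic point, and thus the main obstacle, is to justify the interchange of the order of integration (Fubini) needed to pass to the characteristic-function form and to move the $m'$-th power inside. This is harmless here because the integrand decays like a Gaussian in both $v$ and $\lambda$, so all integrals converge absolutely, and the densities of $T$, of each $U_i$, and of their sum are bounded and continuous. In particular $g_{T+W_{m'}}$ is continuous at $0$ (indeed $T$ alone already gives $T+W_{m'}$ an absolutely integrable characteristic function), so the inversion formula holds pointwise at the origin. A secondary point worth stating explicitly is that the convention $HN(0,\frac{1}{2\theta})$ must be the one yielding density $\frac{2\sqrt{\theta}}{\sqrt{\pi}}e^{-\theta u^2}$, consistent with the $HN(0,1/2)$ normalization used earlier in the paper.
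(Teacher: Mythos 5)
Your proof is correct: the identification of the inner integral with the characteristic function of $HN(0,\frac{1}{2\theta})$ (up to the factor $\frac{\sqrt{\pi}}{2\sqrt{\theta}}$), of $e^{-\lambda^2}$ with $\phi_T(2\lambda)$ for $T\sim N(0,\frac12)$, the Fourier inversion at the origin (legitimate because $|\phi_{T+W_{m'}}|\leq e^{-\lambda^2/4}\in L^1$, so the density is bounded and continuous), and the final bookkeeping with $\theta^{(m'-1)/2}\theta^{-m'/2}\alpha'^{-1/2}=(1-\alpha')^{-1/2}$ all check out, and you correctly recover the constant $2^{-m'}$ asserted in the footnote. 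The route is, however, not the one the paper takes. The paper does not verify the identity by manipulating the definition of $J(m',\theta)$; instead, in the appendix it computes the internal angle geometrically, as a Gaussian integral over the cone $\text{Con}_{F^{\perp},G}$, and a change of variables turns that integral directly into a constant times the density at zero of an explicit signed sum of a normal and $m'$ half-normals. The probabilistic expression and the $J$-integral then both arise as two readings of the same cone volume (the latter by writing the density in the Fourier domain), so the lemma is obtained as a by-product rather than proved as a standalone analytic identity. Your argument is essentially the ``mechanical manipulation of the characteristic functions'' that the paper attributes to Donoho's original treatment: it is self-contained, shorter, and independently confirms the disputed constant, but it does not by itself explain why the half-normal/normal mixture shows up; the paper's geometric derivation buys that interpretation (the half-normals come from the coordinates transverse to $F$ in the cone, the normal from the direction along $\sum_p e_p$) at the cost of a longer computation.
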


Applying this probabilistic interpretation and large deviation
techniques, it is evaluated as in \cite{Donoho06} that
\begin{equation}
g_{T+W_{m'}} \leq \frac{2}{\sqrt{\pi}} \cdot \left(
\int_{0}^{\mu_{m'}} {v
e^{-v^2-m'\Lambda^*(\frac{\sqrt{2\theta}}{m'}v)}}\,dv+e^{-\mu_{m'}^2}
\right),
\end{equation}
where $\Lambda^*$ is the rate function for the standard half-normal
random variable $HN(0,1)$ and $\mu_{m'}$ is the expectation of
$W_{m'}$. In fact, the second term in the
sum is argued to be negligible \cite{Donoho06}. And after taking $y=\frac{\sqrt{2\theta}}{m'}v$, we have an upper bound for the first term:
\begin{equation}
\frac{2}{\sqrt{\pi}}\cdot\frac{m'^2}{2\theta} \cdot
\int_{0}^{\sqrt{2/\pi}} {y e^{-m'
(\frac{m'}{2\theta})y^2-m'\Lambda^*(y)}}\,dy. \label{eq:Grassintld}
\end{equation}

\subsection{Laplace's Method for $\psi_{int}$}
As we know, $m'$ in the exponent of (\ref{eq:Grassintld}) is defined
as $(l-k)$. Similar to evaluating the external angle
decay exponent, again we will use Laplace's method in evaluating
the internal angle decay exponent. In fact, the function
$\xi_{\gamma'}$ of (\ref{eq:intexpfun}) appears in the exponent of
(\ref{eq:Grassintld}), with $\gamma'=\frac{\theta}{m'+\theta}$.
Since $\theta=\frac{1-\alpha'}{\alpha'}=C^2k$, we have
\begin{equation*}
\gamma'=\frac{\theta}{m'+\theta}=\frac{C^2k}{(C^2-1)k+l}.
\end{equation*}
Since $k \sim \rho\delta n$, $l \sim \nu n$,
\begin{equation*}
\gamma'=\frac{k}{\frac{l}{C^2}+\frac{C^2-1}{C^2}
k}=\frac{\rho\delta}{\frac{C^2-1}{C^2}\rho\delta+ \frac{\nu}{C^2}}.
\end{equation*}

Define the function
\begin{equation*}
 f_{\gamma', m'}(y)=y e^{-m'\xi_{\gamma'}(y)},
\end{equation*}
where $\xi_{\gamma'}(y)$ is the function as defined in
(\ref{eq:intexpfun}).

If we apply similar arguments as in proving Lemma
\ref{lemma:Grassextlap}, we will
get the following lemma.
\begin{lemma}
For $\gamma' \in (0,1]$ let $y_{\gamma'} \in (0,1)$ denote the
minimizer of $\xi_{\gamma'}$. Then
\begin{equation*}
\int_{0}^{\infty} f_{\gamma', m'}(x) \,dx \leq
e^{-m'\xi_{\gamma'}(y_{\gamma'})} \cdot R_{m'}(\gamma')
\end{equation*}
where, for $\eta>0$
\begin{equation*}
m'^{-1} \sup_{\gamma \in [\eta, 1]}
\log(R_{m'}(\gamma'))=o(1)~~as~~m'\rightarrow \infty.
\end{equation*}
\end{lemma}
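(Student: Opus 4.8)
The plan is to reduce this lemma to the general uniform Laplace estimate of Lemma \ref{lemma:uniform}, in exactly the way Lemma \ref{lemma:Grassextlap} was obtained from the derivative bounds of Lemma \ref{lemma:derivn}. Here the convex phase playing the role of $\psi_{\nu}$ is $\xi_{\gamma'}(y)=\frac{1-\gamma'}{\gamma'}\,\frac{y^2}{2}+\Lambda^*(y)$ from (\ref{eq:intexpfun}), and the only genuinely new feature relative to the external-angle argument is the slowly varying prefactor $y$ in $f_{\gamma',m'}(y)=y\,e^{-m'\xi_{\gamma'}(y)}$, which I expect to contribute only subexponentially and so can be folded into $R_{m'}(\gamma')$.

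First I would record the regularity of $\xi_{\gamma'}$. Since $\Lambda^*$ is the Legendre transform of the smooth, strictly convex cumulant generating function $\Lambda$ of the half-normal, it is $\mathbb{C}^3$ and strictly convex on $(0,\infty)$, so $\xi_{\gamma'}$ is $\mathbb{C}^3$ and strictly convex with a unique interior minimizer $y_{\gamma'}$ solving $\frac{1-\gamma'}{\gamma'}\,y+(\Lambda^*)'(y)=0$; because $(\Lambda^*)'<0$ precisely on $(0,\mu)$ with $\mu=\sqrt{2/\pi}=E(Y)$, one has $y_{\gamma'}\in(0,\mu)$, matching the stated range. Next, using $\xi_{\gamma'}''(y)=\frac{1-\gamma'}{\gamma'}+(\Lambda^*)''(y)$, I would show that $\xi_{\gamma'}''(y_{\gamma'})$ is bounded away from $0$ and from above, and that the normalized third derivative $D=\sup_{|y-y_{\gamma'}|<\epsilon}\lvert \xi_{\gamma'}^{(3)}(y)/\xi_{\gamma'}''(y)\rvert$ stays bounded, both uniformly over $\gamma'\in[\eta,1]$. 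This follows because $y_{\gamma'}$, $\xi_{\gamma'}''$ and $\xi_{\gamma'}^{(3)}$ are continuous in $\gamma'$ on the compact interval $[\eta,1]$ and $(\Lambda^*)''>0$ throughout. With these uniform bounds, Lemma \ref{lemma:uniform} yields $\int e^{-m'\xi_{\gamma'}(y)}\,dy\le \big(\int_{-\infty}^{\infty}e^{-m'\overline{\xi}_{\gamma'}(y)}\,dy\big)(1+o(1))$ with the $o(1)$ uniform in $\gamma'$, and the right-hand Gaussian integral equals $e^{-m'\xi_{\gamma'}(y_{\gamma'})}\sqrt{2\pi/(m'\xi_{\gamma'}''(y_{\gamma'}))}$, a purely subexponential prefactor.

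Finally I would dispose of the prefactor $y$. On the effective integration range for which this lemma is applied, namely $(0,\mu)$ in (\ref{eq:Grassintld}), one has $0<y\le\mu$, so $\int_0^\mu y\,e^{-m'\xi_{\gamma'}}\,dy\le \mu\int_0^\mu e^{-m'\xi_{\gamma'}}\,dy$; the bounded factor $\mu$ together with $\sqrt{2\pi/(m'\xi_{\gamma'}''(y_{\gamma'}))}$ assembles $R_{m'}(\gamma')$, for which $m'^{-1}\log R_{m'}(\gamma')\to0$ uniformly on $[\eta,1]$. For the stated upper limit $\infty$ the tail $y>\mu$ is killed by the quadratic growth of $\xi_{\gamma'}$ and is negligible relative to $e^{-m'\xi_{\gamma'}(y_{\gamma'})}$.

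The hard part will be the uniform control near the endpoint $\gamma'\to1$, where the quadratic coefficient $\frac{1-\gamma'}{\gamma'}$ vanishes, the minimizer $y_{\gamma'}$ drifts up to $\mu$, and the minimum value $\xi_{\gamma'}(y_{\gamma'})$ drops to $0$. Two things must not degenerate there: the curvature $\xi_{\gamma'}''(y_{\gamma'})$ must stay bounded below, and $y_{\gamma'}$ must remain interior to the integration interval so that the interior-minimum hypothesis of Lemma \ref{lemma:uniform} is genuinely met. Both follow from $\xi_{\gamma'}''(y_{\gamma'})\ge(\Lambda^*)''(y_{\gamma'})$ together with $(\Lambda^*)''(\mu)=1/\mathrm{Var}(Y)=\pi/(\pi-2)>0$, which keeps the curvature bounded away from zero uniformly up to $\gamma'=1$; combined with $y_{\gamma'}$ being bounded away from $0$ on $[\eta,1]$ — which holds since $(\Lambda^*)'(y)\to-\infty$ as $y\to0^+$ forces the solution of $(\Lambda^*)'(y)=-\frac{1-\gamma'}{\gamma'}y$ to stay away from the origin for $\gamma'\ge\eta$ — this closes the uniformity argument exactly as in Lemma \ref{lemma:derivn}.
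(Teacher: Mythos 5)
Your proposal is correct and follows essentially the same route as the paper, which disposes of this lemma in one line by saying it follows from ``similar arguments as in proving Lemma \ref{lemma:Grassextlap}'' --- i.e.\ exactly the reduction to the uniform Laplace estimate of Lemma \ref{lemma:uniform} via uniform bounds on $\xi_{\gamma'}''(y_{\gamma'})$ and the normalized third derivative that you carry out. Your added care at the endpoint $\gamma'\to 1$ (curvature controlled by $(\Lambda^*)''(\mu)=\pi/(\pi-2)$, minimizer bounded away from $0$ on $[\eta,1]$) and your absorption of the prefactor $y$ into $R_{m'}(\gamma')$ supply details the paper leaves implicit, but do not change the argument.
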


This means that
\begin{equation*}
g_{T+W_{m'}}(0) \leq
e^{-m'\xi_{\gamma'}(y_{\gamma'})}R_{m'}(\gamma').
\end{equation*}

So applying (\ref{eq:pdffreq}), we get
\begin{equation*}
n^{-1} \log(\beta(F,G)) \leq
\left(-\xi_{\gamma'}(y_{\gamma'})-\log(2)\right)
(\nu-\rho\delta)+o(1),
\end{equation*}
where the $o(1)$ is uniform over the range of $k$ and $l$.

\subsection{Asymptotics of $\xi_{\gamma'}$}
As in our previous discussion, we define
$\gamma'=\frac{\rho\delta}{\frac{C^2-1}{C^2}\rho\delta+
\frac{\nu}{C^2}}$, so $\gamma'$ can take any value in the range
$(0,1]$. Now we are interested in studying the asymptotics of $\xi_{\gamma'}(y_{\gamma'})$
as $\gamma' \rightarrow 0$. As in \cite{Donoho06}, using the convex duality associated
to the cumulant generating function $\Lambda(s)$ and its dual
$\Lambda^{*}$, we have
\begin{equation*}
y=\Lambda'(s),~~~~~s=({\Lambda}^*)'(y),
\end{equation*}
defining a one-one relationship $s=s(y)$ and $y=y(s)$ between $s<0$
and $0<y<\sqrt{\frac{2}{\pi}}$.

From these relations, following the same line of reasoning in
\cite{Donoho06}, we can get the minimizer $y_{\gamma'}$ of
$\xi_{\gamma'}$
\begin{equation}
\frac{1-\gamma'}{\gamma'}y_{\gamma'}=-s_{\gamma'}, \label{eq:miny}
\end{equation}
where $s_{\gamma'}=s(y_{\gamma'})$.

Because the cumulant generating function for a standard half-normal $HN(0,1)$
random variable $Y$ is $\Lambda(s)=s^2/2+\log(2\Phi(s))$, where $\phi$ and $\Phi$ are the standard density and cumulative distributions, we
have from $y=\Lambda'(s)$ that
\begin{equation}
y(s)=s\cdot(1-\frac{1}{M(s)}), s<0 \label{eq:ys}
\end{equation}
where the function of $M(s)$ is defined on $s<0$ with $0<M(s)<1$ and
$M(s) \rightarrow 1$ as $s\rightarrow -\infty$ so that
\begin{equation*}
\Phi(s)=M(s)\cdot\frac{\phi(s)}{|s|}.
\end{equation*}

Combining (\ref{eq:miny}) and (\ref{eq:ys}), we know that

\begin{equation}
M(s_{\gamma'})=1-\gamma'. \label{eq:sgamma}
\end{equation}

Further, we can derive that
\begin{equation}
\xi_{\gamma'}(y_{\gamma'})=-\frac{1}{2}{y^2_{\gamma'}}\frac{1-\gamma'}{\gamma'}-\log(2/\pi)/2+\log(y_{\gamma'}/\gamma').
\end{equation}

So by the property of the function $M(s)$ and (\ref{eq:sgamma}), as
$\gamma'\rightarrow 0$, $s_{\gamma'} \rightarrow -\infty$, we have
\begin{equation*}
 Ee^{sY}=\frac{2}{2\sqrt{\pi}} \frac{M(s)}{|s|}\sim \frac{2}{2\sqrt{\pi}}
 \frac{1}{|s|},~~~~~s\rightarrow -\infty.
\end{equation*}
By taking the logarithm for $\Lambda(s)$, $\Lambda(s)\sim -\log{|s|}$ and $\Lambda'(s)\sim -\frac{1}{s}$
as $s\rightarrow -\infty$. So by $y=\Lambda'(s)$, we have
\begin{equation*}
y(s)\sim \frac{1}{|s|}~~~s\rightarrow -\infty,
\end{equation*}
and by combining this with (\ref{eq:miny}), as $\gamma' \rightarrow 0$,
\begin{equation*}
y_{\gamma'}\sim \sqrt{\frac{\gamma'}{1-\gamma'}}.
\end{equation*}

\section{``Weak", ``Sectional" and ``Strong" Robustness}
\label{sec:wss}

So far, we have discussed the robustness of $\ell_{1}$ minimization for sparse signal recovery in the ``strong" case, namely we required robust signal recovery for all the approximately $k$-sparse signal vectors $\x$. But in applications or performance analysis, we are also often interested in the signal recovery robustness in weaker senses. As we shall see, the framework given in the previous sections can be naturally extended to the analysis of other notions of robustness for sparse signal recovery, resulting in a coherent analysis scheme.  For example, we hope to get a tighter performance bound for a particular signal vector instead of a more general, but looser, performance bound for all the possible signal vectors. In this section, we will present our null space conditions on the matrix $A$ to guarantee the performance of the programming (\ref{eq:Grassl1}) in the ``weak", ``sectional" and ``strong" senses. Here the robustness in the ``strong" sense is exactly the robustness we discussed in the previous sections.
\begin{theorem}
Let $A$ be a general $m\times n$ measurement matrix, $\x$ be an $n$-element
vector and $\y=A\x$. Denote $K$ as a subset of $\{1,2,\dots,n\}$ such
that its cardinality $|K|=k$ and further denote $\overline{K}=\{1,2,\dots,n\}\setminus K$. Let $\w$
denote an $n \times 1$ vector. Let $C>1$ be a fixed number.

\begin{itemize}
\item  (Weak Robustness) Given a specific set $K$ and suppose that the part of $\x$ on $K$, namely $\x_{K}$ is fixed.
$\forall \x_{\overline{K}}$, any solution $\hat{\x}$ produced by (\ref{eq:Grassl1})
satisfies
\begin{equation*}
 \|\x_K\|_1-\|\hat{\x}_K\|_{1} \leq
\frac{2}{C-1} \|\x_{\overline{K}}\|_1
\end{equation*}
 and
\begin{equation*}
 \|(\x-\hat{\x})_{\overline{K}}\|_1
\leq \frac{2C}{C-1} \|\x_{\overline{K}}\|_1,
\end{equation*}
if and only if $\forall \w\in \mathbb{R}^n~\mbox{such that}~A\w=0$,
we have
 \begin{equation}
\|\x_K+\w_{K}\|_1+ \|\frac{\w_{\overline{K}}}{C}\|_1 \geq \|\x_K\|_1;
 \label{eq:Grasswthmeq1}
 \end{equation}
\item (Sectional Robustness)
Given a specific set $K \subseteq \{1,2,\dots,n\}$. Then $\forall \x \in \mathbb{R}^n$, any solution $\hat{\x}$ produced by
(\ref{eq:Grassl1}) will satisfy
\begin{equation*}
 \|\x-\hat{\x}\|_{1} \leq \frac{2(C+1)}{C-1}  \| \x_{\overline{K}}
 \|_{1},
\end{equation*}
if and only if $\forall \x' \in \mathbb{R}^n$, $\forall \w\in
\mathbb{R}^n~\mbox{such that}~A\w=0$,
 \begin{equation}
 \|\x'_K+\w_{K}\|_1+ \|\frac{\w_{\overline{K}}}{C}\|_1 \geq \|\x'_K\|_1;
 \label{eq:wthmeq2}
 \end{equation}
\item(Strong Robustness)
If for all possible $K \subseteq \{1,2,\dots,n\}$, and for all $\x \in
\mathbb{R}^n$, any solution $\hat{\x}$ produced by
(\ref{eq:Grassl1}) satisfies
\begin{equation*}
 \|\x-\hat{\x}\|_{1} \leq \frac{2(C+1)}{C-1}  \| \x_{\overline{K}}
 \|_{1},
\end{equation*}
 if and only if $\forall K \subseteq \{1,2,\dots,n\}, \forall \x' \in \mathbb{R}^n$, $\forall \w\in \mathbb{R}^n~\mbox{such that}~A\w=0$,
 \begin{equation}
\|\x'_K+\w_{K}\|_1+ \|\frac{\w_{\overline{K}}}{C}\|_1 \geq \|\x'_K\|_1.
 \label{eq:wthmeq3}
 \end{equation}
\end{itemize}
\end{theorem}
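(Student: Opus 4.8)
The plan is to treat each of the three notions as a pair of implications—sufficiency, in which the displayed null-space inequality forces the error bound, and necessity, in which violating that inequality lets one exhibit a signal whose recovery breaks the bound—and to organize the work so that the sectional and strong cases collapse to a single computation while the weak case is handled directly. The organizing observation is that the three displayed inequalities (\ref{eq:Grasswthmeq1}), (\ref{eq:wthmeq2}), (\ref{eq:wthmeq3}) are precisely the hypothesis of Lemma \ref{lemma:Grassbaselemma}, quantified over progressively larger families: weak fixes both $K$ and the vector $\x_K$, sectional fixes $K$ but ranges over all $\x'_K$, and strong ranges over all $K$ and all $\x'_K$. By Lemma \ref{lemma:Grassbaselemma}, for a fixed $K$ the statement ``(\ref{eq:wthmeq2}) holds for all $\x'$'' is equivalent to the homogeneous property $C\|\w_K\|_1\le\|\w_{\overline K}\|_1$ for every $\w$ with $A\w=0$, and ``(\ref{eq:wthmeq3}) holds for all $K$ and all $\x'$'' is exactly condition (\ref{eq:thmeq}) of Theorem \ref{thm:th1}.

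First I would establish sufficiency. For the weak case, let $\hat\x$ be any minimizer of (\ref{eq:Grassl1}) and set $\w=\hat\x-\x$, so $A\w=0$. Starting from optimality $\|\x_K\|_1+\|\x_{\overline K}\|_1=\|\x\|_1\ge\|\hat\x\|_1=\|\x_K+\w_K\|_1+\|\x_{\overline K}+\w_{\overline K}\|_1$, I apply the reverse triangle inequality $\|\x_{\overline K}+\w_{\overline K}\|_1\ge\|\w_{\overline K}\|_1-\|\x_{\overline K}\|_1$ and then hypothesis (\ref{eq:Grasswthmeq1}) in the form $\|\x_K+\w_K\|_1\ge\|\x_K\|_1-\tfrac1C\|\w_{\overline K}\|_1$. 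Cancelling $\|\x_K\|_1$ yields $\tfrac{C-1}{C}\|\w_{\overline K}\|_1\le2\|\x_{\overline K}\|_1$, which is the second weak bound because $(\x-\hat\x)_{\overline K}=-\w_{\overline K}$; feeding this back into $\|\x_K\|_1-\|\hat\x_K\|_1=\|\x_K\|_1-\|\x_K+\w_K\|_1\le\tfrac1C\|\w_{\overline K}\|_1$ gives the first. For the sectional and strong cases I would instead use Lemma \ref{lemma:Grassbaselemma} to replace the hypothesis by $C\|\w_K\|_1\le\|\w_{\overline K}\|_1$ and repeat verbatim the triangle-inequality chain of Theorem \ref{thm:th1}: the same steps give $\|\w_{\overline K}\|_1\le\tfrac{2C}{C-1}\|\x_{\overline K}\|_1$ and $\|\w_K\|_1\le\tfrac1C\|\w_{\overline K}\|_1\le\tfrac{2}{C-1}\|\x_{\overline K}\|_1$, whose sum is $\|\x-\hat\x\|_1\le\tfrac{2(C+1)}{C-1}\|\x_{\overline K}\|_1$. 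The strong statement then follows from the sectional one by ranging over all $K$.

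The necessity direction is where the real work lies, and I expect it to be the main obstacle. Here I would argue by contraposition, mirroring the converse of Theorem \ref{thm:th1} and the construction in the remark following it. Assuming the relevant inequality fails, Lemma \ref{lemma:Grassbaselemma} produces (for sectional/strong) a vector $\w_0$ with $A\w_0=0$—and, for strong, a set $K$—satisfying $C\|(\w_0)_K\|_1>\|(\w_0)_{\overline K}\|_1$; for weak one directly obtains $\w_0$ with $\|\x_K+(\w_0)_K\|_1+\tfrac1C\|(\w_0)_{\overline K}\|_1<\|\x_K\|_1$. From such a $\w_0$ I would build a signal $\x$—choosing $\x_{\overline K}$, and in the weak case retaining the prescribed $\x_K$—so that the feasible point $\x+t\w_0$ cancels the mass of $\x$ on (a subset of) $K$, exhibit this point as a minimizer of (\ref{eq:Grassl1}), and then compute the ratio $\|\x-\hat\x\|_1/\|\x_{\overline K}\|_1$ to exceed the stated constant.

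The delicate points, exactly as in the converse of Theorem \ref{thm:th1}, are (i) certifying that the constructed $\hat\x$ is genuinely an $\ell_1$ minimizer and not merely feasible, and (ii) arranging the magnitudes of $\x$ so that the error ratio actually attains the sharp constant $\tfrac{2(C+1)}{C-1}$ (respectively the two weak constants), since a merely slight violation of the null-space inequality need not by itself saturate the factor of two built into the bound. As the discussion after Theorem \ref{thm:th1} indicates, certifying optimality may require rescaling the columns of $A$ indexed by $K$ so that the extremal null-space vector saturates $\|\w_K\|_1=\|\w_{\overline K}\|_1$, after which optimality follows from the definition of the critical sparsity level. I would therefore carry out the construction first for strong robustness, where $K$ is free and the Theorem \ref{thm:th1} template applies most directly, then specialize to sectional robustness with $K$ fixed, and finally to weak robustness, where both separate bounds must be tracked simultaneously through the construction.
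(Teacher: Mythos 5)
Your proposal is correct and follows essentially the same route as the paper: the sufficiency chains for the weak and sectional/strong cases are the identical triangle-inequality computations (the paper derives the equivalence you attribute to Lemma \ref{lemma:Grassbaselemma} inline via the extremal choice of $\x'_K$), and the necessity direction is handled in both by deferring to the converse construction of Theorem \ref{thm:th1}, with your flagged delicate points matching the paper's own caveats.
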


\begin{proof} We will first show the sufficiency of the null space conditions for the various definitions of robustness.
Let us begin with the ``weak" robustness part. Let $\w=\hat{\x}-\x$ and we must have $A\w=A(\hat{\x}-\x)=0$. From the
triangular inequality for $\ell_{1}$ norm and the fact that $\|\x\|_{1} \geq \| \x+\w \|_{1}$, we have
\begin{eqnarray*}
&&~~\|\x_K\|_1-\|\x_K+\w_K\|_1 \\
&&\geq \|\w_{\overline{K}}+\x_{\overline{K}}\|_1-\|\x_{\overline{K}}\|_1 \\
&&\geq \|\w_{\overline{K}}\|_1-2\|\x_{\overline{K}}\|_1.
\end{eqnarray*}
But the condition (\ref{eq:Grasswthmeq1}) guarantees that
\begin{equation*}
\|\w_{\overline{K}}\|_1\geq C(\|\x_K\|_1-\|\x_K+\w_K\|_1),
\end{equation*}
so we have
\begin{equation*}
 \|\w_{\overline{K}}\|_1
\leq \frac{2C}{C-1} \|\x_{\overline{K}}\|_1,
\end{equation*}
and
\begin{equation*}
\|\x_K\|_1-\|\hat{\x}_K\|_{1} \leq \frac{2}{C-1} \|\x_{\overline{K}}\|_1.
\end{equation*}
For the ``sectional" robustness, again, we let $\w=\hat{\x}-\x$.
Then there must exist an $\x'\in \mathbb{R}^n$ such that
\begin{equation*}
\|\x'_K+\w_{K}\|_1=\|\x'_K\|_1-\|\w_{K}\|_1.
\end{equation*}
Following the condition (\ref{eq:wthmeq2}), we have
\begin{equation*}
\|\w_{K}\|_1 \leq \|\frac{\w_{\overline{K}}}{C}\|_1.
\end{equation*}
Since
\begin{equation*}
\|\x\|_{1} \geq \| \x+\w \|_{1},
\end{equation*}
following the proof of Theorem $1$, we have
\begin{equation*}
\|\x-\hat{\x}\|_{1} \leq \frac{2(C+1)}{C-1}  \| \x_{\overline{K}}\|_1.
\end{equation*}
The sufficiency of the condition (\ref{eq:wthmeq3}) for strong
robustness also follows.

Necessity: Since in the proof of the sufficiency, equalities can be
achieved in the triangular equalities, the conditions
(\ref{eq:Grasswthmeq1}), (\ref{eq:wthmeq2}) and (\ref{eq:wthmeq3})
are also necessary conditions for the the respective robustness to
hold for every $\x$ (otherwise, for certain $\x$'s, there will be
$\x'=\x+\w$ with $\|\x'\|_1<\|\x\|_1$ while violating the respective
robustness definitions. Also, such $\x'$ can be the solution to
($\ref{eq:Grassl1}$)). The detailed arguments will similarly follow the proof of the second part of
Theorem \ref{thm:th1}.

\end{proof}

The conditions for ``weak", ``sectional" and ``strong" robustness
seem to be very similar, and yet there are indeed huge differences. The ``weak" robustness condition is for
$\x$ with a specific $\x_{K}$ on a specific subset $K$, the
``sectional" robustness condition is for $\x$ with all possible
$\x_K$'s on a specific subset $K$, and the ``strong" robustness
conditions are for $\x$'s with all possible $\x_K$'s on all possible
subsets. Basically, the ``weak'' robustness condition
(\ref{eq:Grasswthmeq1}) guarantees that the $\ell_{1}$ norm of
$\hat{\x}_K$ is not too far away from the $\ell_{1}$ norm of $\x_K$
and the error vector $\w_{\overline{K}}$ is small in $\ell_{1}$ norm when
$\|\x_{\overline{K}}\|_1$ is small. Notice that if we define
\begin{equation*}
\kappa=\max_{A\w=0, \w \neq 0}
\frac{\|\w_K\|_1}{\|\w_{\overline{K}}\|_1},
\end{equation*}
then
\begin{equation*}
\|\x-\hat{\x}\|_1 \leq \frac{2C(1+\kappa)}{C-1} \|\x_{\overline{K}}\|_1.
\end{equation*}
That means, if $\kappa$ is not $\infty$  for a measurement matrix
$A$, $\|\x-\hat{\x}\|_1 $ is also small when $\|\x_{\overline{K}}\|_1$ is
small. Indeed, it is not hard to see that, for a given matrix $A$, $\kappa < \infty$
as long as the rank of matrix $A_{K}$ is equal to
$|K|=k$, which is generally satisfied for $k < m$.

While the ``weak" robustness condition is only for
one specific signal $\x$,  the ``sectional" robustness condition
instead guarantees that given \emph{any} approximately $k$-sparse
signal mainly supported on the subset $K$, the
$\ell_{1}$-minimization gives a solution $\hat{\x}$ close to the
original signal by satisfying (\ref{eq:noiseadj}). When we measure
an approximately $k$-sparse signal $\x$ (the support of the $k$
largest-magnitude components is fixed though unknown to the decoder)
using a randomly generated measurement matrix $A$, the ``sectional"
robustness conditions characterize the probability that the
$\ell_{1}$ minimization solution satisfies (\ref{eq:noiseadj}) for
\emph{any} signals for the set $K$. If that probability goes to $1$
as $n\rightarrow \infty$ for any subset $K$, we know that there
exist measurement matrices $A$'s that guarantee (\ref{eq:noiseadj})
on ``almost all" support sets (namely, (\ref{eq:noiseadj}) is
``almost always" satisfied). The ``strong" robustness condition
instead guarantees the recovery for approximately sparse
signals mainly supported on \emph{any} subset $K$. The ``strong"
robustness condition is useful in guaranteeing the decoding bound
\emph{simultaneously} for \emph{all} approximately $k$-sparse signals
under a single measurement matrix $A$.

Interestingly, after we take $C=1$ and let (\ref{eq:Grasswthmeq1}),
(\ref{eq:wthmeq2}) and (\ref{eq:wthmeq3}) take strict inequality for
all $\w\neq 0$ in the null space of $A$, the conditions
(\ref{eq:Grasswthmeq1}), (\ref{eq:wthmeq2}) and (\ref{eq:wthmeq3})
are also sufficient and necessary conditions for unique exact
recovery of ideally $k$-sparse signals in ``weak", ``sectional" and
``strong" senses \cite{Donoho06}, namely the unique exact recovery
of a specific ideally $k$-sparse signal, the unique exact recoveries
of all ideally $k$-sparse signal on a specific support set $K$ and
the unique exact recoveries of all ideally $k$-sparse signal on all
possible support sets $K$. In fact, if $\|\x_{\overline{K}}\|_1=0$, from
similar triangular inequality derivations in Theorem $1$, we have
$\hat{\x}=\x$ under all the three conditions.

For a given value $\delta=\frac{m}{n}$ and any value $C\geq1$, we
will determine the value of feasible $\zeta=\frac{k}{n}$ for which
there exist a sequence of $A$'s such that these three conditions
are satisfied when $n \rightarrow \infty$ and $\frac{m}{n}=\delta$.
As manifested by the statements of the three conditions
(\ref{eq:Grasswthmeq1}), (\ref{eq:wthmeq2}) and (\ref{eq:wthmeq3})
and the previous discussions in Section \ref{sec:Grassprobnull}, we
can naturally extend the Grassmann angle approach to analyze the
bounds for the probabilities that (\ref{eq:Grasswthmeq1}),
(\ref{eq:wthmeq2}) and (\ref{eq:wthmeq3}) fail. Here we will denote
these probabilities as $P_1$, $P_2$ and $P_3$ respectively. Note
that there are $\binom{n}{k}$ possible support sets $K$ and there
are $2^k$ possible sign patterns for signal $\x_K$. From previous
discussions, we know that the event that the condition
(\ref{eq:Grasswthmeq1}) fails is the same for all $\x_K$'s of a
specific support set and a specific sign pattern. Then following the
same line of reasoning as in Section \ref{sec:Grassprobnull}, we
have
 \begin{eqnarray}
  &&P_1=P_{K,-}\\
  &&P_2\leq 2^k \times P_1,\\
  &&P_3 \leq \binom{n}{k} \times 2^k \times P_1,
  \label{eq:wnum23}
\end{eqnarray}
where $P_{K,-}$ is the probability as in (\ref{eq:union}).

We have the following lemma about the $P_1$, $P_2$ and $P_3$:
\begin{lemma}
For any $C>1$, we define $\zeta_{W}(\delta)$, $\zeta_{Sec}(\delta)$,
and $\zeta_{S}(\delta)$ to be the largest fraction
$\zeta=\frac{k}{n}$ such that the condition (\ref{eq:Grasswthmeq1})
(\ref{eq:wthmeq2}) and (\ref{eq:wthmeq3}) are satisfied with
overwhelming probability as $n \rightarrow 0$ if we sample the
$(n-m)$-dimensional null space uniformly, where
$\frac{m}{n}=\delta$. Then
\begin{eqnarray*}
\zeta_{W}(\delta)&>&0,\\
\zeta_{Sec}(\delta)&>&0,\\
\zeta_{S}(\delta)&>&0
\end{eqnarray*}
 for any $C>1$ and $\delta>0$. Also,
\begin{equation*}
\lim_{\delta \rightarrow 1} \zeta_{W}(\delta)=1
\end{equation*}
for any $C>1$. \label{lemma:wthlimit}
\end{lemma}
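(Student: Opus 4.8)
The plan is to reduce the three positivity statements to the strong-robustness analysis already established, and to obtain the limit $\zeta_W(\delta)\to 1$ by evaluating the relevant ``weak'' net exponent at the right endpoint $\nu=1$. First I would exploit the nesting of the three failure probabilities. From the union-bound relations (\ref{eq:wnum23}) we have $P_1\le P_2\le P_3$, and $P_3$ is exactly the strong-robustness failure probability bounded in (\ref{eq:num4}). By the definition of $\rho_N(\delta,C)$ together with (\ref{eq:comexpdef}), whenever $\rho<\rho_N(\delta,C)$ the strong net exponent is strictly negative uniformly over $\nu\in[\delta,1]$, so $P_3\to 0$ exponentially in $n$. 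Lemma \ref{lemma:existence} gives $\rho_N(\delta,C)>0$ for every $\delta\in(0,1)$ and $C>1$, hence $\zeta_S(\delta)\ge\rho_N(\delta,C)\,\delta>0$. Since $P_1\le P_2\le P_3$, the same range of $\zeta$ forces $P_1,P_2\to 0$, yielding $\zeta_W(\delta)\ge\zeta_{Sec}(\delta)\ge\zeta_S(\delta)>0$. This disposes of the three positivity claims at once.

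For the limit, the point is that the weak failure probability is $P_1=P_{K,-}$, which carries \emph{no} union bound over support sets or sign patterns. Hence the governing exponent is the weak net exponent
\begin{equation*}
\psi^{W}_{net}(\nu;\zeta)=\psi^{W}_{com}(\nu)-\psi_{int}(\nu)-\psi_{ext}(\nu),\quad \psi^{W}_{com}(\nu)=\psi_{com}(\nu)-H(\zeta)-\zeta\log 2,
\end{equation*}
obtained from (\ref{eq:comexp}) by deleting the support-set factor $H(\zeta)$ and the sign factor $\zeta\log 2$ (with $\zeta=\rho\delta$), while retaining the face-count factor whose exponent is $(1-\zeta)H\!\left(\tfrac{\nu-\zeta}{1-\zeta}\right)+(\nu-\zeta)\log 2$. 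The key computation is to evaluate $\psi^{W}_{net}$ at $\nu=1$: there $H\!\left(\tfrac{\nu-\zeta}{1-\zeta}\right)\to H(1)=0$ and $\psi_{ext}(1)=0$, while from (\ref{eq:inteptfor}) one has $\psi_{int}(1)=\big(\xi_{\gamma'}(y_{\gamma'})+\log 2\big)(1-\zeta)$ with $\gamma'=\frac{\zeta}{\frac{C^2-1}{C^2}\zeta+\frac{1}{C^2}}\in(0,1)$. After the two $\log 2$ contributions cancel,
\begin{equation*}
\psi^{W}_{net}(1;\zeta)=-(1-\zeta)\,\xi_{\gamma'}(y_{\gamma'})<0,
\end{equation*}
the strict negativity for every $\zeta<1$ following from the strict positivity of $\xi_{\gamma'}$ on $(0,\infty)$.

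To finish, fix any $\zeta_0<1$. I would note that $\psi^{W}_{net}$ depends only on $(\nu,\zeta)$ and $C$, not on $\delta$ directly; $\delta$ enters solely through the admissible range $\nu\in[\delta,1]$. Since $\psi^{W}_{com}$, $\psi_{int}$ and $\psi_{ext}$ extend continuously up to $\nu=1$, the strict inequality $\psi^{W}_{net}(1;\zeta_0)<0$ persists on some left-neighborhood $\nu\in[\nu^\ast,1]$ with $\nu^\ast=\nu^\ast(\zeta_0)<1$. Choosing $\delta_0=\max\{\nu^\ast,\zeta_0\}$, for every $\delta\in(\delta_0,1)$ we have $[\delta,1]\subseteq[\nu^\ast,1]$, so $\psi^{W}_{net}(\nu;\zeta_0)<0$ throughout $[\delta,1]$, giving $P_1\to 0$ and thus $\zeta_W(\delta)\ge\zeta_0$. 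As $\zeta_0<1$ is arbitrary and $\zeta_W(\delta)\le 1$ always, we conclude $\lim_{\delta\to1}\zeta_W(\delta)=1$.

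The main obstacle is the endpoint analysis at $\nu=1$: one must verify that each exponent extends continuously there, the only delicate case being $\psi_{ext}$, where $x_\nu\to 0$ makes $\log G(x_\nu)\to-\infty$ but the vanishing prefactor $(1-\nu)$ keeps $(1-\nu)\log G(x_\nu)\to 0$ and $\psi_{ext}(1)=0$. One must also correctly identify the weak combinatorial exponent, dropping exactly the support-set and sign-pattern factors while keeping the face-count factor. Once these are in place, the explicit value $\psi^{W}_{net}(1;\zeta)=-(1-\zeta)\xi_{\gamma'}(y_{\gamma'})$ and a routine continuity/compactness argument deliver the limit.
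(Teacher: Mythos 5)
Your proof is correct and follows essentially the same route as the paper: positivity is reduced to $\zeta_{S}=\rho_{N}(\delta,C)\,\delta>0$ via Lemma \ref{lemma:existence} together with the nesting of the three conditions, and the limit is obtained from the weak net exponent $H\bigl(\tfrac{\nu-\rho\delta}{1-\rho\delta}\bigr)(1-\rho\delta)-\xi_{\gamma'}(y_{\gamma'})(\nu-\rho\delta)-\psi_{ext}(\nu)$ (the $\log 2$ terms cancelling exactly as you note), whose combinatorial and external parts vanish near $\nu=1$ while the internal part stays bounded away from zero. Your only deviation is organizational — you evaluate the exponent exactly at the endpoint $\nu=1$, obtaining $-(1-\zeta)\xi_{\gamma'}(y_{\gamma'})<0$, and extend by continuity to a left neighborhood, whereas the paper fixes $\rho<1$ and sends the suprema of the vanishing pieces over $[\delta,1]$ to zero as $\delta\to1$ — but both arguments rest on identical facts.
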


The proof of this lemma is listed in the appendix.  It is worthwhile mentioning that the formula for $P_1$ is exact since there is no union bound involved and so the threshold bound for the ``weak" robustness is tight. In a short summary, the results in this section suggest that even if $k$ is
very close to the weak threshold for ideally sparse signals, we can still have robustness results for approximately sparse signals while
the results using restricted isometry conditions \cite{Candes05b} may suggest smaller sparsity level for recovery robustness. This is the first such a kind of result. The numerical results of $\zeta$ making sure that $P_1$, $P_2$, $P_3$ converge to zero overwhelmingly are presented in Section \ref{sec:numericalold}.

\section{Analysis of $\ell_{1}$ Minimization under Noisy Measurements}
\label{sec:noisymeas} In the previous sections, we have analyzed the
$\ell_{1}$ minimization algorithm for decoding general signals. In this section, we will discuss the effect of noisy
measurements on the $\ell_{1}$ minimization of general signals, using the null space characterization.

\begin{theorem}
Assume that $A$ is a general $m \times n$ measurement matrix $A$ with rank $m$ and its minimum nonzero singular value is denoted as
$\sigma_{\text{min}}$. Further, assume that $\y=A\x+\b$, with its
$\ell_{2}$-norm $\|\b\| \leq \epsilon$,   and that $\w$ is an
$n\times 1$ vector. Let $K$ be any subset of $\{1,2,\dots,n\}$ such
that its cardinality $|K|=k$ and let $K_i$ denote the $i$-th element
of $K$. Further, let $\overline{K}=\{1,2,\dots,n\} \setminus K$. Then the
solution $\hat{\x}$ produced by (\ref{eq:Grassl1}) will satisfy
\begin{equation*}
\|\x-\hat{\x}\|_{1} \leq  \frac{2(C+1)}{C-1}  \| \x_{\overline{K}}
\|_{1}+\frac{(3C+1)\sqrt{n} \epsilon}{(C-1)\sigma_{\text{min}}}
\end{equation*}
with $C>1$, if $\forall \w \in \mathbb{R}^n$ such that
\begin{equation*}
~A\w=0~
\end{equation*}
and for all the subsets $K$ with $|K|=k$, we have
%\vspace{-0.1in}
\begin{equation}
 C \sum_{i=1}^{k}|\w_{K_i}|\leq\sum_{i=1}^{n-k}|\w_{\overline{K}_i}|.
\label{eq:noisythmeq}
\end{equation}
\label{thm:noisym}
\end{theorem}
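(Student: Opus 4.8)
The plan is to reduce the noisy case to the noise-free null space argument of Theorem \ref{thm:th1} by splitting the error vector into a component lying in the null space of $A$ (to which the condition (\ref{eq:noisythmeq}) applies verbatim) and a small component that absorbs the measurement noise. Set $\w = \hat{\x} - \x$. Since $\hat{\x}$ is feasible we have $A\hat{\x} = \y = A\x + \b$, so $A\w = \b$ rather than $A\w = 0$; this is the essential departure from the noiseless setting, and it also means the original signal $\x$ is \emph{not} feasible for (\ref{eq:Grassl1}), so the comparison $\|\hat{\x}\|_1 \leq \|\x\|_1$ used in Theorem \ref{thm:th1} is no longer available directly. Both difficulties have to be handled before the null space condition can be invoked.

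To repair both issues at once, I would introduce $\w_1 = A^{\dagger}\b$, the minimum-$\ell_2$-norm solution of $A\w_1 = \b$ obtained from the pseudo-inverse (here $\b$ lies in the range of $A$ because $A$ has full row rank $m$). Since the smallest nonzero singular value of $A$ is $\sigma_{\text{min}}$, the operator norm of $A^{\dagger}$ is $1/\sigma_{\text{min}}$, so $\|\w_1\|_2 \leq \|\b\|_2/\sigma_{\text{min}} \leq \epsilon/\sigma_{\text{min}}$ and hence $\|\w_1\|_1 \leq \sqrt{n}\,\|\w_1\|_2 \leq \sqrt{n}\,\epsilon/\sigma_{\text{min}}$ by Cauchy--Schwarz. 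Then $\w_2 := \w - \w_1$ satisfies $A\w_2 = 0$, so $\w_2$ lies in the null space and obeys (\ref{eq:noisythmeq}); exactly as in the proof of Theorem \ref{thm:th1}, this yields $\|\w_{2,\overline{K}}\|_1 - \|\w_{2,K}\|_1 \geq \frac{C-1}{C+1}\|\w_2\|_1$. For the feasibility issue, the shifted vector $\x' = \x + \w_1$ satisfies $A\x' = \y$, so it is feasible and optimality of $\hat{\x}$ gives $\|\hat{\x}\|_1 \leq \|\x'\|_1 \leq \|\x\|_1 + \|\w_1\|_1$.

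From here I would run the triangle-inequality chain of Theorem \ref{thm:th1}, now carrying the extra slack term. Writing $\|\hat{\x}\|_1 = \|\x + \w\|_1 \geq \|\x_K\|_1 - \|\w_K\|_1 + \|\w_{\overline{K}}\|_1 - \|\x_{\overline{K}}\|_1$ and comparing with $\|\x\|_1 + \|\w_1\|_1 = \|\x_K\|_1 + \|\x_{\overline{K}}\|_1 + \|\w_1\|_1$, one obtains $\|\w_{\overline{K}}\|_1 - \|\w_K\|_1 \leq 2\|\x_{\overline{K}}\|_1 + \|\w_1\|_1$. Substituting $\w = \w_1 + \w_2$ and using $\|\w_{\overline{K}}\|_1 - \|\w_K\|_1 \geq (\|\w_{2,\overline{K}}\|_1 - \|\w_{2,K}\|_1) - \|\w_1\|_1 \geq \frac{C-1}{C+1}\|\w_2\|_1 - \|\w_1\|_1$, I can bound $\|\w_2\|_1 \leq \frac{C+1}{C-1}\bigl(2\|\x_{\overline{K}}\|_1 + 2\|\w_1\|_1\bigr)$ and then use $\|\w\|_1 \leq \|\w_1\|_1 + \|\w_2\|_1$. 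Collecting the coefficient of $\|\x_{\overline{K}}\|_1$ as $\frac{2(C+1)}{C-1}$ and the coefficient of $\|\w_1\|_1$ as $1 + \frac{2(C+1)}{C-1} = \frac{3C+1}{C-1}$, and then inserting $\|\w_1\|_1 \leq \sqrt{n}\,\epsilon/\sigma_{\text{min}}$, produces the stated bound. The only real obstacle is the bookkeeping in this splitting step: because the null space condition governs $\w_2$ while the optimality inequality is phrased in terms of $\w$, I must carefully transfer the $K/\overline{K}$ triangle inequalities between $\w$ and $\w_2$ and track every $\|\w_1\|_1$ remainder so that the noise coefficient lands exactly at $3C+1$ rather than at a looser constant.
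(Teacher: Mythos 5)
Your proposal is correct and is essentially the paper's own argument: your decomposition $\w=\w_1+\w_2$ with $\w_1=A^{\dagger}\b$ is exactly the paper's introduction of the feasible surrogate $\x^{*}=\x+A^{\dagger}\b$ (so that $\w_2=\hat{\x}-\x^{*}$ lies in the null space), followed by the same triangle-inequality chain and the same bounds $\|\w_1\|_2\leq\epsilon/\sigma_{\text{min}}$ and $\|\w_1\|_1\leq\sqrt{n}\,\|\w_1\|_2$. The bookkeeping is organized slightly differently (you carry the $\|\w_1\|_1$ slack through the $K/\overline{K}$ split, while the paper bounds $\|\x^{*}_{\overline{K}}\|_1\leq\|\x_{\overline{K}}\|_1+\|\x^{*}-\x\|_1$ at the end), but both yield the identical constants $\frac{2(C+1)}{C-1}$ and $\frac{3C+1}{C-1}$.
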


\begin{proof}
Since
\begin{equation*}
  \y=A\x+\b,
\end{equation*}
we can write
\begin{equation*}
  \y=A\x^{*},
\end{equation*}
where
\begin{equation*}
  \|\x^{*}-\x\| \leq \frac{\epsilon}{\sigma_{\text{min}}}.
\end{equation*}

By the Cauchy-Schwarz inequality, we have
\begin{equation*}
  \|\x^{*}-\x\|_1 \leq \frac{\sqrt{n} \epsilon}{\sigma_{\text{min}}}.
\end{equation*}

Suppose the matrix $A$ has the claimed null space property. Now the
solution $\hat{\x}$ of (\ref{eq:Grassl1}) satisfies
$\|\hat{\x}\|_{1} \leq \|\x^{*}\|_{1}$. Since $A \hat{\x}=\y$, it
easily follows that $\w=\hat{\x}-\x^{*}$ is in the null space of
$A$. Therefore we can further write $\|\x^{*}\|_{1} \geq \|
\x^{*}+\w \|_{1}$. Using  the triangular inequality for the
$\ell_{1}$ norm we obtain
\begin{eqnarray*}
\|\x_{K}^{*}\|_{1}+\|\x_{\overline{K}}^{*}\|_{1}&=& \|\x^{*}\|_{1}\nonumber\\
 &\geq& \| \hat{\x} \|_{1} =\| \x^{*}+\w \|_{1} \nonumber \\
 & \geq &\|\x_{K}^{*}\|_{1}-\|\w_{K}\|_{1}+\|\w_{\overline{K}}\|_{1}-\|\x_{\overline{K}}^{*}\|_{1}\nonumber \\
% & \geq &\|\x_{K}^{*}\|_{1}+\frac{C-1}{C}\|\w_{K}\|_{1}-\|\x_{\overline{K}}^{*}\|_{1}\nonumber \\
 & \geq&\|\x_{K}^{*}\|_{1}-\|\x_{\overline{K}}^{*}\|_{1}+\frac{C-1}{C+1}\|\w\|_{1},\nonumber
\end{eqnarray*}
where the last inequality is from the claimed null space
property. Relating the first equality and the last inequality above,
we have $2\|\x_{\overline{K}}^{*}\|_{1} \geq
\frac{(C-1)}{C+1}\|\w\|_{1}$.

Since
\begin{equation*}
\|\x_{\overline{K}}^{*}\|_{1} \leq \|\x_{\overline{K}}\|_{1}+
\|\x^{*}-\x\|_{1},
\end{equation*}
we get
\begin{eqnarray*}
\|\w\|_{1} &\leq& \frac{2(C+1)}{C-1}  \| \x_{\overline{K}}^{*} \|_{1}\\
&\leq& \frac{2(C+1)}{C-1}  \| \x_{\overline{K}} \|_{1}+\frac{2(C+1)}{C-1}
\| \x^{*}-\x\|_{1}.
\end{eqnarray*}

From the triangular inequality,
\begin{eqnarray}
\|\x-\hat{\x}\|_{1} &\leq& \|\x-\x^{*}\|_1+\|\w\|_{1}\\
&\leq& \frac{2(C+1)}{C-1}  \| \x_{\overline{K}} \|_{1}+\frac{3C+1}{C-1}
\|
\x^{*}-\x\|_{1},\\
&\leq& \frac{2(C+1)}{C-1}  \| \x_{\overline{K}}
\|_{1}+\frac{(3C+1)\sqrt{n} \epsilon}{(C-1)\sigma_{\text{min}}}.
\end{eqnarray}
\end{proof}

If the elements in the measurement matrix $A$ are i.i.d. as the unit
real Gaussian random variables $N(0,1)$, following upon the work of
Marchenko and Pastur \cite{Marcenko67}, Geman\cite{Geman80} and
Silverstein \cite{silver} proved that for $m/n=\delta$, as $n
\rightarrow \infty$,
\begin{equation*}
\frac{1}{\sqrt{n}}\sigma_{min} \rightarrow 1-\sqrt{\delta}
\end{equation*}
almost surely as $n \rightarrow \infty$.

Then almost surely as $n \rightarrow \infty$, $\frac{(3C+1)\sqrt{n}\epsilon}{(C-1)\sigma_{\text{min}}} \rightarrow \frac{(3C+1)
\epsilon}{(C-1)(1-\sqrt{\delta})}$. So in this case, we have $\|\x^{*}-\x\|_{1}$ is upper-bounded by some constant times
$\epsilon$. It is also worth mentioning that the error bound derived above is for a plain $\ell_1$ minimization optimization programming, which does not use any prior knowledge of the magnitudes of the noise in the computations, while the error bounds in the literatures often assume that such information is known and is used in the convex programming algorithms. To get an error bound in terms of $\ell_2$ norm, we can invoke the almost Euclidean property of the null space,namely every vector $\w$ has an $\ell_2$ norm scaling as $O(\frac{1}{\sqrt{n}})$ of its $\ell_1$ norm. Though we choose not to do it in detail in this paper, it is easy to see that the error bound here has the same scaling in $\epsilon$ as the analysis through the restricted isometry property \cite{CRT06}; however, this analysis is warranted even when the cardinality $|K|$ of the set $K$ is much larger than the known cardinality bounds for the restricted isometry property. It is also possible to extend the concepts of ``weak", ``sectional" and ``strong" robustness analysis to noisy measurements, which will also similarly show that even if the cardinality of the set $K$ is very close to the ``weak" threshold for the ideally sparse signals, we can still have the robustness of $\ell_1$ minimization to noisy measurements.

\section{Numerical Computations on the Bounds of $\zeta$}
\label{sec:numericalold}

In this section, we will numerically evaluate the performance bounds
on $\zeta=\frac{k}{n}$ such that the conditions (\ref{eq:thmeq}),
(\ref{eq:Grasswthmeq1}), (\ref{eq:wthmeq2}) and (\ref{eq:wthmeq3})
are satisfied with overwhelming probability as $n \rightarrow
\infty$.

First, we know that the condition (\ref{eq:thmeq}) fails with probability
 \begin{equation}
  P\leq \binom{n}{k} \times 2^k \times 2\times \sum_{s \geq 0}\sum_{G \in \Im_{m+1+2s}(\text{SP})}
{\beta(F,G)\gamma(G,\text{SP})}, \label{eq:num}
\end{equation}

Recall that we assume $\frac{m}{n}=\delta$, $l=(m+1+2s)+1$ and
$\nu=\frac{l}{n}$.  In order to make $P$ overwhelmingly converge to
zero as $n\rightarrow \infty$, following the discussions in Section
\ref{sec:evathebound}, one sufficient condition is to make sure that
the exponent for the combinatorial factors
\begin{equation}
\psi_{com}= \lim_{n\rightarrow \infty}{\frac{ \log{(\binom{n}{k} 2^k
2\binom{n-k}{l-k} 2^{l-k})}}{n} }
\end{equation}

and the negative exponent for the angle factors

\begin{equation}
\psi_{angle}= -\lim_{n\rightarrow \infty}{
\frac{\log{(\beta(F,G)\gamma(G,\text{SP}))}}{n} }
\end{equation}

satisfy $\psi_{com}-\psi_{angle}<0$ uniformly over $\nu \in [\delta,
1)$.

Following \cite{Donoho06} we take $m=0.5555n$. By analyzing the
decaying exponents of the external angles and internal angles
through the Laplace methods as in Section \ref{sec:bndexnangle} and
\ref{sec:bndinternal}, we can compute the numerical results as shown
in Figure \ref{fig:sl3}, Figure \ref{fig:wss} and Figure
\ref{fig:wss01}. In Figure \ref{fig:sl3}, we show the largest
sparsity level $\zeta=\frac{k}{n}$ (as a function of $C$) which
makes the failure probability of the condition ($\ref{eq:Grassxcondition}$)
approach zero asymptotically as $n\rightarrow \infty$. As we
can see, when $C=1$, we get the same bound  $\zeta=0.095 \times
0.5555 \approx 0.0528$ as obtained for the ``weak" threshold for the ideally sparse signals in
\cite{Donoho06}. As expected, as $C$ grows, the $\ell_{1}$
minimization requires a smaller sparsity level $\zeta$ to achieve
higher signal recovery accuracy.

%\begin{figure}[htb]
%%%%%%\begin{minipage}[b]{1.0\linewidth}
%\centering
%\centerline{\epsfig{figure=cbeta.eps,width=9cm,height=6cm}}
%%%%%%%\end{minipage}
%\caption{Allowable sparsity as a function of $C$ (allowable
%imperfection of the recovered signal is $\frac{2(C+1)\Delta}{C-1}$)}
%\label{fig:sl3}
%\end{figure}

\begin{figure}[htb]
%%%%%\begin{minipage}[b]{1.0\linewidth}
\centering \centerline{\epsfig{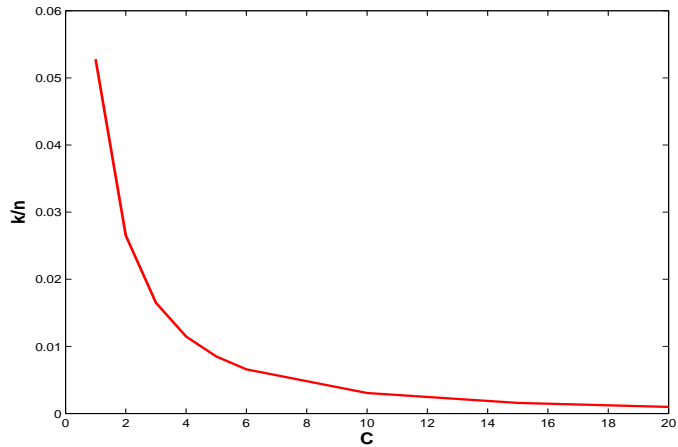}}
%%%%%%\end{minipage}
\caption{Allowable sparsity as a function of $C$ (allowable
imperfection of the recovered signal is $\frac{2(C+1)\Delta}{C-1}$)}
\label{fig:sl3}
\end{figure}

\begin{figure}[htb]
%%%%%\begin{minipage}[b]{1.0\linewidth}
\psfrag{n}{$\nu$} \centering
\centerline{\epsfig{figure=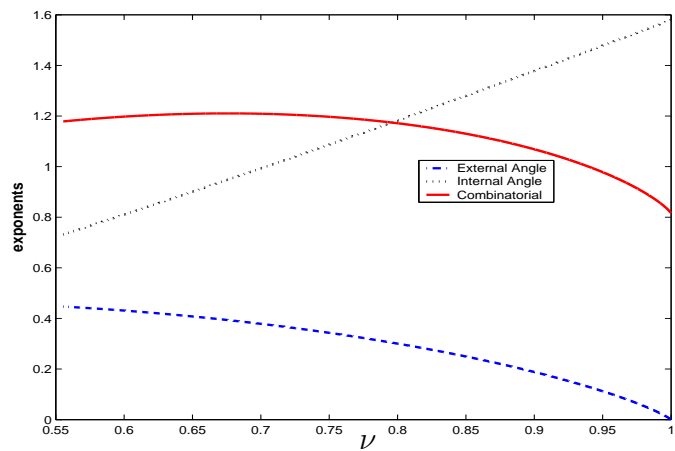,width=9cm,height=6cm}}
%%%%%%\end{minipage}
\caption{The Combinatorial, Internal and External Angle Exponents}
 \label{fig:Grassexponents}
\end{figure}

\begin{figure}[htb]
%%%%%\begin{minipage}[b]{1.0\linewidth}
\psfrag{n}{$\nu$} \centering
\centerline{\epsfig{figure=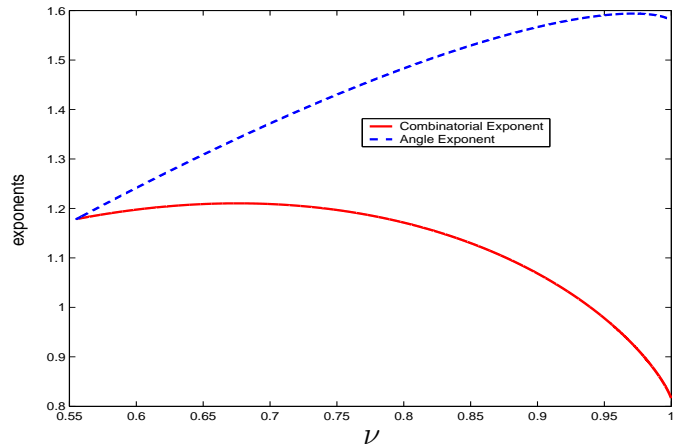,width=9cm,height=6cm}}
%%%%%%\end{minipage}
\caption{The Combinatorial Exponents and the Angle Exponents}
\label{fig:exponents2}
\end{figure}

\begin{figure}[htb]
%%%%%\begin{minipage}[b]{1.0\linewidth}
\centering
\centerline{\epsfig{figure=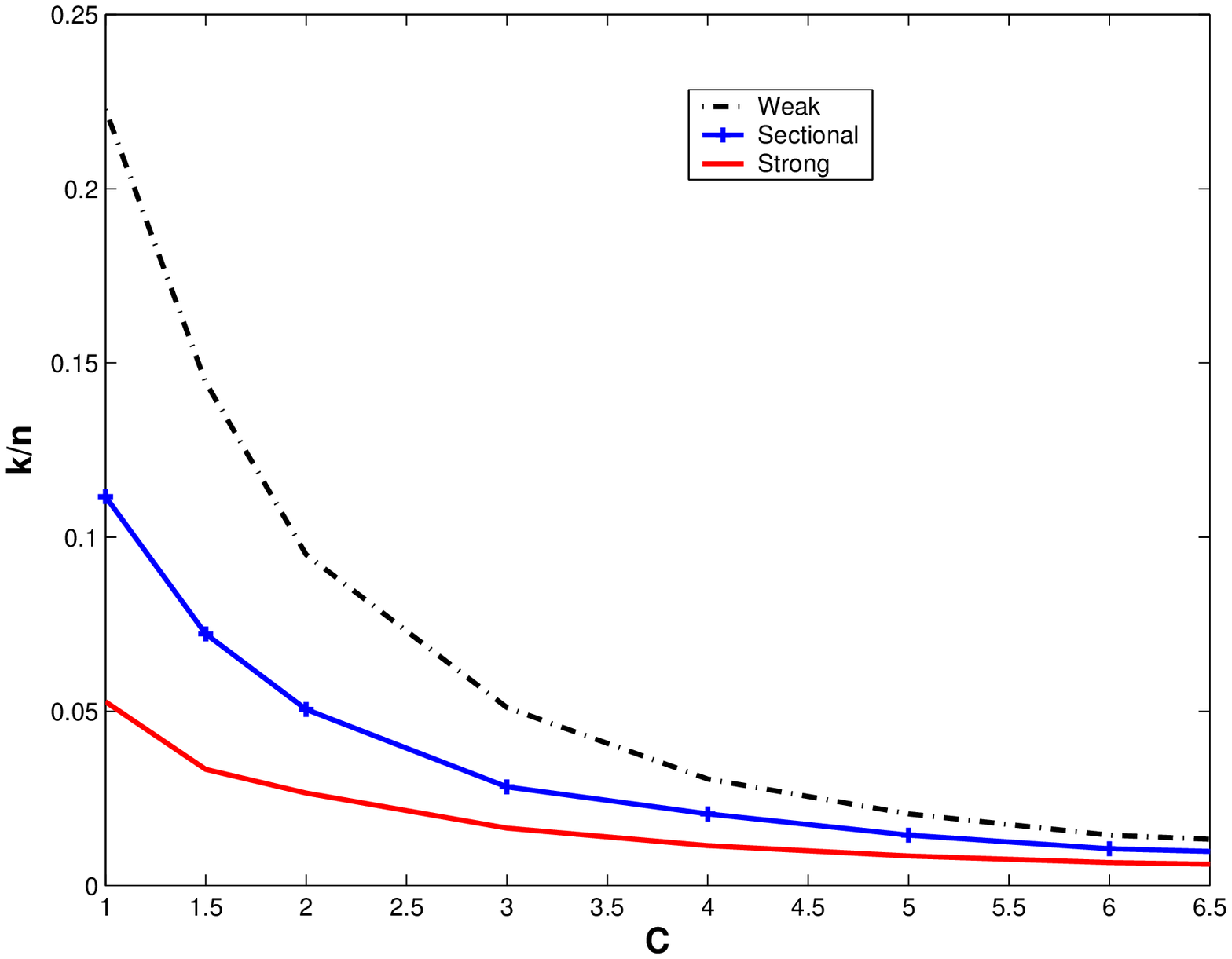,width=9cm,height=6cm}}
%%%%%%\end{minipage}
\caption{The Weak, Sectional and Strong Robustness Bounds}
\label{fig:wss}
\end{figure}

\begin{figure}[htb]
%%%%%\begin{minipage}[b]{1.0\linewidth}
\psfrag{n}{$\delta$} \psfrag{r}{$\rho$} \centering
\centerline{\epsfig{figure=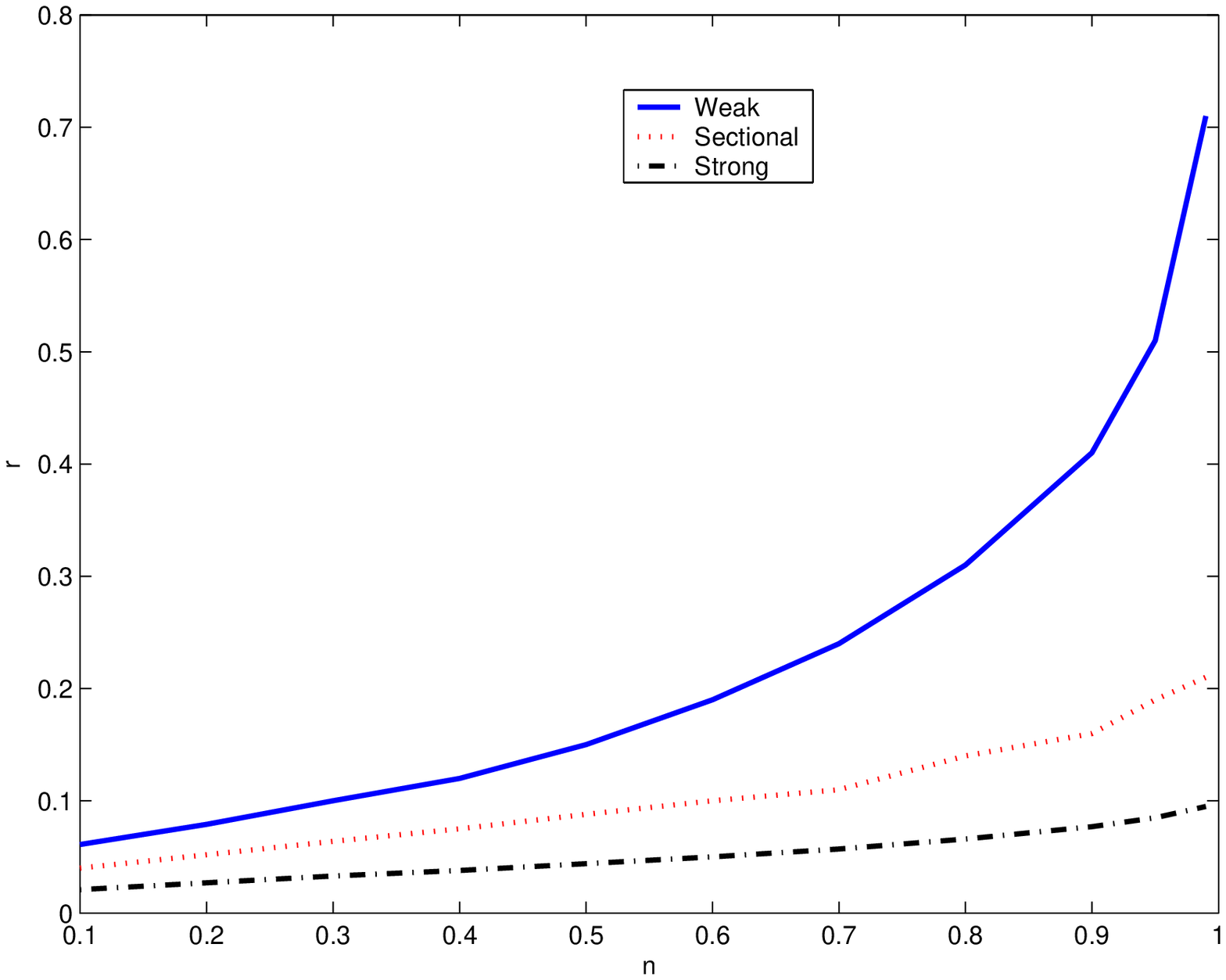,width=9cm,height=6cm}}
%%%%%%\end{minipage}
\caption{The Weak, Sectional and Strong Robustness Bounds}
\label{fig:wss01}
\end{figure}

In Figure \ref{fig:Grassexponents}, we show the exponents $\psi_{com}$,
$\psi_{int}$, $\psi_{ext}$ under the parameters $C=2$,
$\delta=0.5555$ and $\zeta=0.0265$. For the same set of parameters,
in Figure \ref{fig:exponents2}, we compare the exponents
$\psi_{com}$ and $\psi_{angle}$: the solid curve denotes
$\psi_{angle}$ and the dashed curve denotes $\psi_{com}$. It shows
that, under $ \zeta=0.0265$, $\psi_{com}-\psi_{angle}<0$ uniformly
over $\delta \leq \nu \leq 1$. Indeed, $\zeta=0.0265$ is the bound
shown in Figure \ref{fig:sl3} for $C=2$. In Figure \ref{fig:wss},
for the parameter $\delta=0.5555$, we give the bounds $\zeta$ as a
function of $C$ for satisfying the signal recovery robustness
conditions (\ref{eq:Grasswthmeq1}), (\ref{eq:wthmeq2}) and
(\ref{eq:wthmeq3}) respectively in the ``weak", ``sectional" and
``strong" senses. In Figure \ref{fig:wss01}, fixing $C=2$, we plot
how large $\rho=\zeta/\delta$ can be for different $\delta$'s while
satisfying the signal recovery robustness conditions
(\ref{eq:Grasswthmeq1}), (\ref{eq:wthmeq2}) and (\ref{eq:wthmeq3})
respectively in ``weak", ``sectional" and ``strong" senses.

%%%%%%%%%%%%%%%%%%%%%%%%%%%%%%%%%%%%%%%%%%%%%%%%%%%%%%%
\section{Conclusion}
\label{sec:con}
%%%%%%%%%%%%%%%%%%%%%%%%%%%%%%%%%%%%%%%%%%%%%%%%%%%%%%%%%%%

It is well known that $\ell_{1}$ optimization can be used to recover
ideally sparse signals in compressive sensing, if the underlying signal is sparse enough. While for the
ideally sparse signals, the results of \cite{Donoho06} have given us very sharp bounds on the sparsity threshold the $\ell_1$ minimization can recover, sharp bounds for the recovery of general signals or approximately sparse signals were not available.

In this paper we analyzed a null space characterization of the measurement matrices for the performance bounding of $\ell_{1}$-norm optimization for general signals or approximately sparse. Using high-dimensional geometry tools, we
give a unified \emph{null space Grassmann angle}-based analytical
framework for compressive sensing. This new framework gives sharp
quantitative tradeoffs between the signal sparsity parameter and the recovery
accuracy of the $\ell_{1}$ optimization for general signals or approximately sparse
signals. As expected, the neighborly polytopes result of \cite{Donoho06} for ideally sparse
signals can be viewed as a special case on this tradeoff curve.  It can therefore be of practical use in applications where
the underlying signal is not ideally sparse and where we are
interested in the quality of the recovered signal. For example, using the results and their extensions in this paper and \cite{Donoho06}, we are able to give a precise sparsity threshold analysis for weighted $\ell_1$ minimization when prior information about the signal vector is available \cite{isitweighted}.  In \cite{icasspiterweighted}, using the robustness result from this paper, we are able to show that a polynomial-time iterative weighted $\ell_1$ minimization algorithm can provably improve over the sparsity threshold of $\ell_1$ minimization for interesting classes of signals, even when prior information is not available.

In essence, this work investigates the fundamental ``balancedness" property of linear subspaces, and may be of independent mathematical interest. In future work, it is interesting to obtain more accurate analysis for compressive sensing under noisy measurements than presented in the current paper.

\section{Appendix}
\subsection{Some Concepts in the High Dimensional Geometry}
In this part, we will give the explanations of several often used geometric terminologies in this paper for the purpose of quick reference.
\subsubsection{the Grassmann Manifold}
The Grassmann manifold $\text{Gr}_{i}(j)$ refers to the set of $i$-dimensional subspaces in the $j$-dimensional Euclidean space $\mathbb{R}^j$. It is known that there exists a unique invariant measure $\mu'$ on $\text{Gr}_{i}(j)$ such that $\mu' (\text{Gr}_{i}(j))$=1.

For more facts on the Grassmann manifold, please see \cite{Grass}.
\subsubsection{Polytope, Face, Vertex} A polytope in this paper refers to the convex hull of a finite number points in the Euclidean space. Any extreme point of a polytope is a vertex of this polytope. A face of a polytope is defined as the convex hull of a set of its vertices such that no point in this convex hull is an interior point of the polytope. The dimension of a face refers to the dimension of the affine hull of that face. The book \cite{Grunbaumbook} offers a nice reference on the convex polytopes.
\subsubsection{Cross-polytope}
The $n$-dimensional cross-polytope is the polytope of unit $\ell_1$ ball, namely it is the set
\begin{equation*}
\{\x\in \mathbb{R}^n ~|~\|\x\|_1=1 \}.
\end{equation*}
The $n$-dimensional cross-polytope has $2n$ vertices, namely $\pm e_1, \pm e_2, ..., \pm e_n$, where $e_i$, $1\leq i \leq n$, is the unit vector with its $i$-th coordinate element being 1. Any $k$ extreme points without opposite pairs at the same coordinate will constitute a $(k-1)$-dimensional face of the cross-polytope. So the cross-polytope will have $2^k\binom{n}{k}$ faces of dimension $(k-1)$.

\subsubsection{the Grassmann Angle}
The Grassmann angle for a $n$-dimensional cone $\mathfrak{C}$ under the Grassmann manifold $\text{Gr}_{i}(n)$, is the measure of the set of $i$-dimensional subspaces (over $\text{Gr}_{i}(n)$) which intersect the cone $\mathfrak{C}$ nontrivially (namely at some other point besides the origin). For more details on the Grassmann angle, internal angle, and external angle, please refer to \cite{Grunbaum68}\cite{Grunbaumbook}\cite{McMullen1975}.

\subsubsection{the Internal Angle}
An internal angle $\beta(F_1, F_2)$, between two faces $F_1$ and $F_2$ of a polytope or a polyhedral cone, is the fraction of the
hypersphere $S$ covered by the cone obtained by observing the face
$F_2$ from the face $F_1$.The internal angle $\beta(F_1, F_2)$ is
defined to be zero when $F_1 \nsubseteq F_2$ and is defined to be
one if $F_1=F_2$.  Note the dimension of the
hypersphere $S$ here matches the dimension of the corresponding cone
discussed. Also, the center of the hypersphere is the apex of the
corresponding cone. All these defaults also apply to the definition
of the external angles.
\subsubsection{the External Angle}
An external angle $\gamma(F_3, F_4)$, between two faces $F_3$ and $F_4$ of a polytope or a polyhedral cone, is the fraction of the
hypersphere $S$ covered by the cone of outward normals to the
hyperplanes supporting the face $F_4$ at the face $F_3$. The
external angle $\gamma(F_3, F_4)$ is defined to be zero when $F_3
\nsubseteq F_4$ and is defined to be one if $F_3=F_4$.

\subsection{Proof of Lemma \ref{lemma:nullspaceGaussian}}
\begin{proof}
%This lemma follows from the fact that a random matrix with zero-mean i.i.d. Gaussian distributed entries generates a random subspace whose distribution is rotationally invariant (namely the distribution of that random subspace does not change when it is rotated by a unitary rotation). We also note that if a random subspace has a rotationally invariant distribution, its null space also has a rotationally invariant distribution, which again can be generated by a matrix with zero mean i.i.d. Gaussian distributed entries of appropriate dimensions (with probability 1, the dimension of this null space is $(\frac{n(n+1)}{2}-m)$).
The first statement is obvious since multiplying $A$ with a unitary $\Theta$ keeps the columns independent and the entries i.i.d. Gaussian.

Now let us look at the proof of the second statement. Consider the Singular Value Decomposition (SVD) $A = U \Sigma V^*$,
where $U$ and $V$ have orthonormal columns and $\Sigma$ is diagonal. Consider now $A\Theta$, for any given deterministic unitary $\Theta$: $A\Theta = U \Sigma V^*\Theta$. This is clearly the SVD of $A\Theta$; in particular, $\Theta^*V$ represents the right singular vectors of $A\Theta$. Since $A$ and $A\Theta$ have the same distribution (for all unitary $\Theta$), the same must be true of the right singular vectors $V$ and $\Theta^*V$. Therefore the distribution of $V$ is left-rotationally invariant: $P_V(V) =P_V(\Theta^*V)$. Now the null space of $A$ can be written as $Z = V^\perp X$, where $V^\perp$ is an $n \times (n-m)$ matrix with orthonormal columns that are orthogonal to $V$, i.e., $V^* V^\perp = 0$, and $X$ is any invertible $(n-m) \times (n-m)$ matrix. Now it is easy to see that if we change $V$ to $\Theta^*V$, for any unitary $\Theta$, we must change $V^\perp$ to $\Theta^*V^\perp$. But since left-multiplication by a unitary $\Theta^*$ does not change the distribution of $V$, left multiplication by a unitary $\Theta^*$ must not change the distribution of $V^\perp$. Thus $V^\perp$, and by fiat $Z = V^\perp X$, are left-rotationally invariant. Note to simplify the arguments, we have so far assumed that the matrix $A$ is of full rank $m$, which is true with probability 1. However, we should note that these arguments also work when the matrix $A$ is rank-deficient.

Let $G$ be an $n \times (n-m)$ matrix with i.i.d. $\mathcal{N}(0,1)$ entries and consider the QR decomposition:
$G = QR$,
where $Q$ is an $n \times (n-m)$ matrix with orthonormal columns and $R$ is an $(n-m) \times (n-m)$ upper triangular matrix with non-negative diagonals. Then it is well known that $Q$ has a left-rotationally invariant distribution, and that $R$ is a random matrix, independent of $Q$, whose strictly upper triangular entries are i.i.d. $\mathcal{N}(0,1)$ and whose $i$-th diagonal entry is an independent Chi-square random variable with $(n-i+1)/2$ degrees of freedom \cite{Muirhead}.
This implies that we can always take the $V^\perp$ obtained from the 2nd statement and post-multiply it by an independent upper triangular $R$ (with the aforementioned distribution) to obtain a matrix
$Z = V^\perp R$ with i.i.d. $\mathcal{N}(0,1)$ entries. It is always possible to choose a basis $Z$ for the null space such that $Z$ has i.i.d. ${\cal N}(0,1)$ entries.

\end{proof}

\subsection{Derivation of the Internal Angles}
There are two situations in the derivations of the internal angles $\beta(F,G)$ for the skewed cross-polytope: when $G$ is a regular face and when $G$ is the whole skewed cross-polytope $\text{SP}$. These two cases are respectively dealt with in Lemma \ref{lemma:internalapp} and Lemma \ref{lemma:internalSP}.

\begin{lemma}
Suppose that $F$ is a $(k-1)$-dimensional face of the skewed
cross-polytope
\begin{equation*}
\text{SP}=\{\y\in \mathbb{R}^n~|~\|\y_K\|_1+ \|\frac{\y_{\overline{K}}}{C}\|_1
\leq 1\}
\end{equation*}
supported on the subset $K$ with $|K|=k$. Then the internal angle
$\beta(F,G)$ between the $(k-1)$-dimensional face $F$ and a
$(l-1)$-dimensional face $G$ ($F \subseteq G$, $G \neq $SP$ $) is
given by
\begin{equation}
\beta(F,G)=\frac{V_{l-k-1}(\frac{1}{1+C^2k},l-k-1)}{V_{l-k-1}(S^{l-k-1})},
\label{eq:interfrac}
\end{equation}
where $V_i(S^i)$ denotes the $i$-th dimensional surface measure on
the unit sphere $S^{i}$, while $V_{i}(\alpha', i)$ denotes the
surface measure for regular spherical simplex with $(i+1)$ vertices
on the unit sphere $S^{i}$ and with inner product as $\alpha'$
between these $(i+1)$ vertices. (\ref{eq:internal}) is equal to
$B(\frac{1}{1+C^2k}, l-k)$, where
\begin{equation}
B(\alpha', m')=\theta^{\frac{m'-1}{2}} \sqrt{(m'-1)\alpha' +1}
\pi^{-m'/2} {\alpha'}^{-1/2}J(m',\theta) \label{eq:intB}
\end{equation}
with $\theta=(1-\alpha')/\alpha'$ and
\begin{equation}
 J(m', \theta)=\frac{1}{\sqrt{\pi}} \int_{-\infty}^{\infty}(\int_{0}^{\infty} e^{-\theta v^2+2i v\lambda} \,dv )^{m'} e^{-\lambda^2} \,d\lambda
\end{equation}
\label{lemma:internalapp}
\end{lemma}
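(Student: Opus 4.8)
The plan is to reduce the internal angle to the normalized surface measure of a regular spherical simplex and then to evaluate that measure by a Gaussian (Laplace/Fourier) integral. By the symmetry of $\text{SP}$ under sign flips and under coordinate permutations within $K$ and within $\overline{K}$, I may assume without loss of generality that $K=\{1,\dots,k\}$, that $F=\text{conv}\{-e_1,\dots,-e_k\}$, and that $G=\text{conv}\big(\{-e_i:i\in K\}\cup\{Ce_j:j\in L\}\big)$ for some $(l-k)$-subset $L\subseteq\overline{K}$; here the vertices of $\text{SP}$ on $K$ sit at $\pm e_i$ and those on $\overline{K}$ at $\pm Ce_j$, since the defining inequality weights the $\overline{K}$-coordinates by $1/C$. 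First I would form the cone obtained by observing $G$ from the centroid $c_F=-\tfrac1k\sum_{i\in K}e_i$ of $F$ (any relative interior point of $F$ yields the same cone). This cone is the conic hull of the edge directions $e_i-e_{i'}$ ($i,i'\in K$), which span the $(k-1)$-dimensional linear space $\mathrm{lin}(F-c_F)$, together with the directions $u_j=Ce_j-c_F$ for $j\in L$.

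The next step is to note that the internal angle depends only on the components of the $u_j$ orthogonal to $\mathrm{lin}(F-c_F)$, so the cone splits as an orthogonal direct sum of that $(k-1)$-dimensional subspace and the simplicial cone generated by the projections $\hat u_j$. Since $e_j\perp e_i$ for $j\in L$, $i\in K$, and since $c_F\propto\sum_{i\in K}e_i$ is itself orthogonal to every $e_i-e_{i'}$, the projection is simply $\hat u_j=Ce_j+\tfrac1k\sum_{i\in K}e_i$. A direct computation then gives $\|\hat u_j\|^2=C^2+\tfrac1k$ and $\langle\hat u_j,\hat u_{j'}\rangle=\tfrac1k$ for $j\neq j'$, so the normalized pairwise inner product of the generators is exactly $\alpha'=\frac{1/k}{C^2+1/k}=\frac{1}{1+C^2k}$. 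Because the solid angle of an orthogonal direct sum of a linear space and a cone equals the solid angle of the cone measured in its own span, this identifies $\beta(F,G)$ with the fraction of $S^{l-k-1}$ covered by the regular spherical simplex spanned by $l-k$ unit vectors of common inner product $\alpha'$, i.e. with $V_{l-k-1}(\alpha',l-k-1)/V_{l-k-1}(S^{l-k-1})$, which is (\ref{eq:interfrac}).

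It then remains to evaluate this ratio into the closed form $B(\alpha',l-k)$. Writing $m'=l-k$ and letting $\Gamma=(1-\alpha')I+\alpha'\mathbf1\mathbf1^{T}$ denote the Gram matrix of unit generators $w_1,\dots,w_{m'}$, I would express the solid angle as a Gaussian integral over the cone: a point $x=\sum_i t_i w_i$ with $t\in\mathbb R_+^{m'}$ has $\|x\|^2=t^{T}\Gamma t=(1-\alpha')\|t\|^2+\alpha'(\sum_i t_i)^2$ and Jacobian $\sqrt{\det\Gamma}$, so the fraction of the sphere is $\big(\pi^{m'/2}\big)^{-1}\sqrt{\det\Gamma}\int_{\mathbb R_+^{m'}}e^{-t^{T}\Gamma t}\,dt$, where $\det\Gamma=(1-\alpha')^{m'-1}\big(1+(m'-1)\alpha'\big)$ already accounts for the factor $\sqrt{(m'-1)\alpha'+1}$ in (\ref{eq:intB}). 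The only nontrivial point is to decouple the rank-one term $\alpha'(\sum_i t_i)^2$; applying the identity $e^{-a^2}=\tfrac1{\sqrt\pi}\int_{-\infty}^{\infty}e^{-\lambda^2+2ia\lambda}\,d\lambda$ with $a=\sqrt{\alpha'}\sum_i t_i$ factorizes the $t$-integral into an $m'$-th power of a single one-dimensional integral, producing exactly $J(m',\theta)$ with $\theta=(1-\alpha')/\alpha'$ after rescaling $v$.

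I expect this final evaluation to be the main obstacle. The geometric decomposition in the first two steps is essentially bookkeeping with a highly symmetric vertex set, whereas assembling the normalizing constants correctly, and justifying the interchange of the $t$- and $\lambda$-integrations in the Fourier representation, is where the real work lies; the delicacy of this constant-tracking is precisely what leads the authors to flag a factor discrepancy with \cite{Donoho06} after (\ref{eq:pdffreq}). I would also treat separately the excluded case $G=\text{SP}$ (deferred to Lemma \ref{lemma:internalSP}), since then one of the "directions'' degenerates and the clean simplicial-cone structure above no longer applies.
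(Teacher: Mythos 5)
Your proposal is correct, and the geometric half coincides with the paper's argument: the same reduction of $\beta(F,G)$ to the positive hull of the $(l-k)$ projected directions $Ce_j - \frac{1}{k}\sum_{i\in K}e_i$, the same computation of the common normalized inner product $\frac{1}{1+C^2k}$, and the same appeal to the orthogonal-direct-sum property to pass to the regular spherical simplex. Where you diverge is in evaluating that simplex's measure into the closed form $B(\alpha',m')$. You parametrize the cone intrinsically by its Gram matrix $\Gamma=(1-\alpha')I+\alpha'\mathbf{1}\mathbf{1}^{T}$, pick up $\sqrt{\det\Gamma}=(1-\alpha')^{(m'-1)/2}\sqrt{1+(m'-1)\alpha'}$ as the Jacobian, and decouple the rank-one term by the Gaussian linearization $e^{-a^2}=\frac{1}{\sqrt{\pi}}\int e^{-\lambda^2+2ia\lambda}\,d\lambda$; I checked that after the rescaling $v=\sqrt{\alpha'}\,t$ this reproduces $\theta^{(m'-1)/2}\sqrt{(m'-1)\alpha'+1}\,\pi^{-m'/2}\alpha'^{-1/2}J(m',\theta)$ exactly. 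The paper instead keeps ambient coordinates, parametrizes the cone by the affine slice $\{x_p\ge 0,\ \sum_{p\ge 2}x_p=Ckx_1\}$ with Jacobian $\sqrt{(l+(C^2-1)k)/(C^2k)}$, and recognizes the resulting integral as $p_Z(0)$ for $Z=X_2+\cdots+X_{l-k+1}-CkX_1$ with $X_1$ normal and the rest half-normal, then Fourier-inverts the characteristic function. The two routes are the same computation in different clothing — your $\lambda$-integral is precisely the characteristic-function inversion — but the paper's choice is deliberate: the probabilistic reading of the integrand as a density of a sum of independent half-normals is what feeds directly into the large-deviations evaluation of $\psi_{int}$ in the following section, whereas your Gram-matrix route is closer to the classical Vershik--Sporyshev/B\"or\"oczky--Henk derivation that the paper cites and sidesteps. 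Your handling of the excluded case $G=\text{SP}$ (deferring it, since the cone then has $2(n-k)$ generators and is no longer simplicial) also matches the paper's treatment.
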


\begin{proof}
%\cite{Vershik92,henk}
Without loss of generality, assume that $F$ is a $(k-1)$-dimensional
face with $k$ vertices as $e_p, 1\leq p \leq k$, where $e_p$ is the
$n$-dimensional standard unit vector with the $p$-th element as `1';
and also assume that the $(l-1)$-dimensional face $G$ be the convex
hull of the $l$ vertices: $e_p, 1\leq p \leq k$ and $Ce_p, (k+1)\leq
p \leq l$. Then the cone $\text{Con}_{F,G}$ formed by observing the
$(l-1)$-dimensional face $G$ of the skewed cross-polytope
$\text{SP}$ from an interior point $x^F$ of the face $F$ is the
positive cone of the vectors:
\begin{equation}
Ce_j-e_i, ~~\text{for all}~~ j \in {J \backslash K},~i \in {K},
\label{eq:Grassstset}
\end{equation}
and also the vectors
\begin{equation}
e_{i_1}-e_{i_2}, ~~\text{for all}~~{i_1} \in {K},~{i_2} \in {K},
\label{eq:Grassndset}
\end{equation}
where $J=\{1, 2,..., l\}$ is the support set for the face $G$.

So the cone $\text{Con}_{F,G}$ is the direct sum of the linear hull
$L_{F}=\text{lin}\{F-x^F\}$ formed by the vectors in
(\ref{eq:Grassndset}) and the cone
$\text{Con}_{F^{\perp},G}=\text{Con}_{F,G} \bigcap L_{F}^{\perp}$,
where $L_{F}^{\perp}$ is the orthogonal complement to the linear
subspace $L_{F}$. Then $\text{Con}_{F^{\perp},G}$ has the same
spherical volume as $\text{Con}_{F,G}$.

Now let us analyze the structure of $\text{Con}_{F^{\perp},G}$. We
notice that the vector
\begin{equation*}
e_{0}=\sum_{p=1}^{k}e_p
\end{equation*}
is in the linear space $L_{F}^{\perp}$ and is also the only such a
vector (up to linear scaling) supported on $K$. Thus a vector $\x$
in the positive cone $\text{Con}_{F^{\perp},G}$ must take the form
\begin{equation}
-\sum_{i=1}^{k}{b_i \times e_i}+\sum_{i=k+1}^{l}{b_i \times e_i},
\label{eq:Grassvform}
\end{equation}
where $b_i, 1 \leq i \leq l$ are nonnegative real numbers  and
\begin{eqnarray*}
&&C \sum_{i=1}^{k}{b_i}=\sum_{i=k+1}^{l}{b_{i}}, \\
&&b_1=b_2=\cdots=b_k.
\end{eqnarray*}

That is to say, the  $(l-k)$-dimensional $\text{Con}_{F^{\perp},G}$
is the positive cone of $(l-k)$ vectors $a^1, a^2,...,a^{l-k}$,
where
\begin{equation*}
a^i=C\times e_{k+i}- \sum_{p=1}^{k}e_p/k,~~1\leq i \leq (l-k).
\end{equation*}

The normalized inner products between any two of these $(l-k)$
vectors is
\begin{equation*}
\frac{<a^i, a^j>}{\|a^i\|\|a^j\|}=\frac{k \times
\frac{1}{k^2}}{C^2+k \times \frac{1}{k^2}}=\frac{1}{1+kC^2}.
\end{equation*}
( In fact, $a^{i}$'s are also the vectors obtained by observing the
vertices $e_{k+1}, \cdots, e_{l}$ from $Ec=\sum_{p=1}^{k}e_p/k$, the
epicenter of the face $F$.)

We have so far reduced the computation of the internal angle to
evaluating (\ref{eq:interfrac}), the relative spherical volume of
the cone $\text{Con}_{F^{\perp},G}$ with respect to the sphere
surface $S^{l-k-1}$. This was computed as given in this lemma
\cite{Vershik92,henk} for the positive cones of vectors with equal
inner products by using a transformation of variables and the
well-known formula
\begin{equation*}
V_{i-1}(S^{i-1})=\frac{i\pi^{\frac{i}{2}}}{\Gamma(\frac{i}{2}+1)},
\end{equation*}
where $\Gamma(\cdot)$ is the usual Gamma function.

Instead, in this paper, we will give a proof of (\ref{eq:intB})
which can directly lead to the probabilistic large deviation method
of evaluating the internal angle exponent in \cite{Donoho06}.

First, we notice that $\text{Con}_{F^{\perp},G}$ is a
$(l-k)$-dimensional cone. Also, all the vectors $(x_1, \cdots,
x_{n})$ in the cone $\text{Con}_{F^{\perp},G}$ take the form in
(\ref{eq:Grassvform}).
 From \cite{Hadwiger79},
\begin{eqnarray}
&&\nonumber\int_{\text{Con}_{F^{\perp},G}}{e^{-\|x'\|^2}}\,dx'=\beta(F,G)
V_{l-k-1}(S^{l-k-1}) \\
&&\times \int_{0}^{\infty}{e^{-r^2}} r^{l-k-1}\,dr =\beta(F,G) \cdot
\pi^{(l-k)/2},
 \label{eq:Grassinaxchdirect}
\end{eqnarray}
where $V_{l-k-1}(S^{l-k-1})$ is the spherical volume of the
$(l-k-1)$-dimensional sphere $S^{l-k-1}$. Now define $U\subseteq
\mathbb{R}^{l-k+1}$ as the set of all nonnegative vectors satisfying:
\begin{equation*}
x_p \geq 0,~1 \leq p\leq l-k+1,~\sum_{p=2}^{l-k+1}x_p=Ckx_{1}
\end{equation*}
and define $f(x_1,~\cdots,~ x_{l-k+1}):U \rightarrow
\text{Con}_{F^{\perp},G}$ to be the linear and bijective map
\begin{eqnarray*}
f(x_1,~\cdots,~ x_{l-k+1})&=&-\sum_{p=1}^{k} x_1
e_p+\sum_{p=k+1}^{l} x_{p-k} \times e_p.
\end{eqnarray*}
Then
\begin{eqnarray}
&&~~\int_{ \text{Con}_{F^{\perp},G} }{e^{-\|x'\|^2}}\,dx' \nonumber \\
&&= \sqrt{\frac{l+(C^2-1)k}{C^2k}} \int_{\sum_{p=2}^{l-k+1}x_p =
Ckx_1, x_p \geq 0,~2 \leq p \leq l-k+1} \nonumber
\\&&~~~~~~~~~~~~~~~~~{e^{-\|f(x)\|^2}}\,d{x_2} \cdots
dx_{l-k+1} \nonumber \\
&&=\sqrt{\frac{l+(C^2-1)k}{C^2k}}\int_{\sum_{p=2}^{l-k+1}x_p =
Ckx_1, x_p \geq 0,~2 \leq p \leq l-k+1} \nonumber\\
&&~~~ e^{-kx_1^2-x_2^2-\cdots-x_{l-k+1}^{2} } \,d{x_2} \cdots
dx_{l-k+1} \label{eq:Grassvsigular}
\end{eqnarray}
where $\sqrt{\frac{l+(C^2-1)k}{C^2k}}$ is due to the change of
integral variables.

In fact, when
\begin{equation*}
x_p \geq 0,~1 \leq p\leq l-k+1,~\sum_{p=2}^{l-k+1}x_p=Ckx_{1},
\end{equation*}
the function $f$ is a linear transformation over the variables $x_2, ..., x_{l-k+1}$ with the following transformation matrix $M$ (disregarding the indices beyond $l$)
\[ \left( \begin{array}{ccccccc}
-\frac{1}{Ck} & \cdots & -\frac{1}{Ck}&1&0&\cdots&0\\
\vdots & \cdots & \vdots&0&1&\cdots&0\\
\vdots & \cdots & \vdots&\vdots&\vdots&\ddots&\vdots\\
-\frac{1}{Ck} & \cdots & -\frac{1}{Ck}&0&0&\cdots&1
 \end{array} \right).\]

It can then be calculated that the Jacobian of this transformation is $\sqrt{\det{MM^{T}}}=\sqrt{\frac{l+(C^2-1)k}{C^2k}}$, which accounts for the coefficient appearing in (\ref{eq:Grassvsigular}).

Now we define a random variable
\begin{equation*}
Z=X_2+X_3+\cdots+X_{l-k+1}-CkX_1,
\end{equation*}
where $X_1, X_2, \cdots, X_{n}$ are independent random variables,
with $X_p \sim HN(0,\frac{1}{2})$, $2 \leq p \leq (n-k+1)$, as
half-normal distributed random variables and $X_1\sim N(0,
\frac{1}{2k})$ as a normal distributed random variable. Then by
inspection, (\ref{eq:Grassvsigular}) is equal to
\begin{equation*}
 \frac{{\sqrt{\pi}}^{l-k+1}}{2^{l-k}} \times
 {\sqrt{l+(C^2-1)k}} p_{Z}(0).
\end{equation*}
where $p_{Z}(\cdot)$ is the probability density function for the
random variable $Z=X_2+X_3+\cdots+X_{l-k+1}-CKX_1$ and $p_{Z}(0)$ is
the probability density function $p_{Z}(z)$ evaluated at the
point $z=0$.

Use the notation
\begin{equation*}
G_{X}(\lambda)=\int_{-\infty}^{\infty} e^{i\lambda x}p_{X}(x) \,dx
\end{equation*}
as the characteristic function for any random variable $X$, where
$p_{X}(x)$ is the probability density function of $X$. Then from the
independence of $X_1, X_2,\ldots , X_{l-k+1}$, the characteristic
function for $Z$ is equal to
\begin{equation*}
G_{Z}(\lambda)=G_{X_{2}}(\lambda)^{l-k} \times G_{X_{1}}(\lambda).
\end{equation*}

Expressing the probability density function $p_{Z}(x)$ in the
Fourier domain, we have
\begin{equation*}
p_{Z}(0)=\frac{1}{2\pi}\int_{-\infty}^{\infty} G_{Z}(\lambda) \,
d\lambda.
\end{equation*}

Combining this with (\ref{eq:Grassinaxchdirect}), we can obtain the desired result (\ref{eq:intB}); and, as we already see in the derivations, we naturally arrive at the probabilistic explanation for the internal angle.
\end{proof}

So far at this point, we have only considered the internal angle $\beta(F,G)$ when $G$
is not the whole skewed cross-polytope. The following lemma
discusses the special case when $G=\text{SP}$.
\begin{lemma}
Suppose $F$, $K$ and $\text{SP}$ are defined in the same way as in
the statement of Lemma \ref{lemma:internalapp}. Then the internal angle
$\beta(F,\text{SP})$ between the $(k-1)$-dimensional face $F$ and
the $n$-dimensional skewed cross-polytope $\text{SP}$ is given by

\begin{eqnarray*}
 && \int_{-\infty}^{0}  \int_{-\infty}^{\infty} e^{-\frac{C^2k\lambda^2}{4}}  \left( \int_{0}^{\infty} e^{-\mu^2+i\mu\lambda}  \,d\mu \right)^{n-k} \\
 &&~~~~~~~~~~~~~~~~~~\frac{2^{n-k-1}}{{\sqrt{\pi}}^{n-k+2}}e^{-i\lambda z}  \,d\lambda     \,dz.
\end{eqnarray*}
\label{lemma:internalSP}
\end{lemma}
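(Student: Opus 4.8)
The plan is to follow exactly the route used for the regular face in Lemma \ref{lemma:internalapp}, the only genuinely new ingredient being the geometry of the observation cone when $G$ is the entire polytope $\text{SP}$. As before, I would take $F$ to be the $(k-1)$-dimensional face with vertices $e_p$, $p\in K$, fix the relative interior point $x^F=\frac{1}{k}e_0$ with $e_0=\sum_{p\in K}e_p$, and form the cone $\text{Con}_{F,\text{SP}}$ obtained by observing $\text{SP}$ from $x^F$. This cone splits as the direct sum of the linear hull $L_F=\text{lin}\{F-x^F\}$ (dimension $k-1$) and its reduced part $\text{Con}_{F^{\perp},\text{SP}}=\text{Con}_{F,\text{SP}}\cap L_F^{\perp}$, where $L_F^{\perp}=\text{span}(e_0)\oplus\text{span}\{e_p:p\in\overline{K}\}$ has dimension $n-k+1$. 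Since the two cones carry the same spherical measure, the Gaussian-integral identity already used in (\ref{eq:Grassinaxchdirect}) gives
\begin{equation*}
\beta(F,\text{SP})=\pi^{-(n-k+1)/2}\int_{\text{Con}_{F^{\perp},\text{SP}}}e^{-\|x'\|^2}\,dx'.
\end{equation*}

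The decisive step, and the one that differs from the regular case, is to describe $\text{Con}_{F^{\perp},\text{SP}}$ explicitly. Because $G=\text{SP}$ is the full skewed cross-polytope, observing it from $x^F$ exposes not only the vertices $Ce_p$ ($p\in\overline{K}$) but also all the antipodal vertices $-e_p$ ($p\in K$) and $-Ce_p$ ($p\in\overline{K}$). Projecting the corresponding edge directions onto $L_F^{\perp}$ (using the orthogonal projection $P_{\perp}e_p=\frac{1}{k}e_0$ for $p\in K$ and $P_{\perp}e_p=e_p$ for $p\in\overline{K}$), the generators become $-e_0$ together with $\pm Ce_p-\frac{1}{k}e_0$. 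A short nonnegative-combination computation then shows that the reduced cone is \emph{full-dimensional} in $L_F^{\perp}$ and is cut out by a single inequality,
\begin{equation*}
\text{Con}_{F^{\perp},\text{SP}}=\Big\{\,w\,e_0+\sum_{p\in\overline{K}}u_p\,e_p \;:\; -Ckw\geq \sum_{p\in\overline{K}}|u_p|\,\Big\},
\end{equation*}
in contrast to the \emph{equality} constraint that produced the lower-dimensional cone in Lemma \ref{lemma:internalapp}. Getting this description right---in particular the inequality and the coefficient $Ck$---is the main obstacle; everything afterward is bookkeeping.

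With the cone in hand, I would evaluate the integral probabilistically, exactly in the spirit of the regular-face derivation. Using $\|w\,e_0+\sum u_p e_p\|^2=kw^2+\sum u_p^2$ and the volume element $dx'=\sqrt{k}\,dw\prod du_p$ (recall $\|e_0\|^2=k$ and $e_0\perp e_p$), I would symmetrize over the signs of the $u_p$ (a factor $2^{n-k}$, reducing to $\mu_p=|u_p|\geq 0$) and then reinterpret the weights as densities: $w$ as $N(0,\frac{1}{2k})$ and each $\mu_p$ as a half-normal $HN(0,\frac{1}{2})$. All the normalizing constants then collapse neatly into $\pi^{(n-k+1)/2}$, leaving
\begin{equation*}
\beta(F,\text{SP})=\Prob\{Z\leq 0\},\qquad Z=Y+\sum_{p\in\overline{K}}X_p,
\end{equation*}
where $Y=Ckw\sim N(0,C^2k/2)$ and the $X_p\sim HN(0,\frac{1}{2})$ are mutually independent. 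Finally I would write $\Prob\{Z\leq 0\}=\int_{-\infty}^{0}p_Z(z)\,dz$ and expand $p_Z$ by Fourier inversion of the product characteristic function, using $E[e^{i\lambda Y}]=e^{-C^2k\lambda^2/4}$ and $E[e^{i\lambda X_p}]=\frac{2}{\sqrt{\pi}}\int_0^{\infty}e^{-\mu^2+i\mu\lambda}\,d\mu$. Collecting the constant $\frac{1}{2\pi}\left(\frac{2}{\sqrt{\pi}}\right)^{n-k}=\frac{2^{n-k-1}}{(\sqrt{\pi})^{n-k+2}}$ reproduces the claimed double integral verbatim, which completes the proof.
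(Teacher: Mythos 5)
Your proposal is correct and follows essentially the same route as the paper's own proof: the same splitting of $\text{Con}_{F,\text{SP}}$ into $L_F$ and its $(n-k+1)$-dimensional reduced part, the same single-inequality description $\sum_{p\in\overline{K}}|u_p|\le -Ckw$ of that cone (the paper records it as $b_1=\cdots=b_k\le 0$, $\sum_{p>k}|b_p|\le Ck|b_1|$, generated by $\pm Ce_p-\frac{1}{k}e_0$), and the same reinterpretation of the Gaussian cone integral as $\Prob\{Z\le 0\}$ for a normal-plus-half-normals sum, followed by Fourier inversion of the product characteristic function. All constants, including the Jacobian $\sqrt{k}$, the symmetrization factor $2^{n-k}$, and the final $\frac{2^{n-k-1}}{(\sqrt{\pi})^{n-k+2}}$, agree with the paper.
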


\begin{proof}
We use the same set of notations as in the proof of Lemma
\ref{lemma:internalapp}. Without loss of generality, assume $K=\{1,
\cdots,k\}$, $F$ is the $(k-1)$-dimensional face supported on $K$
and $G=\text{SP}$. So the cone $\text{Con}_{F,G}$ is the direct sum
of the linear hull $L_{F}=\text{lin}\{F-x^F\}$ formed by the vectors
in (\ref{eq:Grassndset}) and the cone
$\text{Con}_{F^{\perp},G}=\text{Con}_{F,G} \bigcap L_{F}^{\perp}$,
where $L_{F}^{\perp}$ is the orthogonal complement to the linear
subspace $L_{F}$. Then $\text{Con}_{F^{\perp},G}$ has the same
spherical volume as $\text{Con}_{F,G}$.

Following similar analysis in Lemma \ref{lemma:internalapp}, the cone
$\text{Con}_{F^{\perp},G}$ is the positive cone of $2(n-k)$ vectors
$a_{\pm}^1, a_{\pm}^2,...,a_{\pm}^{n-k}$, where
\begin{equation*}
a_{\pm}^i=\pm C\times e_{k+i}- \sum_{p=1}^{k}e_p/k,~~1\leq i \leq
(n-k).
\end{equation*}

This also means that $\text{Con}_{F^{\perp},G}$ is a
$(n-k+1)$-dimensional cone. Also, all the vectors $(b_1, \cdots,
b_{n})$ in the cone $\text{Con}_{F^{\perp},G}$ take the form
\begin{eqnarray*}
&&b_1=b_2=\cdots=b_{k}\leq 0,\\
&&\sum_{p=k+1}^{n}|b_{p}|\leq C k|b_1|.
\end{eqnarray*}

From \cite{Hadwiger79},
\begin{eqnarray}
&&\nonumber\int_{\text{Con}_{F^{\perp},G}}{e^{-\|x'\|^2}}\,dx'=\beta(F,G)
V_{n-k}(S^{n-k}) \\
&&\times \int_{0}^{\infty}{e^{-r^2}} r^{n-k}\,dr =\beta(F,G) \cdot
\pi^{(n-k+1)/2},
 \label{eq:inaxch}
\end{eqnarray}
where $V_{n-k}(S^{n-k})$ is the spherical volume of the
$(n-k)$-dimensional sphere $S^{n-k}$. Now define $U\subseteq
\mathbb{R}^{n-k+1}$ as the set of all the vectors taking the form:
\begin{equation*}
\{x_{1} \geq 0, \sum_{p=2}^{n-k+1}|x_p| \leq Ckx_{1}\}
\end{equation*}
and define $f(x_1,x_2,~\cdots,~ x_{n-k+1}):U \rightarrow
\text{Con}_{F^{\perp},G}$ to be the linear and bijective map
\begin{eqnarray*}
f(x_1,~\cdots,~ x_{n-k+1})&=&-\sum_{p=1}^{k} x_1
e_p+\sum_{p=k+1}^{n} x_{p-k+1} \times e_p.
\end{eqnarray*}
Then
\begin{eqnarray}
&&~~\int_{ \text{Con}_{F^{\perp},G} }{e^{-\|x'\|^2}}\,dx'= \sqrt{k} \int_{U}{e^{-\|f(x)\|^2}}\,dx \nonumber \\
&&=\sqrt{k} \int_{0}^{\infty} \int_{\sum_{p=2}^{n-k+1}|x_p| \leq
Ckx_1} \nonumber\\
&&~~~ e^{-kx_1^2-x_2^2-\cdots-x_{n-k+1}^{2} } \,d{x_2} \cdots
dx_{n-k+1} \,\,dx_1 \label{eq:sigular}
\end{eqnarray}
where $\sqrt{k}$ is due to the change of integral variables.

By inspection, (\ref{eq:sigular}) is equal to
\begin{equation*}
 {\sqrt{\pi}}^{n-k+1} P(X_2+X_3+\cdots+X_{n-k+1}-CkX_1 \leq 0),
\end{equation*}
where $X_1, X_2, \cdots, X_{n-k+1}$ are independent random variables,
with  $X_p \sim HN(0,\frac{1}{2})$, $2 \leq p \leq (n-k+1)$, as
half-normal distributed random variables and $X_1\sim N(0,
\frac{1}{2k})$ as a normal distributed random variable.

Expressing the probability density function of
$Z=X_2+X_3+\cdots+X_{n-k+1}-CKX_1$ in the Fourier domain, we can
simplify (\ref{eq:sigular}) to
\begin{eqnarray*}
 && \int_{-\infty}^{0}  \int_{-\infty}^{\infty} e^{-\frac{C^2k\lambda^2}{4}}  \left( \int_{0}^{\infty} e^{-\mu^2+i\mu\lambda}  \,d\mu \right)^{n-k} \\
 &&~~~~~~~~~~~~~~~~~~\frac{2^{n-k-1}}{\sqrt{\pi}}e^{-i\lambda z}  \,d\lambda     \,dz
\end{eqnarray*}
Combining this with (\ref{eq:inaxch}) gives us the desired result.
\end{proof}

\subsection{Derivation of the External Angles}
\begin{lemma}
Suppose that $F$ is a $(k-1)$-dimensional face of the skewed
cross-polytope
\begin{equation*}
\text{SP}=\{\y\in \mathbb{R}^n~|~\|\y_K\|_1+ \|\frac{\y_{\overline{K}}}{C}\|_1
\leq 1\}
\end{equation*}
supported on a subset $K$ with $|K|=k$. Then the external angle
$\gamma(G, \text{SP})$ between a $(l-1)$-dimensional face $G$ ($F
\subseteq G$) and the skewed cross-polytope $\text{SP}$ is given by
\begin{equation}
\gamma(G, \text{SP})=\frac{2^{n-l}}{{\sqrt{\pi}}^{n-l+1}}
\int_{0}^{\infty}e^{-x^2}(\int_{0}^{\frac{x}{C\sqrt{k+\frac{l-k}{C^2}}}}
e^{-y^2} \,dy)^{n-l}\,dx.
\end{equation}
\label{lemma:external}
\end{lemma}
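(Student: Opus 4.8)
The plan is to compute $\gamma(G,\text{SP})$ directly from its definition as the normalized solid angle of the outward normal cone of $\text{SP}$ at $G$, reusing the Gaussian-integral device already exploited for the internal angle in Lemma \ref{lemma:internalapp}. First I would fix coordinates as in the internal-angle proof: without loss of generality take $K=\{1,\dots,k\}$, let $F$ have vertices $e_1,\dots,e_k$, and let $G$ be the $(l-1)$-dimensional face with vertices $e_1,\dots,e_k,Ce_{k+1},\dots,Ce_l$. Since $\text{SP}$ is the unit ball of the weighted $\ell_1$ norm $\|\y_K\|_1+\frac{1}{C}\|\y_{\overline{K}}\|_1$, it is combinatorially a cross-polytope with $2^n$ simplex facets indexed by sign patterns $\epsilon\in\{\pm1\}^n$; the facet for $\epsilon$ is supported by the hyperplane with outward normal having entries $\epsilon_i$ on $K$ and $\epsilon_i/C$ on $\overline{K}$. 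Because every vertex of $G$ has a positive coordinate in positions $1,\dots,l$, the face $G$ lies on exactly the $2^{n-l}$ facets with $\epsilon_1=\cdots=\epsilon_l=+1$ and $\epsilon_{l+1},\dots,\epsilon_n$ free, and on no others.

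Next I would identify the normal cone $N(G)$, which for a polytope is generated by the outward normals of the facets containing $G$ (equivalently, the outward normals of the supporting hyperplanes at $G$). These generators are
\[
u_\epsilon=\sum_{i=1}^{k}e_i+\frac{1}{C}\sum_{i=k+1}^{l}e_i+\frac{1}{C}\sum_{j=l+1}^{n}\epsilon_j e_j .
\]
Writing $v_0=\sum_{i=1}^{k}e_i+\frac{1}{C}\sum_{i=k+1}^{l}e_i$ for the common part and using that the convex hull of $\{\pm1\}^{n-l}$ is the cube $[-1,1]^{n-l}$, a short argument shows that a point of the conic hull of the $u_\epsilon$ is exactly $\Lambda v_0+\sum_{j>l}c_j e_j$ with $\Lambda\ge 0$ and $|c_j|\le \Lambda/C$ for each $j>l$. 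Thus $N(G)$ is an $(n-l+1)$-dimensional \emph{box cone} whose axis is $v_0$ and which opens by a box of half-width $\Lambda/C$ in the $n-l$ transverse coordinate directions $e_{l+1},\dots,e_n$. The single quantity that controls the final integral is $\|v_0\|=\sqrt{k+(l-k)/C^2}$, and it is crucial that $v_0$ is supported on coordinates $1,\dots,l$ and hence orthogonal to every $e_j$ with $j>l$.

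Then I would invoke the identity underlying (\ref{eq:Grassinaxchdirect}): for any $d$-dimensional cone the sphere fraction $\gamma$ satisfies $\int_{\text{cone}}e^{-\|x\|^2}\,dx=\gamma\cdot\pi^{d/2}$. With $d=n-l+1$ this gives $\gamma(G,\text{SP})=\pi^{-(n-l+1)/2}\int_{N(G)}e^{-\|x\|^2}\,dx$. I would parametrize $N(G)$ by the orthonormal coordinates $t$ (the component along $v_0/\|v_0\|$, so $t=\Lambda\|v_0\|\ge 0$) together with $c_{l+1},\dots,c_n$; since the axis is orthogonal to the transverse directions this change of variables has unit Jacobian, so $dx=dt\,dc_{l+1}\cdots dc_n$ and $\|x\|^2=t^2+\sum_{j>l}c_j^2$, while the constraint becomes $|c_j|\le t/(C\|v_0\|)=t/\big(C\sqrt{k+(l-k)/C^2}\big)$. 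The integral then factorizes across the transverse coordinates, each contributing $2\int_0^{t/(C\|v_0\|)}e^{-y^2}\,dy$, and after relabeling $t$ as $x$ one obtains exactly
\[
\gamma(G,\text{SP})=\frac{2^{n-l}}{(\sqrt{\pi})^{n-l+1}}\int_0^\infty e^{-x^2}\Big(\int_0^{x/(C\sqrt{k+(l-k)/C^2})}e^{-y^2}\,dy\Big)^{n-l}\,dx .
\]

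I expect the main obstacle to be the second step: correctly determining $N(G)$. One must verify both that $G$ lies on precisely the stated $2^{n-l}$ facets and that the conic hull of their normals is the box cone with axis $v_0$ and the right axis norm $\sqrt{k+(l-k)/C^2}$, since this lone number fixes the upper limit of the inner integral. Once the box-cone structure and the orthogonality of $v_0$ to the transverse directions are established, the remaining computation is a routine orthonormal change of variables and a factorization entirely parallel to (and in fact cleaner than) the internal-angle calculation, where a nontrivial Jacobian appeared.
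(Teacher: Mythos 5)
Your proposal is correct and follows essentially the same route as the paper's proof: identify the $2^{n-l}$ facets of $\text{SP}$ containing $G$, take the positive hull of their outward normals as the normal cone, and evaluate its solid angle via the Gaussian integral identity $\int_{\text{cone}}e^{-\|x\|^2}\,dx=\gamma\cdot\pi^{d/2}$, with the factorization over the $n-l$ transverse coordinates producing the stated formula. The only difference is cosmetic: you parametrize the box cone by an orthonormal frame (unit Jacobian, with $\|v_0\|=\sqrt{k+(l-k)/C^2}$ entering through the constraint $|c_j|\le t/(C\|v_0\|)$), whereas the paper uses a linear map from a region $U\subseteq\mathbb{R}^{n-l+1}$ whose Jacobian $\sqrt{k+(l-k)/C^2}$ carries the same information.
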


\begin{proof}
Without loss of generality, assume $K=\{n-k+1, \cdots,n\}$. Consider
the $(l-1)$-dimensional face
\begin{equation*}
G=\text{conv}\{C\times e^{n-l+1}, ... ,C\times e^{n-k}, e^{n-k+1},
..., e^{n}\}
\end{equation*}
of the skewed cross-polytope $\text{SP}$. The $2^{n-l}$ outward
normal vectors of the supporting hyperplanes of the facets
containing $G$ are given by
\begin{equation*}
\{\sum_{p=1}^{n-l} j_{p}e_p/{C}+\sum_{p=n-l+1}^{n-k} e_p/C+
\sum_{p=n-k+1}^{n} e_p, j_{p}\in\{-1,1\}\}.
\end{equation*}

Then the outward normal cone $c(G, \text{SP})$ at the face $G$ is
the positive hull of these normal vectors. Thus
\begin{eqnarray}
&&\nonumber\int_{c(G,\text{SP})}{e^{-\|x\|^2}}\,dx=\gamma(G,SP)
V_{n-l}(S^{n-l}) \\
&&\times \int_{0}^{\infty}{e^{-r^2}}
r^{n-l}\,dx %&&=\gamma(G,SP)\cdot(n-l+1)\cdot
%\kappa_{n-l+1}\cdot\frac{\Gamma((n-l+1)/2)}{2}
=\gamma(G,\text{SP}).\pi^{(n-l+1)/2},
 \label{eq:Grassaxch}
\end{eqnarray}
where $V_{n-l}(S^{n-l})$ is the spherical volume of the
$(n-l)$-dimensional sphere $S^{n-l}$. %and $\kappa_{n-l+1}$
%denotes the $(n-l+1)$-dimensional volume of the
%$(n-l+1)$-dimensional volume ball.
Now define $U$ to be the set
\begin{equation*}
\{x \in \mathbb{R}^{n-l+1} \mid x_{n-l+1} \geq 0, |x_p| \leq
\frac{x_{n-l+1}}{C}, 1\leq p \leq (n-l)\}
\end{equation*}
and define $f(x_1,~\cdots,~ x_{n-l+1}):U \rightarrow c(G,\text{SP})$
to be the linear and bijective map
\begin{eqnarray*}
f(x_1,~\cdots,~ x_{n-l+1})&=&\sum_{p=1}^{n-l} x_p
e_p+\sum_{p=n-l+1}^{n-k} \frac{x_{n-l+1}}{C}e_p \\
&~&+ \sum_{p=n-k+1}^{n} x_{n-l+1} \times e_p.
\end{eqnarray*}
Then
\begin{eqnarray*}
&&~~\int_{c(G,\text{SP})}{e^{-\|x'\|^2}}\,dx'\\
&&= \sqrt{k+\frac{l-k}{C^2}} \int_{U}{e^{-\|f(x)\|^2}}\,dx\\
&&=\sqrt{k+\frac{l-k}{C^2}} \int_{0}^{\infty}
\int_{-\frac{x_{n-l+1}}{C}}^{\frac{x_{n-l+1}}{C}}
\cdots\int_{-\frac{x_{n-l+1}}{C}}^{\frac{x_{n-l+1}}{C}}\\
&& e^{-x_1^2-\cdots
-x_{n-l}^2-(k+\frac{l-k}{C^2})x_{n-l+1}^{2} } \,dx_{1} \cdots \,dx_{n-l+1}\\
&&=\sqrt{k+\frac{l-k}{C^2}}\int_{0}^{\infty}
e^{-(k+\frac{l-k}{C^2})x^2}\\
&&~~~~~~~~~~~~~~~~~~~~\times \left(\int_{-\frac{x}{C}}^{\frac{x}{C}} e^{-y^2}\,dy \right )^{n-l} \,dx\\
&&=2^{n-l}\int_{0}^{\infty}
e^{-x^2} \left (\int_{0}^{ \frac{x}{C \sqrt{k+\frac{l-k}{C^2}} } } e^{-y^2}\,dy \right)^{n-l} \,dx, \\
\label{eq:axe}
\end{eqnarray*}
where $\sqrt{k+\frac{l-k}{C^2}}$ is due to the change of integral
variables. Combining it with (\ref{eq:Grassaxch}) leads to the
desired result.
\end{proof}

\subsection{Proof of Lemma \ref{lemma:existence}}
Consider any fixed $\delta>0$. First, we consider the internal angle
exponent $\psi_{int}$, where we define
$\gamma'=\frac{\rho\delta}{\frac{C^2-1}{C^2}\rho\delta+\frac{\nu}{C^2}}$.
Then for this fixed $\delta$,
\begin{equation*}
\frac{1-\gamma'}{\gamma'} \geq
\frac{\frac{C^2-1}{C^2}\rho\delta+\frac{\delta}{C^2}}{\rho\delta}-1
\end{equation*}
uniformly over $\nu \in [\delta, 1]$.

Now if we take $\rho$ small enough,
$\frac{\frac{C^2-1}{C^2}\rho\delta+\frac{\delta}{C^2}}{\rho\delta}$
can be arbitrarily large. By the asymptotic expression
(\ref{eq:intsmallasp}), this leads to large enough internal decay
exponent $\psi_{int}$. At the same time, the external angle exponent
$\psi_{ext}$ is lower-bounded by zero and the combinatorial exponent
is upper-bounded by some finite number.  Then if $\rho$ is small
enough, we will get the net exponent $\psi_{net}$ to be negative
uniformly over the range $\nu \in [\delta, 1]$.

\subsection{Proof of Lemma \ref{lemma:netasyp}}
We will show that for fixed $C>1$, with
$\rho{(\delta)}=\frac{1}{C^2}\log(1/\delta)^{-(1+\eta)}$ and some $\delta_0>0$, \
\begin{equation*}
\psi_{net}(\nu;\rho(\delta),\delta)<-\delta, ~~~~~\delta<\delta_{0},
~~~~\nu \in [\delta,1).
\end{equation*}
To this end, we need to get the asymptotic of $\psi_{int}(\nu)$,
$\psi_{ext}(\nu)$ and $\psi_{com}(\nu)$ as $\delta \rightarrow 0$
and $\rho{(\delta)}=\log(1/\delta)^{-(1+\eta)}$.

With
\begin{equation*}
H(\nu)+H(\rho\delta/\nu)\nu=H(\rho\delta)+H(\frac{\nu-\rho\delta}{1-\rho\delta})(1-\rho\delta),
\end{equation*}
from its definition, $\psi_{net}(\nu; \rho, \delta)$ is equal to
\begin{equation*}
H(\nu)-\psi_{ext}(\nu)-\xi_{\gamma'}(y_{\gamma'})(\nu-\rho\delta)
+\rho\delta\log(2)+H(\rho\delta/\nu)\nu.
\end{equation*}

From the derivation (or the expression) of the external angle
$\gamma(G,\text{SP})$ in this paper, $\gamma(G,\text{SP})$ is a
decreasing function in $C$. So we can upper-bound
$\gamma(G,\text{SP})$ uniformly in $\nu \in [\delta, 1]$, for any
$C\geq 1$, by
\begin{equation*}
\frac{2^{n-l}}{{\sqrt{\pi}}^{n-l+1}}
\int_{0}^{\infty}e^{-x^2}(\int_{0}^{\frac{x}{\sqrt{l}}} e^{-y^2}
\,dy)^{n-l}\,dx,
\end{equation*}
namely the expression for the external angle when $C=1$.

Now define $\Omega(\nu)=H(\nu)-\psi_{ext}^{C=1}(\nu)$,  where
$\psi_{ext}^{C=1}(\nu)$ is the external exponent when $C=1$. Then
from the asymptotic formula (\ref{eq:extasyp}), we have
\begin{equation*}
H(\nu)-\psi_{ext}(\nu) \leq \Omega(\nu)\sim
\frac{1}{2}\log(\log(\frac{1}{\nu}))\nu,
\end{equation*}
as $\nu \rightarrow 0$.

So $\psi_{net}(\nu; \rho, \delta)$ is no bigger than
\begin{equation*}
\Omega(\nu)-\xi_{\gamma'}(y_{\gamma'})(\nu-\rho\delta)
+\rho\delta\log(2)+H(\rho\delta/\nu)\nu. \label{eq:rdineq}
\end{equation*}

From \cite{Donoho06}, for a certain $\delta_{1}$, if
$\delta<\delta_{1}$,
\begin{equation*}
H(\rho\delta/\nu) \leq H(\rho)\delta+2\rho(\nu-\delta),
\end{equation*}
so we have
\begin{equation*}
\psi_{net}(\nu) \leq
K(\nu;\rho,\delta)+[\rho\delta\log(2)+H(\rho)\delta]+2\rho(\nu-\delta),
\end{equation*}
where $K(\nu;\rho,\delta) \doteq
\Omega(\nu)-\xi_{\gamma'}(y_{\gamma'})(\nu-\rho\delta)$.

As we will show later in Lemma \ref{lemma:concavity},
$K(\nu;\rho,\delta)$ is a concave function in $\nu \in [\delta, 1]$ if
$\delta<\delta_{2}$. Also, we will show that for $\delta<\delta_3$,
\begin{eqnarray}
&&K'(\delta;\rho,\delta) \leq -\eta/4 \log(\log(\frac{1}{\delta}))\label{eq:KK}\\
&&K(\delta;\rho,\delta) \leq -\delta \times \eta/4
\log(\log(\frac{1}{\delta})) \label{eq:K}
\end{eqnarray}
where $K'(\nu;\rho,\delta) \doteq \frac{\partial
K(\nu;\rho,\delta)}{\partial \nu}$. So there exists a $\delta_4>0$
so that for any $\delta\in (0, \delta_4)$,
\begin{equation*}
K'(\delta;\rho,\delta)<-2\rho.
\end{equation*}

Also there exists a $\delta_5>0$ so that for any
$0<\delta<\delta_5$,
\begin{equation*}
K(\delta;\rho,\delta)+\rho\delta\log(2)+H(\rho)\delta<-\delta.
\end{equation*}

Then by the concavity of $K(\nu;\rho,\delta)$, if
$\delta<\min{(\delta_1, \delta_2,\delta_3,\delta_4,\delta_5)}$,
\begin{equation*}
\psi_{net}(\nu) \leq  -\delta
\end{equation*}
uniformly over the interval $\nu \in [\delta, 1]$.

Now we need to prove (\ref{eq:KK}) and (\ref{eq:K}). As computed in
\cite{Donoho06},
\begin{equation*}
\Omega'(\nu)\sim\frac{1}{2}\log(\log(\frac{1}{\nu})),~~~~~\nu\rightarrow 0,
\end{equation*}

and

\begin{equation*}
\Omega(\nu)\sim \frac{1}{2}\log(\log(\frac{1}{\nu}))\nu, ~~~~~\nu\rightarrow 0.
\end{equation*}

By (\ref{eq:intsmallasp}), we know that as $\delta \rightarrow 0 $,
and with $\nu=\delta$,
\begin{equation*}
\xi_{\gamma'}(y_{\gamma'}) \sim \frac{1}{2} \log(\frac{1}{C^2 \rho})
\sim \frac{1}{2}\log(\log(\frac{1}{\delta})) (1+\eta).
\end{equation*}

Hence for $\delta<\delta_{6}$,
\begin{equation}
\xi_{\gamma'}(y_{\gamma'})(\nu-\rho\delta) \geq
(1+\frac{\eta}{2})\Omega(\delta).
%&&\xi_{\gamma'}(y_{\gamma'})(1-\rho) \geq (1+\eta/2) \Omega'(\delta)
\end{equation}

Following this, there exists a $\delta_{7}>0$ so that for
$\delta<\delta_7$,
\begin{equation*}
K(\delta;\rho, \delta)\leq \frac{\eta}{4} \log\log(\frac{1}{\delta})\delta.
\end{equation*}

Also, from the asymptotic of $\Omega'(\nu)$ and the asymptotic of
the derivative of $\xi_{\gamma'}(y_{\gamma'})(\nu-\rho\delta)$ with respect to $\nu$ in the next Lemma \ref{lemma:concavity}, we can further have
\begin{equation*}
K'(\delta;\rho,\delta) \leq -\eta/4 \log(\log(\frac{1}{\delta})).
\end{equation*}

\begin{lemma}
 If $\rho$ is small enough, $K(\nu;\rho,\delta)$ is concave as a function of $\nu$.

\begin{proof}
We define
\begin{equation*}
\Upsilon(\nu;\rho,\delta)=\xi_{\gamma'}(y_{\gamma'})(\nu-\rho\delta).
\end{equation*}

Since $K(\nu;\rho,\delta)=\Omega(\nu)-\Upsilon(\nu;\rho,\delta)$ and
$\Omega(\nu)$ is a concave function in $\nu$ \cite{Donoho06}, we only need to show
that $\Upsilon(\nu;\rho,\delta)$ is a convex function in $\nu$.

Recall that
$\gamma'=\frac{\rho\delta}{\frac{C^2-1}{C^2}\rho\delta+\frac{\nu}{C^2}}$
and we first look at $\frac{\partial \gamma'}{\partial \nu}$:
\begin{equation*}
\frac{\partial \gamma'}{\partial
\nu}=-\frac{\rho\delta}{\frac{C^2-1}{C^2}\rho\delta+\frac{\nu}{C^2}}
\times \frac{1}{(C^2-1)\rho\delta+\nu}.
\end{equation*}
So
\begin{eqnarray}
\frac{\partial\Upsilon(\nu;\rho,\delta)}{\partial
\nu}&=&\xi_{\gamma'}(y_{\gamma'})+\frac{\partial
\xi_{\gamma'}(y_{\gamma'})}{\partial \gamma'} \cdot\frac{\partial
\gamma'}{\partial \nu} \cdot (\nu-\rho\delta) \nonumber\\ &=&
\xi_{\gamma'}(y_{\gamma'})-\frac{\partial
\xi_{\gamma'}(y_{\gamma'})}{\partial \gamma'} \cdot \gamma' \cdot
\frac{1-\frac{\rho\delta}{\nu}}{(C^2-1)\frac{\rho\delta}{\nu}+1}\nonumber
\end{eqnarray}
If we define
$\Xi=\frac{1-\frac{\rho\delta}{\nu}}{(C^2-1)\frac{\rho\delta}{\nu}+1}$
and $\Pi=\frac{1}{(C^2-1)\rho\delta+\nu}$, we can have
%$\frac{\partial^2\Upsilon(\nu;\rho,\delta)}{\partial^2 \nu}$ as:
\begin{eqnarray*}
&&\frac{\partial^2\Upsilon(\nu;\rho,\delta)}{\partial \nu^2} \\
&=& \frac{\partial \xi_{\gamma'}(y_{\gamma'})}{\partial \gamma'}
\frac{\partial \gamma'}{\partial \nu}
-\frac{\partial^2\xi_{\gamma'}(y_{\gamma'})}{\partial \gamma'^2}
\cdot \frac{\partial \gamma'}{\partial \nu}\cdot \gamma' \cdot \Xi \\
&&-\frac{\partial \xi_{\gamma'}(y_{\gamma'})}{\partial \gamma'}
\cdot \frac{\partial \gamma'}{\partial \nu} \cdot \Xi
-\frac{\partial \xi_{\gamma'}(y_{\gamma'})}{\partial \gamma'} \cdot
\frac{\partial \Xi}{\partial \nu} \cdot \gamma'\\
&=&\frac{\partial \xi_{\gamma'}(y_{\gamma'})}{\partial \gamma'}
\cdot (-\gamma') \cdot (1-\Xi)\cdot\Pi\\
&&-\frac{\partial \xi_{\gamma'}(y_{\gamma'})}{\partial \gamma'}
\cdot \gamma' \cdot \frac{\partial \Xi}{\partial \nu}
+\frac{\partial^2\xi_{\gamma'}(y_{\gamma'})}{\partial \gamma'^2}
\cdot \gamma'^2 \cdot \Xi \cdot \Pi
\\&=&-\frac{\partial \xi_{\gamma'}(y_{\gamma'})}{\partial \gamma'}
\cdot \gamma'^2 \cdot \Pi -\frac{\partial
\xi_{\gamma'}(y_{\gamma'})}{\partial \gamma'} \cdot \gamma'^2 \cdot
\Pi\\
&&+\frac{\partial^2\xi_{\gamma'}(y_{\gamma'})}{\partial \gamma'^2}
\cdot \gamma'^2 \cdot \Xi \cdot \Pi
\end{eqnarray*}

It has been shown in \cite{Donoho06} that as $\gamma \rightarrow 0$,
\begin{eqnarray*}
\frac{\partial \xi_{\gamma}{(y_{\gamma})}}{\partial \gamma} \sim
-\frac{\gamma^{-1}}{2},\\
\frac{\partial^2 \xi_{\gamma}{(y_{\gamma})}}{\partial \gamma^2} \sim
\frac{\gamma^{-4}}{4}.\\
\end{eqnarray*}

So from the definition of $\gamma'$, there exists a small enough
$\rho_{0}$ such that for any $\rho<\rho_{0}$,
\begin{equation}
\frac{\partial^2\Upsilon(\nu;\rho,\delta)}{\partial \nu^2}>0, \nu
\in [\delta, 1],
\end{equation}
which then implies the concavity of $K(\nu;\rho, \delta)$.
\end{proof}
\label{lemma:concavity}
\end{lemma}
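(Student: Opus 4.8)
The plan is to exploit the decomposition $K(\nu;\rho,\delta)=\Omega(\nu)-\Upsilon(\nu;\rho,\delta)$, where $\Upsilon(\nu;\rho,\delta)=\xi_{\gamma'}(y_{\gamma'})(\nu-\rho\delta)$ and $\Omega(\nu)=H(\nu)-\psi_{ext}^{C=1}(\nu)$. Since $\Omega(\nu)$ is already known to be concave in $\nu$ from \cite{Donoho06}, it suffices to prove that $\Upsilon$ is \emph{convex} in $\nu$, i.e.\ that $\partial^2\Upsilon/\partial\nu^2\geq 0$ uniformly over $\nu\in[\delta,1]$ once $\rho$ is taken small enough; concavity of $K$ then follows from the sum rule.

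First I would make the dependence of $\gamma'$ on $\nu$ explicit. Recalling $\gamma'=\frac{\rho\delta}{\frac{C^2-1}{C^2}\rho\delta+\frac{\nu}{C^2}}$, a direct differentiation gives $\frac{\partial\gamma'}{\partial\nu}=-\gamma'\cdot\frac{1}{(C^2-1)\rho\delta+\nu}$. Applying the chain rule twice to $\Upsilon=\xi_{\gamma'}(y_{\gamma'})(\nu-\rho\delta)$, and introducing the bookkeeping abbreviations $\Xi=\frac{1-\rho\delta/\nu}{(C^2-1)\rho\delta/\nu+1}$ and $\Pi=\frac{1}{(C^2-1)\rho\delta+\nu}$, collapses $\partial^2\Upsilon/\partial\nu^2$ into a handful of terms involving only $\xi_{\gamma'}$, its first two $\gamma'$-derivatives, and the positive quantities $\gamma'^2$, $\Xi$, $\Pi$. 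Two of these terms coincide, leaving the compact expression
\[
\frac{\partial^2\Upsilon}{\partial\nu^2}
=-2\frac{\partial\xi_{\gamma'}(y_{\gamma'})}{\partial\gamma'}\,\gamma'^2\,\Pi
+\frac{\partial^2\xi_{\gamma'}(y_{\gamma'})}{\partial\gamma'^2}\,\gamma'^2\,\Xi\,\Pi.
\]

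The decisive step is asymptotic. As $\rho\to 0$ the denominator of $\gamma'$ stays bounded away from zero on $\nu\in[\delta,1]$, so $\gamma'\to 0$ uniformly in $\nu$; I would then import the two estimates established in \cite{Donoho06}, namely $\frac{\partial\xi_\gamma(y_\gamma)}{\partial\gamma}\sim-\tfrac12\gamma^{-1}$ and $\frac{\partial^2\xi_\gamma(y_\gamma)}{\partial\gamma^2}\sim\tfrac14\gamma^{-4}$ as $\gamma\to 0$. Substituting these, the first term behaves like $\gamma'\,\Pi$ while the second behaves like $\tfrac14\gamma'^{-2}\,\Xi\,\Pi$; since $\Xi$ and $\Pi$ are positive for small $\rho$ (with $\Xi\to 1$), the second, positive term blows up and dominates, forcing $\partial^2\Upsilon/\partial\nu^2>0$. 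This establishes the convexity of $\Upsilon$ and hence the concavity of $K$.

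I expect the main obstacle to be the correct sign- and order-tracking in the chain-rule expansion: one must verify that the several contributions to $\partial^2\Upsilon/\partial\nu^2$ genuinely telescope into the two displayed terms, so that no negative contribution of comparable magnitude survives, and that the asymptotic dominance of the $\partial^2\xi_{\gamma'}/\partial\gamma'^2$ term holds \emph{uniformly} in $\nu\in[\delta,1]$ rather than merely pointwise. Securing that the smallness threshold $\rho_0$ can be chosen independently of $\nu$ is precisely what turns the pointwise positivity into the uniform concavity statement required downstream.
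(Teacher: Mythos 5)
Your proposal is correct and follows essentially the same route as the paper: the same decomposition $K=\Omega-\Upsilon$ with concavity of $\Omega$ imported from \cite{Donoho06}, the same chain-rule computation collapsing $\partial^2\Upsilon/\partial\nu^2$ to $-2\,\frac{\partial\xi_{\gamma'}(y_{\gamma'})}{\partial\gamma'}\gamma'^2\Pi+\frac{\partial^2\xi_{\gamma'}(y_{\gamma'})}{\partial\gamma'^2}\gamma'^2\Xi\Pi$, and the same asymptotics of $\xi_\gamma$ to force positivity for small $\rho$ uniformly over $\nu\in[\delta,1]$. No substantive differences from the paper's argument.
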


\subsection{Proof of Lemma \ref{lemma:Csyp}}
\begin{proof}
Suppose instead that $\rho_{N}(\delta, C)> \frac{1}{C+1}$. Then for
every vector $\w$ from the null space of the measurement matrix $A$,
any $\rho_{N}(\delta,C)$ fraction of the $n$ components in $\w$ take no more
than $\frac{1}{C+1}$ fraction of $\|\w\|_1$. But this can not be
true if we consider the $\rho_N(\delta, C)$ fraction of $\w$ with
the largest magnitudes.

Now we only need to prove the lower bound for $\rho_{N}(\delta,C)$; in fact, we argue that
\begin{equation*}
\rho_{N}(\delta,C) \geq \frac{\rho_{N}(\delta,C=1)}{C^2}.
\end{equation*}
We know from Lemma \ref{lemma:existence} that $\rho_{N}(\delta,C)>0$
for any $C \geq 1$. Denote $\psi_{net}(C)$, $\psi_{com}(\nu;\rho,\delta,C)$, $\psi_{int}(\nu;\rho,\delta,C)$ and
$\psi_{ext}(\nu;\rho,\delta,C)$ as the respective exponents for a
certain $C$. Because $\rho_{N}(\delta, C=1)>0$, for any $\rho=\rho_{N}(\delta,C=1)-\epsilon$, where $\epsilon>0$ is an arbitrarily small number, the net exponent $\psi_{net}(C=1)$ is negative uniformly over $\nu \in [\delta, 1]$.

By examining the formula (\ref{eq:cncsex}) for the external angle
$\gamma(G, \text{SP})$, where $G$ is a $(l-1)$-dimensional face of
the skewed cross-polytope $\text{SP}$, we have $\gamma(G,\text{SP})$
is a decreasing function in both $k$ and $C$ for a fixed $l$. So
$\gamma(G,\text{SP})$ is upper-bounded by
\begin{equation}
\frac{2^{n-l}}{{\sqrt{\pi}}^{n-l+1}}
\int_{0}^{\infty}e^{-x^2}(\int_{0}^{\frac{x}{\sqrt{l}}} e^{-y^2}
\,dy)^{n-l}\,dx,
\end{equation}
namely the expression for the external angle when $C=1$. Then for
any $C>1$ and any $k$, $\psi_{ext}(\nu;\rho, \delta,C)$ is
lower-bounded by $\psi_{ext}(\nu;\rho, \delta,C=1)$.

Now let us check $\psi_{int}(\nu;\rho, \delta,C)$ by using the
formula (\ref{eq:inteptfor}). With
\begin{equation*}
\gamma'=\frac{\rho\delta}{\frac{C^2-1}{C^2}\rho\delta+\frac{\nu}{C^2}},
\end{equation*}
we have
\begin{equation}
\frac{1-\gamma'}{\gamma'}=-\frac{1}{C^2}+\frac{\nu}{C^2\rho\delta}.
\label{eq:fungamma}
\end{equation}

Then for any fixed $\delta>0$, if we take
$\rho=\frac{\rho_{N}(\delta, C=1)-\epsilon}{C^2}$, where $\epsilon$ is an arbitrarily small positive number, then for any $\nu\geq \delta$,
$\frac{1-\gamma'}{\gamma'}$ is an increasing function in $C$. So,
following easily from its definition,  $\xi_{\gamma'}(y_{\gamma'})$ is an increasing
function in $C$. This further implies that
$\psi_{int}(\nu;\rho,\delta)$ is an increasing function in $C$ if we
take $\rho=\frac{\rho_{N}(\delta, C=1)-\epsilon}{C^2}$, for any $\nu \geq \delta$.

Also, for any fixed $\nu$ and $\delta$, it is not hard to show that
$\psi_{com}(\nu;\rho,\delta,C)$ is a decreasing function in $C$ if
$\rho=\frac{\rho_{N}(\delta, C=1)}{C^2}$. This is because in (\ref{eq:Grassangformula}),
\begin{equation*}
\binom{n}{k}\binom{n-k}{l-k}=\binom{n}{l}\binom{l}{k}.
\end{equation*}

Thus for any $C>1$, if $\rho=\frac{\rho_{N}(\delta,C=1)-\epsilon}{C^2}$, the net exponent $\psi_{net}(C)$ is also negative uniformly over $\nu \in [\delta, 1]$. Since the parameter $\epsilon$ can be arbitrarily small, our claim and Lemma \ref{lemma:Csyp} then follow.
\end{proof}

\subsection{Proof of Lemma \ref{lemma:wthlimit}}
\begin{proof}
First, we notice that for any $C>1$,
\begin{eqnarray*}
&&\zeta_{W}(\delta) \geq \rho_{N}(\delta, C) \delta,\\
&&\zeta_{Sec}(\delta) \geq \rho_{N}(\delta, C) \delta,\\
&&\zeta_{S}(\delta) = \rho_{N}(\delta, C) \delta,\\
\end{eqnarray*}
so by Lemma \ref{lemma:existence},
\begin{eqnarray*}
&&\zeta_{W}(\delta) >0,\\
&&\zeta_{Sec}(\delta) >0,\\
&&\zeta_{S}(\delta) >0.
\end{eqnarray*}
Now we will prove
\begin{equation*}
\lim_{\delta \rightarrow 1} \zeta_{W}(\delta)=\delta.
\end{equation*}
As discussed in previous sections, we know that the decay exponent
for the probability that the condition (\ref{eq:Grasswthmeq1}) is
violated is equal to
\begin{equation*}
H(\frac{\nu-\rho\delta}{1-\rho\delta})(1-\rho\delta)-\xi_{\gamma'}(y_{\gamma'})(\nu-\rho\delta)-\psi_{ext}({\nu}).
\end{equation*}
But from the derivations of the exponents, we know that
\begin{equation*}
0=\lim_{\delta \rightarrow 1}{\sup_{\nu \in [\delta,1]}{\psi_{ext}({\nu})}},
\end{equation*}
\begin{equation*}
0=\lim_{\delta \rightarrow 1}{\sup_{\nu \in [\delta,1]}}{H(\frac{\nu-\rho\delta}{1-\rho\delta})},
\end{equation*}
\begin{equation*}
\xi_{\gamma'}(y_{\gamma'})(\nu-\rho\delta) \geq
\xi_{\gamma'(\delta)}(y_{\gamma'(\delta)})(1-\rho)\delta>0,~\nu \geq
\delta,
\end{equation*}
where
\begin{equation*}
\gamma'(\delta)=\frac{\rho\delta}{\frac{C^2-1}{C^2}\rho\delta+\frac{\delta}{C^2}}=\frac{\rho}{\frac{C^2-1}{C^2}\rho+\frac{1}{C^2}}.
\end{equation*}

Noticing that $\xi_{\gamma'(\delta)}(y_{\gamma'(\delta)})(1-\rho)>0$
is only determined by $\rho$, for any $0<\rho<1$, there exists a big
enough $\delta<1$, such that
\begin{equation*}
H(\frac{\nu-\rho\delta}{1-\rho\delta})(1-\rho\delta)-\xi_{\gamma'}(y_{\gamma'})(\nu-\rho\delta)-\psi_{ext}({\nu})<0.
\end{equation*}
uniformly over $[\delta, 1]$. Then it follows that
\begin{equation*}
\lim_{\delta \rightarrow 1} \zeta_{W}(\delta)=1,
\end{equation*}
for any $C>1$.

\end{proof}

\section*{Acknowledgment}
% optional entry into table of contents (if used)
%\addcontentsline{toc}{section}{Acknowledgment}
This work was supported in part by the National Science Foundation
under grant no. CCF-0729203, by the David and Lucille Packard
Foundation, and by Caltech's Lee Center for Advanced Networking.

\bibliographystyle{alpha}
%\bibliography{bib}        % expects file "myrefs.bib"
\newcommand{\etalchar}[1]{$^{#1}$}

\end{document}